\newtheorem{theorem}{Theorem}
\newtheorem{defn}{Definition}
\newtheorem{corrollary}[]{Corollary}
\newtheorem{lemma}{Lemma}
\newcommand*{\rom}[1]{\expandafter\@slowromancap\romannumeral #1@}
\newcommand{\erf}{\text{erf}}
\newcommand{\poly}{\text{poly}}
\newcommand{\polylog}{\text{polylog}}
\newcommand{\ceil}[1]{\left\lceil #1 \right\rceil}
\renewcommand\Re{\operatorname{Re}}
\begin{document}

\preprint{APS/123-QED}

%\title{An efficient quantum algorithm for computing absorption spectroscopy with $n$-th order response [Working Title]}% Force line breaks with \\
%\title{Quantum simulation of arbitrary order response properties of fermionic systems}
\title{An efficient quantum algorithm for generation of {\it ab initio} $n$-th order susceptibilities for non-linear spectroscopies}

\author{Tyler Kharazi}
\email{kharazitd@berkeley.edu}
\altaffiliation{Authors contributed equally}
\affiliation{Department of Chemistry, University of California, Berkeley}%

\author{Torin F. Stetina}
\email{torin.stetina@gmail.com}
\altaffiliation{Authors contributed equally; Currently at IonQ, Inc.}
\affiliation{Simons Institute for the Theory of Computing, Berkeley, California, 94704, USA}
\affiliation{Berkeley Quantum Information and Computation Center, University of California, Berkeley, California, 94720, USA}

\author{Liwen Ko}
\affiliation{Department of Chemistry, University of California, Berkeley}%

\author{Guang Hao Low}
\affiliation{Azure Quantum, Microsoft, Redmond, Washington 98052, USA}

\author{K. Birgitta Whaley}
\affiliation{Department of Chemistry, University of California, Berkeley}
\affiliation{Berkeley Quantum Information and Computation Center, University of California, Berkeley, California, 94720, USA}

\date{\today}% It is always \today, today,
             %  but any date may be explicitly specified

\begin{abstract}

We develop and analyze a fault-tolerant quantum algorithm for computing $n$-th order response properties necessary for analysis of non-linear spectroscopies of molecular and condensed phase systems. We use a semi-classical description in which the electronic degrees of freedom are treated quantum mechanically and the light is treated as a classical field. The algorithm we present can be viewed as an implementation of standard perturbation theory techniques, focused on {\it ab initio} calculation of $n$-th order response functions. We provide cost estimates in terms of the number of queries to the block encoding of the unperturbed Hamiltonian, as well as the block encodings of the perturbing dipole operators. Using the technique of eigenstate filtering, we provide an algorithm to extract excitation energies to resolution $\gamma$, and the corresponding linear response amplitude to accuracy $\epsilon$ using ${O}\left(N^{6}\eta^2{{\gamma^{-1}}\epsilon^{-1}}\log(1/\epsilon)\right)$ queries to the block encoding of the unperturbed Hamiltonian $H_0$, in double factorized representation. Thus, our approach saturates the Heisenberg $O(\gamma^{-1})$ limit for energy estimation and allows for the approximation of relevant transition dipole moments. These quantities, combined with sum-over-states formulation of polarizabilities, can be used to compute the $n$-th order susceptibilities and response functions for non-linear spectroscopies under limited assumptions using $\widetilde{O}\left({N^{5n+1}\eta^{n+1}}/{\gamma^n\epsilon}\right)$ queries to the block encoding of $H_0$. 
\end{abstract}

%\keywords{Suggested keywords}%Use showkeys class option if keyword
                              %display desired

%\tableofcontents

\maketitle
\section{\label{sec:level1}Introduction}
A large focus of quantum algorithms has been on the determination of ground state properties of a given Hamiltonian, with a much smaller effort on extracting excited state properties \cite{caiQuantumComputationMolecular2020,huangVariationalQuantumComputation2022,kosugiLinearresponseFunctionsMolecules2020, roggeroLinearResponseQuantum2019, liPerturbationTheoryMethods2023, hait2018accurate}. While accurate estimation of ground state properties of quantum many-body systems is already difficult, excited state properties are often more challenging to calculate, because interior eigenvectors and eigenvalues must be resolved without a direct variational analog, since they are not the extremal eigenvalues of the Hamiltonian describing the system. Classical approaches to solve this problem are often expressed using the generalized variational principle, that target specific excited states~
\cite{FreqDepPol,hait2020excited,shea2020generalized}. While there exist efficient classical algorithms for approximating the vibronic spectra of molecules \cite{ohQuantuminspiredClassicalAlgorithms2024}, classical algorithms for electronic spectroscopy in general suffer from exponential scaling. 
%\bw{ but note there exist variational methods targeting specific states, e.g., from Head-Gordon and from Neuscamman. Not sure whether we need to add text about this, maybe just add some references here?} 

On the other hand, understanding non-ground state properties such as transition dipole moments and energy gaps allows better predictions of material properties, and is essential for predicting and analyzing spectroscopic features deriving from light-matter interactions. In this work, we present a quantum algorithm that can be used to accurately compute excited state properties deriving from the interaction of a many-electron system with photons in the semi-classical description, given access to efficient ground state preparation. In the weak field regime and in the electric dipole approximation, we obtain {\it ab initio} estimates of the transition dipole moments corresponding to electronic state transition probabilities, as well as their associated excitation energies.  This allows efficient construction of $n$-th order susceptibilities and hence efficient {\it ab initio} calculation of non-linear spectroscopic responses.  We focus here specifically on the case of molecular spectroscopy, but our results can be applied more generally to any many-body fermionic system and its non-linear response to interaction with a classical light field.

The main contribution of this work is the development and analysis of an algorithm for computing {\it ab initio} transition dipole moments and excitation energies. This algorithm allows the determination of frequency-dependent properties of the response function by using a simple excite-and-filter approach where we first apply the dipole operator and then filter the response to eigenenergies within a particular frequency bin. By using quantum signal processing \cite{lowOptimalHamiltonianSimulation2017} in conjunction with a modified binary search procedure, our algorithm achieves Heisenberg limited scaling~\cite{atia2017fast} for determination of excitation energies. We therefore expect it to be near optimal for the computation of frequency-dependent response properties in the semi-classical regime. Our proposed algorithm has the additional benefit that the core subroutine can be iteratively repeated to compute response functions of any order. As far as we are aware, this is the first work to provide concrete resource estimations for the general problem of estimating linear and nonlinear response functions of arbitrary order.

We focus on estimating transition strengths at specific excitation energies by implementing the transition dipole operator directly on a prepared ground state and filtering the response on a particular energy window. By repeatedly measuring a constant number of additional ancilla qubits, we can obtain systematically improvable estimates to transition dipole moments within this window. After obtaining the relevant transition dipole moments and energies, one can then post-process this data on a classical computer to extract interesting quantities such as oscillator strengths \cite{hilbornEinsteinCoefficientsCross2002} and compute the desired response functions. 

The structure of this paper is as follows. In section \ref{sec:level2}, we introduce the molecular electronic structure Hamiltonian in second quantization, followed by an overview of relevant aspects of electronic spectroscopy within the semi-classical treatment of electron-light interactions and under the electronic dipole approximation. We then briefly introduce the primary quantum subroutines that we use in the main algorithm of this work. In section \ref{sec:level3} we present the algorithm in detail and show the expected asymptotic cost of the computational resources required. We then show how the algorithm can be iteratively applied to find any order of response. In \ref{sec:level4} we compare our results with other proposed quantum algorithms for the computation of polarizability of linear response functions, before concluding and giving an outlook for future work. 

\section{\label{sec:level2}Background }
In this section we first introduce the molecular Hamiltonian (II.A) and electronic spectroscopy in the electric dipole setting (II.B) and then summarize the main algorithmic primitives used in this work. These are the linear combination of unitaries (II.C) and quantum signal processing (II.D).
Unless otherwise noted, we will use $||\cdot||$ to refer to the vector 2-norm when the argument is a vector and the spectral norm, i.e. the largest eigenvalue, when the argument is a matrix. We use the notation $| \cdot |$ to refer to the absolute value if the argument is a scalar and is the set cardinality if the argument is a set. We use the over bar notation $\widetilde{\cdot}$ to refer to approximate quantities. Furthermore, we will assume fault-tolerance throughout this work, and any errors result from approximation techniques or from statistical sampling noise. We analyze the effect of both of these kinds of errors in the text.

\subsection{Molecular Hamiltonian}
We begin by assuming that the system of interest is well approximated by the electronic Hamiltonian under Born-Oppenheimer approximation, with clamped nuclei represented as classical point charges. Starting from a Ritz-Galerkin discretization scheme, where basis functions are spin-orbitals in real space, we obtain the molecular electronic Hamiltonian in second quantization as
\begin{equation}
    H_0 = \sum_{p,q}^{N} T_{pq}\, a_p^\dagger a_q + \sum_{p,q,r,s}^N V_{pqrs}\, a_p^\dagger a_q^\dagger a_r a_s.
    \label{eq:elec_ham2q}
\end{equation}
where $a^{\dagger}$ and $a$ are the fermionic creation and annihilation operators respectively, and $N$ is the number of spin-orbital basis functions. $p,q,r,s$ denote spin-orbital indices unless otherwise noted. {We will work in the atomic units, where the reduced Planck constant $\hbar$, electron mass $m_e$, elementary charge $e$, and permittivity $4\pi\epsilon_0$, are set to $1$.}

The first term in the electronic Hamiltonian encapsulates the one-body interactions, including the kinetic energy of the electrons and the electron-nuclear interaction. The matrix elements are computed by finding the representation of the kinetic energy and electronic-nuclear repulsion terms in the chosen spin-orbital basis, which for the $p$th basis function is denoted as $\phi_p$. 
\begin{equation}
 T_{pq} = \int d^3\textbf{r} \,\, \phi_p^*(\textbf{r})\left(-\frac{1}{2}\nabla^2_\textbf{r}+\sum_\alpha \frac{Z_\alpha}{|\textbf{r}- \textbf{R}_\alpha|}\right)\phi_q(\textbf{r})
\label{eq:kinetic2q}
\end{equation}
where $\mathbf{r} \in \mathbb{R}^{3}$ is the position coordinate vector for an electron, $\mathbf{R}_{\alpha} \in \mathbb{R}^{3}$ is the $\alpha$th nuclear position coordinate vector, $Z_\alpha$ is the integer electric charge of the $\alpha$th nuclei, $\phi_p(\textbf{r})$ is the spin-orbital basis function, and $\nabla^2_{\mathbf{r}}$ is the $3$-dimensional Laplacian operator.

The second term, which is only a function of the electronic positions, is the two-body interaction term. The matrix elements are
\begin{equation}
V_{pqrs}=\iint  d^3\textbf{r} \, d^3\textbf{r}' \, \frac{\phi_p^* (\textbf{r}) \phi_q^* (\textbf{r}')\phi_r(\textbf{r}')\phi_s(\textbf{r})}{|\textbf{r}-\textbf{r}'|},
\label{eq:pot2q}
\end{equation}
which is the representation of the Coulombic repulsion in the MO basis.
Due to the $O(N^4)$ terms in the two-body term as opposed to the $O(N^2)$ terms in the one-body term, it is the dominant contribution to the complexity of simulating the electronic Hamiltonian without further approximation. However, recent techniques based on double factorization {or other tensor factorizations} of the 2-electron operator~\cite{pengLowrankFactorizationElectron2017}, can reduce this to between $O(N^2)$  terms where $N$ scales with number of atoms and $O(N^3)$ terms where $N$ scales with basis set size for a fixed number of atoms~\cite{reiherElucidatingReactionMechanisms2017,babbushLowDepthQuantum2018} or even fewer in practice \cite{leeEvenMoreEfficient2021}. In the following, we will assume $O(N^2)$ scaling for number of terms.

\subsection{Electronic Spectroscopy}
Spectroscopy allows physical systems to be probed via their interaction with light. A very common form of spectroscopy is electronic absorption spectroscopy, for which the UV-Vis frequency range allows probing of molecular electronic transitions (see e.g. Ref. \cite{normanPrinciplesPracticesMolecular2018} Section 5 for a review). A typical molecular electronic absorption spectroscopy experiment consists of a sample of molecules in solution, that is irradiated by broadband light in the visible to ultraviolet range. The absorption spectrum is is calculated from the difference in intensity with and without the sample present.

Spectroscopy experiments yield vital information about the excited state energies and absorption probabilities of the system of interest. Quantities such as the oscillator strengths, static and dynamic polarizabilities, and absorption cross-sections are used to understand the electronic structure of molecular systems and to design materials that respond to light at specific frequencies \cite{mukamel1995principles, gobreEfficientModellingLinear2016}.  

Typical molecular electronic absorption spectroscopy is performed with laser fields that can be represented by classical fields, resulting in a semi-classical description in which the electronic degrees of freedom are treated quantum mechanically and interact with a classical {electromagnetic field.}{ The interaction between charged particles and the electromagnetic field is described by the minimal coupling Hamiltonian
\begin{equation}
    H(t) = H_0 + \sum_\alpha \frac{\big(\mathbf{p}_\alpha-q_\alpha \mathbf{A}(\mathbf{r}_\alpha, t)\big)^2}{2m_\alpha},
\label{Eq:minimal_coupling_H}
\end{equation}
where $\alpha$ indexes over all charged particles. $\mathbf{p}_\alpha$, $q_\alpha$, $\mathbf{r}_\alpha$, and $m_\alpha$ are the  momentum, charge, position, and mass of the $\alpha$-th particle, respectively. $\mathbf{A}$ is the classical vector potential of the electromagnetic field. Note that the Hamiltonian is now time-dependent due to the fact that the classical field is time-dependent. Since the length scales of molecules are typically much smaller than the wavelength of visible light that gives rise to electronic transitions, one can invoke the dipole approximation to rewrite the Hamiltonian of Eq. (\ref{Eq:minimal_coupling_H}) into the dipole -- electric field form as
}
{
\begin{equation}
	H(t) = H_0 - \mathbf{d}\cdot\mathbf{E}(t)
\end{equation}
\cite{Loudon_book,Gerry_book,mukamel1995principles}. Here, $\mathbf{d}$ is the dipole operator, which consists of a nuclear part $\mathbf{d}_n$ and an electronic part $\mathbf{d}_e$. They are defined as
\begin{equation}
	\mathbf{d} = \mathbf{d}_n + \mathbf{d}_e = \sum_\alpha Z_\alpha \mathbf{R}_\alpha - \sum_\beta \mathbf{r}_\beta,
\end{equation}
where $\alpha$ and $\beta$ index over the nuclei and electrons, respectively.
Under the Born-Oppenheimer approximation, the dipole matrix element $\langle n|\mathbf{d}|m\rangle$ for the transition between electronic states $|n\rangle$ and $|m\rangle$ has no contribution from the nuclear part $\mathbf{d}_n$ because the electronic wavefunctions are orthogonal. The electronic contribution to the matrix element is usually written as a product of a nuclear wavefunction overlap and an electronic dipole matrix element $\langle m|\mathbf{d}_e|n\rangle$, where $|m\rangle$ and $|n\rangle$ are the electronic wavefunction. The nuclear wavefunction overlaps give rise to the vibronic structure in electronic spectra \cite{atkins2011molecular}. In this work, we shall ignore the quantum nature of nuclei and therefore neglect the vibronic effects in electronic spectroscopy. Since only the electronic part of the dipole moment, $\mathbf{d}_e$ contribute to the dipole matrix elements in electronic transitions, from now on, we will denote $\mathbf{d}_e$ as $\mathbf{d}$ for notational simplicity.
}

We can now obtain the second-quantized representation of {$\mathbf{d}$} by computing matrix elements in the fixed MO basis for the unperturbed Hamiltonian. Thus
{
\begin{equation}
    d_{i,pq} = -\int d\mathbf{r}\,\phi_p^*(\textbf{r}) r_i \, \phi_q(\textbf{r}) ,
    \label{eq: dipole elems}
\end{equation}
}
%\begin{equation}
%    \mathcal{D}^{(k)}_{pq} = -\epsilon_k\int d\mathbf{r}\,\phi_p^\dagger(\textbf{r}) x^{(k)} \phi_q(\textbf{r}) d\textbf{r},
%    \label{eq: dipole elems}
%\end{equation}
gives the representation of the matrix element of the {$i$th} component $d_i$ of the vector operator $\mathbf{d}$ in the spin-orbital basis used in defining the one and two electron integrals for $H_0$, Eqs. \eqref{eq:kinetic2q} - \eqref{eq:pot2q}. Here {$r_i$} denotes the {ith component of the position $\mathbf{r}$ in 3-d space}. Once the basis has been selected and the relevant integrals performed, we then obtain the following second quantized representation of {$\mathbf{d}$}, 
{
\begin{equation}
    \mathbf{d} = \sum_{p,q = 1}^N \mathbf{d}_{p,q} \, a_p^\dagger a_q,
    \label{eq:D2ndquant}
\end{equation}
where $\mathbf{d}_{pq}$ is a 3-vector with components $d_{1,pq}$, $d_{2,pq}$, and $d_{3,pq}$.
}
%\begin{equation}
%    \boldsymbol{D} = \sum_{k=1}^3 \sum_{p,q = 1}^N \mathcal{D}_{p,q}^{(k)} \, a_p^\dagger a_q.
%    \label{eq:D2ndquant}
%\end{equation}

Once the induced dipole moment {$\mathbf{d}$} is calculated, many quantities of interest, such as {the linear or nonlinear electric susceptibilities}, can be derived from the matrix elements formed by the eigenvectors of the unperturbed Hamiltonian {$\mathbf{d}_{mn} = \bra{m} \mathbf{d}\ket{n}$}. Typically, absorption spectroscopy starts in the electronic ground state (assuming room temperature, $\sim300$K), so we set $m = 0$, and $n$ indexes the set of all excited electronic eigenstates. {Electric susceptibility measures the response of the molecular dipole moment to external electric fields. It is defined generally in the response formalism as
\begin{align}
\begin{split}
    \langle d_i(t)\rangle =& \langle d_i(0)\rangle \\
    &+ \sum_{j=1}^3\int^t_0 dt_1\,\chi_{ij}^{(1)}(t-t_1)E_j(t_1)\\
    &+\sum_{j,k=1}^3\int^t_0 dt_2\int^{t_2}_0 dt_1\,\chi_{ijk}^{(2)}(t-t_2, t_2-t_1)E_j(t_2)E_k(t_1) \\
    &+\cdots,
\end{split}
\end{align}
where $\chi^{(n)}$ is the $n$th order susceptibility \cite{mukamel1995principles}, and the sum of the terms involving $\chi^{(n)}$ is the $n$th order dipole moment, denoted as $d^{(n)}_i$. For weak-light matter interaction, as in the case of most electronic spectroscopy experiments, only the low-order effects are important. For example, the linear susceptibility $\chi^{(1)}$ provides information about the absorption from the ground state. Due to symmetry considerations, the second order susceptibility $\chi^{(2)}$ and all other even order susceptibilities are zero if molecules are isotropically distributed, as in the case of molecules in solution. The third order susceptibility $\chi^{(3)}$ is widely used by nonlinear spectroscopists in experiments such as pump-probe or two-dimensional electronic spectroscopy, and it can be used to describe the dynamics in excited states. Higher order effects such as $\chi^{(5)}$ have also been studied recently \cite{dostal2016coherent}. 
\par
The expressions for $\chi^{(n)}$ can be derived from the perturbative expansion of the time evolution of the density matrix describing the electronic state of the molecule, and they are given by
\begin{align}
\begin{split}
    \chi^{(n)}_{ii_n\cdots i_1}(s_n, \cdots, s_1) = i^n &\text{Tr}\Big(d_i\mathcal{G}(s_n)d^\times_{i_n}\cdots \mathcal{G}(s_1)d^\times_{i_1}\rho_0\Big).
\label{Eq:susceptibility_time_def_general}
\end{split}
\end{align}
$\rho_0=|0\rangle\langle 0|$ is the ground state density matrix. The superscript $\times$ in $d^\times_i$ denotes the commutator superoperator, whose action on a general operator $X$ is defined as $d^\times_i X = [d_i, X]$. The Green's function $\mathcal{G}(s)$ is the time-evolution superoperator, defined by the action 
\begin{equation}
    \mathcal{G}(s)X = e^{-\gamma s}e^{-iH_0 s}Xe^{iH_0 s}\theta(s),
\label{Eq:Greens_func_time}
\end{equation}
where the step function
\begin{equation}
    \theta(s)=
    \begin{cases}
    1\quad ,s>0 \\
    0\quad ,s<0
    \end{cases}
\end{equation}
ensures that $\mathcal{G}(s)$ acts non-trivially only when $s>0$. The exponential factor $e^{-\gamma s}$ in the definition of $\mathcal{G}(s)$ is a phenomenological decay factor that accounts for the decay of electronic coherence and excited state populations, typically due to the interaction with nuclear degrees of freedom or the quantized photon degrees of freedom. For simplicity, we assume a constant decay rate $\gamma$ for all elements $|m\rangle\langle n|$ of the density matrix. However, this assumption can be relaxed straightforwardly.
} 
\par
{Sometimes the susceptibilities are expressed in the frequency domain. We denote the frequency domain susceptibilities as $\alpha^{(n)}$, and it is defined by
\begin{align}
\begin{split}
    d^{(n)}_i(t) = \frac{1}{(2\pi)^n} &\sum_{i_n,\cdots,i_1\in\{1,2,3\}}\int d\omega_n\cdots d\omega_1 \\
    &\alpha^{(n)}_{ii_n\cdots i_1}(\omega_n,\cdots,\omega_1) \\
    &E_{i_n}(\omega_n)\cdots E_{i_1}(\omega_1)e^{-i\omega_s t},
\end{split}
\end{align}
where $\omega_s = \omega_1+\cdots+\omega_n$.
We have used the following definition for the Fourier transform 
\begin{subequations}
\begin{equation}
    f(\omega) = \int dt\, f(t)e^{i\omega t}
\end{equation}
\begin{equation}
    f(t) = \frac{1}{2\pi}\int d\omega\, f(\omega)e^{-i\omega t}.
\end{equation}
\end{subequations}
The frequency-domain susceptibilities $\alpha^{(n)}$ are related to the time-domain susceptibilities $\chi^{(n)}$ by
\begin{align}
\begin{split}
    &\alpha^{(n)}_{ii_n\cdots i_1}(\omega_n,\cdots,\omega_1) = \frac{1}{n!}\\
    &\sum_{P\in S_n} \chi^{(n)}_{ii_{P(n)}\cdots i_{P(1)}}(\omega_{P(1)}+\omega_{P(2)}+\cdots+\omega_{P(n)}, \cdots, \\
    &\qquad\qquad\qquad\qquad \omega_{P(1)}+\omega_{P(2)}, \omega_{P(1)}),
\label{Eq:chi_to_alpha}
\end{split}
\end{align}
where $S_n$ is the symmetric group, and $P$ indexes over all $n!$ possible permutations of the $n$ interactions. This sum takes into account all possible time-orderings of the $n$ interactions.
To obtain $\chi^{(n)}(\omega_1+\cdots+\omega_n, \cdots, \omega_1)$, we first note that by applying Eq. (\ref{Eq:Greens_func_time}), the action of the time-domain Green's function $\mathcal{G}(s)$ on the matrix element $|m\rangle\langle n|$ in the energy eigenstate basis is
\begin{equation}
    \mathcal{G}(s)(|m\rangle\langle n |) = e^{(-i\omega_{mn}-\gamma)s}\theta(s)|m\rangle\langle n|
\label{Eq:Greens_func_time_applied_on_mn}
\end{equation}
where $\omega_{mn}=\omega_m-\omega_n$. 
%Taking the Fourier transform of Eq. (\ref{Eq:Greens_func_time_applied_on_mn}) by defining $\mathcal{G}(\omega) = i\int ds \,\mathcal{G}(s)e^{i\omega s}$, we have
%\begin{equation}
%    \mathcal{G}(\omega)(|m\rangle\langle n|) = \frac{1}{\omega_{mn}-\omega-i\gamma}|m\rangle\langle n|
%\end{equation}

\par
As an example, using Eqs. (\ref{Eq:susceptibility_time_def_general}) and (\ref{Eq:Greens_func_time_applied_on_mn}), the linear susceptibility in the time-domain is expressed in the sum-over-state form as
\begin{equation}
    \chi^{(1)}_{ij}(s) = \theta(s)\sum_n i d_{i,0n} d_{j,n0} e^{(-i\omega_{n0}-\gamma)s} + \text{c.c.},
\end{equation}
where $d_{i,mn}=\langle m|d_i|n\rangle$. Using Eq. (\ref{Eq:chi_to_alpha}), the linear susceptibility in the frequency domain is 
\begin{equation}
    \alpha^{(1)}_{ij}(\omega) = \sum_m \frac{d_{i,0n}d_{j,n0}}{\omega_{n0}-\omega-i\gamma} + (\omega\rightarrow-\omega^*).
\label{eq:offdiagpolar}
\end{equation}
The notation $(\omega\rightarrow-\omega)^*$ means switching all $\omega$ to $-\omega$ in the previous term and then taking the complex conjugate. 
}
The first-order susceptibility is also known as the polarizability. It is a measure of the amount of change in the electronic configuration when subjected to an external electric field. In particular, the imaginary part of $\alpha^{(1)}$ is proportional to the absorption spectrum.
%Written in sum-over-states form \cite{mukamel1995principles, normanPrinciplesPracticesMolecular2018}, the dynamic polarizability tensor is expressed as the first-order susceptibility  
%\begin{equation}
%\begin{aligned}
%    \chi^{(1)}_{k,k'}(\omega) &= \sum_{n\neq 0}\frac{\langle 0|\mathcal{D}^{(k)}|n\rangle\langle n|\mathcal{D}^{(k')}|0\rangle}{\omega_{n0} +\omega+i\gamma}\\&\hspace{.6cm}+ \frac{\langle 0|\mathcal{D}^{(k)}|n\rangle\langle n|\mathcal{D}^{(k')}|0\rangle}{\omega_{n0}-\omega -i\gamma}.
%    \label{eq:offdiagpolar}
%\end{aligned}
%\end{equation}
%{
%\begin{equation}
%\begin{aligned}
%    \chi^{(1)}_{ij}(\omega) &= \sum_{n\neq 0}\frac{\langle 0|\mathcal{D}^{(i)}|n\rangle\langle n|\mathcal{D}^{(k')}|0\rangle}{\omega_{n0} +\omega+i\gamma}\\&\hspace{.6cm}+ \frac{\langle 0|\mathcal{D}^{(k)}|n\rangle\langle n|\mathcal{D}^{(k')}|0\rangle}{\omega_{n0}-\omega -i\gamma}.
%    \label{eq:offdiagpolar_2}
%\end{aligned}
%\end{equation}
%}
$\gamma$ is usually defined as the half-width-half-maximum of the peaks, also referred to as the spectral resolution (see \cite{normanPrinciplesPracticesMolecular2018} sec. 7.5). Finite values $\gamma \neq 0$ make Eq.~\eqref{eq:offdiagpolar} well defined when the laser frequency is resonant with a molecular transition, i.e., $\omega = \omega_{n0}$.

Higher-order response functions such as the first hyperpolarizability and higher susceptibilities can be expressed in terms of higher-order moments of the induced dipole operator $\boldsymbol{d}$. The calculation of these quantities also requires computing matrix elements for excited state-to-excited state transitions, not just ground-to-excited-state transitions. These quantities provide valuable information characterizing light-matter interactions.
For example, third-order response functions are used in the modeling of four-wave mixing in non-linear phenomena, which have numerous applications in the modeling of optical physics, see e.g. \cite{chenSimulationTimeFrequencyResolved2021, wangGenerationCorrelatedPhotons2001,slusherSqueezedlightGenerationFourwave1987,sharpingFourwaveMixingMicrostructure2001}.

For {arbitrary} $n$th order spectroscopies, we need in general to compute $O(2^n)$ correlation functions in the time domain (see Ref. \cite{tokmakoff_notes} Eq. 30) {because each of the $n$ interactions involves a commutator, which has two terms (see Eq. (\ref{Eq:susceptibility_time_def_general}))}. Each of these requires an additional $O(n!)$ permutations of the orderings in which we apply the operators (see {Eq. (\ref{Eq:chi_to_alpha})} and Ref \cite{tokmakoff_notes} Eq. 59). For example, the computation of third order response functions in the frequency domain requires the evaluation of 48 correlation functions. {In general, the asymptotic scaling for $n$th order spectroscopies can almost certainly be assumed to be $O(1)$ while also however being significantly larger than a single term. Thus, while
higher-order spectroscopic are of interest~(e.g., fifth-order spectroscopy can reveal details of exciton-exciton coupling~\cite{dostal2016coherent}), many of these higher-order quantities are nevertheless inaccessible with any known classical algorithm with provable performance bounds.}

However, this number of terms may be greatly reduced in practice due to problem symmetries, and often practioners are interested in specific {contributions, not in the sum of all 48 terms because experiments can directly probe specific terms by making use of pulse-ordering and phase matching~\cite{mukamel1995principles}. 
For example, 
%considering the sum of all 48 terms in Eq. 17 would usually be too complicated to be useful. By 
by choosing a fixed time ordering for the three pulses, there is no need to permute the three frequencies, so the number of terms reduces to 48/6 = 8. 
%Then one can set up the experiment such that the signals from different terms come out at different directions. This is called 
Matching the optical phase of input and output pulses and ensuring that the output signals from different terms are projected into different spatial directions allow the number of terms to be reduced to just two terms which are complex conjugates.} 
%Maybe we can add a paragraph after eq. 17 to show an example of a simplified specific term?

As {another} example, we consider the commonly encountered third-order response function. 
{Using Eq. (\ref{Eq:susceptibility_time_def_general}), the third order susceptibility in the time domain can be written as
\begin{align}
\begin{split}
    \chi_{ii_3 i_2 i_1}^{(3)}(s_3,s_2,s_1) = \theta(s_3)\theta(s_2)\theta(s_1)\\\sum_{\nu=1}^4 i^3 R_{\nu,ii_3 i_2 i_1}(s_3,s_2,s_1) + \text{c.c.},
\label{Eq:chi3_4_Rs}
\end{split}
\end{align}
where
\begin{subequations}
\begin{equation}
    R_{1,ii_3 i_2 i_1}(s_3,s_2,s_1) = \text{Tr}(d_{i}\mathcal{G}(s_3)d_{i_3}^r\mathcal{G}(s_2)d_{i_2}^r\mathcal{G}(s_1)d_{i_1}^l\rho_0)
\end{equation}
\begin{equation}
    R_{2,ii_3 i_2 i_1}(s_3,s_2,s_1) = \text{Tr}(d_{i}\mathcal{G}(s_3)d_{i_3}^r\mathcal{G}(s_2)d_{i_2}^l\mathcal{G}(s_1)d_{i_1}^r\rho_0)
\end{equation}
\begin{equation}
    R_{3,ii_3 i_2 i_1}(s_3,s_2,s_1) = \text{Tr}(d_{i}\mathcal{G}(s_3)d_{i_3}^l\mathcal{G}(s_2)d_{i_2}^r\mathcal{G}(s_1)d_{i_1}^r\rho_0)
\end{equation}
\begin{equation}
    R_{4,ii_3 i_2 i_1}(s_3,s_2,s_1) = \text{Tr}(d_{i}\mathcal{G}(s_3)d_{i_3}^l\mathcal{G}(s_2)d_{i_2}^l\mathcal{G}(s_1)d_{i_1}^l\rho_0).
\end{equation}
\end{subequations}
The superscripts $l$ and $r$ denote the left- and right-multiplying superoperator, i.e., $A^l X = AX$ and $A^r X = XA$. Each $R_\nu$ can be expressed explicitly in the sum-over-state form. For example, 
\begin{align}
\begin{split}
    R_{1,ii_3 i_2 i_1}&(s_3,s_2,s_1) = \sum_{n,m,l} d_{i_2,0l}d_{i_3,lm}d_{i,mn}d_{i_1,n0}\\
    &e^{(-i\omega_{n0}-\gamma)s_1}e^{(-i\omega_{nl}-\gamma)s_2}e^{(-i\omega_{nm}-\gamma)s_3}.
\end{split}
\end{align}
The four $R_\nu$ can be represented diagrammatically in terms of the double sided Feynman diagrams \cite{mukamel1995principles, tokmakoff_notes}, and they represent different nonlinear physical processes. For example, both $R_1$ and $R_2$ can describe the processes of excited state absorption and stimulated emission, depending on the intermediate states involved. $R_3$ and $R_4$ can describe the processes of ground state bleach. The processes can be further classified into rephasing and non-rephasing pathways, which can be distinguished experimentally using phase matching. Comparison between signals from the rephasing and non-rephasing pathways provides important information regarding the inhomogeneity of the sample. It is common to perform Fourier transforms on the first and the third time variables in $R_\nu(s_3,s_2,s_1)$, and write it as $R_\nu(\omega_3, s_2, \omega_1)$. These quantities are directly related to the spectra obtained in pump-probe spectroscopy or two-dimension electronic spectroscopy experiments, and they reveal the dynamics of electronically excited states.
}
\par
{Using Eqs. (\ref{Eq:chi_to_alpha}) and (\ref{Eq:chi3_4_Rs}), the frequency-domain third order susceptibility $\alpha^{(3)}$ can be written as
\begin{align}
\begin{split}
    &\alpha^{(3)}_{i i_3 i_2 i_1}(\omega_3, \omega_2, \omega_1) = \frac{1}{6} \sum_{P\in S_3} \sum_{\nu=1}^4 \\
    &i^3 R_{\nu,ii_{P(3)} i_{P(2)} i_{P(1)}}(\omega_{P(1)}+\omega_{P(2)}+\omega_{P(3)}, \\
    &\qquad\qquad\qquad\qquad \omega_{P(1)}+\omega_{P(2)}, \omega_{P(1)})\\
    &+(\omega_1,\omega_2,\omega_3\rightarrow -\omega_1, -\omega_2, -\omega_3)^*.
\end{split}
\end{align}
$R_\nu(\omega_3,\omega_2,\omega_1)$ can be expressed explicitly in the sum-over-state form. For example,
\begin{align}
\begin{split}
    i^3 R_{1, i i_3 i_2 i_1}(\omega_1+\omega_2+\omega_3,\omega_1+\omega_2,\omega_1) = \\
    \sum_{n,m,l}d_{i_2,0l}d_{i_3,lm}d_{i,mn}d_{i_1,n0} \frac{1}{(\omega_{n0}-\omega_1-i\gamma)}\\
    \frac{1}{(\omega_{nl}-\omega_1-\omega_2-i\gamma)}\frac{1}{(\omega_{nm}-\omega_1-\omega_2-\omega_3-i\gamma)}.
\label{eq:thirdord}
\end{split}
\end{align}
}

%\begin{equation}
%    \begin{aligned}
%        &\chi^{(3)}(-\omega_s;\omega_1,\omega_2,\omega_3) = \frac{1}{6}\sum_{\pi(3)}\sum_{n,m,l}d^{3}_{0l}d^{2}_{lm}d^{1}_{mn}d^{0}_{n0}\times\\
%        &\left(I_{nm}(\omega_1+\omega_2+\omega_3)I_{nl}(\omega_1+\omega_2)I_{0n}(\omega_1) + \cdots \right),
%    \end{aligned}
%    \label{eq:thirdord}
%\end{equation}

%\begin{equation}
%    I_{ij}(\omega)=\frac{1}{\omega-\omega_{ij}+i\gamma}.
%    \label{eq:retProp}
%\end{equation}

%{\color{red}{Maybe we should make eq. 16 and eq. 17 look more similar. I think Mukamel didn't include the light polarization (xyz) in the expression because he assumed everything is in the same polarization. I think if we want to add light polarization, the dipole prefactor will not factorize out as in eq. 17. Each of the 8 terms will have different dipole prefactor, with different (0,1,2,3) ordering.}}

\subsection{Linear Combination of Unitaries}
Block encoding  is a technique to probabilistically implement non-unitary operations on a quantum computer and one approach to constructing a block encoding is the method of linear combination of unitaries (LCU)~\cite{childsHamiltonianSimulationUsing2012}. LCU begins by assuming you have a matrix operation $A$ that has a decomposition into $k$ unitary matrices $U_k$
\begin{equation}
    A = \sum_{l=0}^{k-1}c_l U_l.
\end{equation}
The goal is to embed the matrix operation given by $A$ into a subspace of a larger matrix $U_A$ which acts as a unitary on the whole space. Because of the restriction to unitary dynamics, we require that $||A|| \leq 1$. Since this is difficult to satisfy for a general matrix, a subnormalization factor $\alpha$ is used so that $||A|| \leq \alpha$. The block encoding $U_A$ can be expressed in matrix form
\begin{equation}
    U_A = \begin{pmatrix}
        A/\alpha & *\\
        * & *
    \end{pmatrix}.
\end{equation}

With LCU, it is straightforward to obtain an upper bound on $||A||$ as
\begin{equation}
    \alpha = \sum_{l=0}^{k-1}|c_l|.
\end{equation}
Then, scaling the LCU by $1/\alpha$
\begin{equation}
    A/\alpha = \sum_{l=0}^{k-1}\frac{c_l}{\alpha}U_l \equiv \sum_{l=0}^{k-1}\beta_l U_l,
\end{equation}
we have a matrix that can be block encoded using LCU.

The unitary description of LCU is as follows. We use an ancilla register of $q = O(\log(k))$ qubits and define the oracle which ``prepares" the LCU coefficients
\begin{equation}
    \textsc{prep}\ket{0}^{\otimes q} \rightarrow \sum_{l}\sqrt{\beta_l}\ket{l}
\end{equation}
which is a normalized quantum state and corresponds to the first column of a unitary matrix. Next, we implement the controlled-application of the desired unitaries using the ``select" routine
\begin{equation}
    \textsc{sel}:= \sum_{l}\ket{l}\bra{l}\otimes U_l.
\end{equation}

Then, the block encoding can be expressed as the $n+q$ qubit unitary
\begin{equation}
    U_A := \left(\textsc{prep}^\dagger\otimes I_n\right)\cdot \textsc{sel} \cdot \left(\textsc{prep} \otimes I_n\right).
\end{equation}
The circuit diagram corresponding to this transformation is shown in Fig. \ref{fig:LCU}. 

\begin{figure}[ht]
    \centering
    \begin{tikzpicture}
    \hspace{-.3cm}
\node[scale=.98]{
    \begin{quantikz}
        \lstick{$\ket{0}^{\otimes q}$}& \gate{\textsc{prep}} &\qw \mathlarger{\mathlarger{\mathlarger{\mathlarger{\oslash}}}} \vqw{1} &\gate{\textsc{prep}^\dagger} &\qw\rstick{$\bra{0}^{\otimes q}$}\\
        \lstick{$\ket{\psi}$}& \qw  &\gate{\textsc{sel}} &\qw&\qw\rstick{$\frac{A}{\alpha}\ket{\psi}$} 
    \end{quantikz}
    };
    \end{tikzpicture}
    \caption{Circuit for linear combination of unitaries. $\textsc{prep}$ prepares the LCU coefficients $\beta_l$ into a quantum state $\sum_l \sqrt{\beta_l}\ket{l}_q$, and $\textsc{sel}$ selects the unitary operator to apply conditioned on the state of the ancilla register, where the $\oslash$ notation is to be interpreted as a control on all states in the ancilla register. After uncomputing the ancilla register with $\textsc{prep}^\dagger$, the normalized dipole operator is applied to the system register conditioned on the ancillary system with $q = O(\log(k))$ being measured in the all-zero state.}
    \label{fig:LCU}
\end{figure}
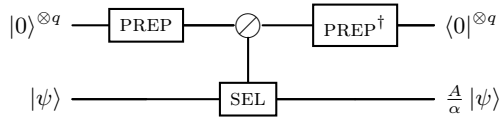

A block encoding is characterized by 3 parameters, the subnormalization factor $\alpha$, the number of ancilla qubits $q$, and the accuracy of the block encoding $\epsilon$.
\begin{defn}[$(\alpha, q, \epsilon)$ Block Encoding]
    For an $n$ qubit matrix $A$, we say that the $n+q$ qubit unitary matrix $U_A$ is an $(\alpha, q, \epsilon)$ block encoding of $A$ if
    \begin{equation}
        ||A - \alpha (\bra{0}^{\otimes q}\otimes I_n) U_A (\ket{0}^{\otimes q}\otimes I_n)|| \leq \epsilon
    \end{equation}
    If we can implement the above unitary $U_A$ exactly, then we call it an $(\alpha, q, \epsilon=0)$ block encoding of $A$.
    \label{def: Block Encoding}
\end{defn}

In the worst case, the complexity of implementing the state corresponding to the coefficients $\beta_l$ is exponentially hard in the number of qubits, i.e. $\tilde{O}(2^q)$ \cite{nielsenQuantumComputationQuantum2010} gates. However, since the number of qubits is logarithmic in the number of terms in the LCU, the worst case circuit complexity is $O(2^q) = O(k)$. If $k$ scales as $\text{poly}(N)$ then the state preparation can be done efficiently. However, in some cases such as the coefficients belonging to a log-concave distribution, the state preparation can be significantly sped up to $O(\polylog(N))$~\cite{grover2002creating}.  This circuit is applied twice in the \textsc{prep} and \textsc{prep}$^\dagger$ routines. 

Since this procedure is non-deterministic, there is a probability that the computation fails. Success is flagged by the ancilla system being in the state $\ket{0}^{\otimes q}$ and is directly related to the subnormalization factor $\alpha$. When we apply the block encoding to a state $\ket{\psi}_n\ket{0}_q$ we obtain
\begin{equation}
     \left(\frac{A}{\alpha}\ket{\psi}_n\right) \ket{0}_q + \ket{\perp}_{n+q}.
\end{equation}
where $\ket{\perp}$ denotes the orthogonal undesired subspaces. Therefore, the success probability of being in the correct subspace $\ket{0}_q$ is, 
\begin{equation}
    P_{\text{succ}} = \left| \left| \frac{A}{\alpha}\ket{\psi}_n \right| \right|^2.
\end{equation}
Therefore, the failure/success probability directly depends on the ratio $\xi/\alpha$ where
\begin{equation}
    \xi \equiv ||A\ket{\psi}||.
    \label{eq:xi}
\end{equation}
In the absence of known bounds for $\xi$, the best-case bound is that $\alpha$ upper bounds $\xi$ by a constant or logarithmic factor so that the success probability can be boosted to $O(1)$ with only $\widetilde{O}(1)$ rounds of robust amplitude amplification \cite{berryHamiltonianSimulationNearly2015a}. In general, we will need to perform $\widetilde{O}(\alpha/\xi)$ rounds of amplitude amplification to boost the success probability of the block encoding to be $O(1)$. 

\subsection{Quantum Signal Processing}
Quantum signal processing (QSP) provides a systematic procedure for implementing a class of polynomial transformations to block encoded matrices. 
For Hermitian matrices $A$, the action of a scalar function $f$ can be determined by the eigendecomposition of $A = UDU^\dagger$, as
\begin{equation}
    f(A) := U \sum_{i}f(\lambda_i)\ket{i}\bra{i}U^\dagger.
    \label{eq:mat func}
\end{equation}
where $\lambda_i$ is the $i$th eigenvalue, and $\ket{i}$ is the $i$th eigenvector. For the cases we consider here, the operators are Hermitian, and we use this characterization of matrix functions as a definition.
QSP provides a method for implementing polynomial transformations of block encoded matrices. To implement a degree $d$ polynomial, QSP \cite{lowOptimalHamiltonianSimulation2017, Lin2022} uses $O(d)$ applications of the block encoding and therefore its efficiency depends closely on the rate a given polynomial approximation converges to the function of interest and the circuit complexity to implement the block encoding. 

 QSP uses repeated applications of the block encoding circuit to implement polynomials of the block encoded matrix. We assume that $A$ is a Hermitian matrix that has been suitably subnormalized by some factor $\alpha$ so that $||A/\alpha||\leq 1$, and that we can express $A$ with its eigendecompostion. QSP exploits the surprising fact that the block encoding $U_A$ can be expressed as a direct sum over one and two dimensional invariant subspaces which correspond directly to the eigensystem of $A$. In turn this allows one to \textit{obliviously} implement polynomials of these eigenvalues in each subspace, which builds up the polynomial transformation of the block encoded matrix $A/\alpha$ overall. QSP is characterized by the following theorem, which is a rephrasing of Theorem 4 of Ref. \cite{lowOptimalHamiltonianSimulation2017}.
 
\begin{theorem}[Quantum Signal Processing (Theorem 7.21 \cite{Lin2022})]
    There exists a set of phase factors $\boldsymbol{\Phi} := (\phi_0, \ldots, \phi_{d}) \in \mathbb{R}^{d+1}$ such that
    \begin{equation}
    \begin{aligned}
        U_\phi(x) &= e^{i \phi_0 Z}\prod_{j=1}^d[O(x)e^{i\phi_j Z}]\\
        &=
        \begin{pmatrix}
            P(x) & - Q(x)\sqrt{1-x^2}\\
            Q^*(x)\sqrt{1-x^2} & P^*(x)
        \end{pmatrix}
    \end{aligned}
    \end{equation}
    where
    \begin{equation}
        O(x) = \begin{pmatrix}
            x & -\sqrt{1-x^2}\\
            \sqrt{1-x^2}& x
        \end{pmatrix}
    \end{equation}
    if and only if $P,Q \in \mathbb{C}[x]$ satisfy
    \begin{enumerate}
        \item $deg(P) \leq d, deg(Q) \leq d-1$
        \item $P$ has parity $d \mod 2$ and $Q$ has parity $d-1 \mod 2$, and
        \item $|P(x)|^2 + (1-x^2)|Q(x)|^2 =1 \hspace{.2cm}\forall x \in[-1,1]$
    \end{enumerate}
\end{theorem}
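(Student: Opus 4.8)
The plan is to prove both implications by induction on $d$, peeling off the rightmost factor $O(x)e^{i\phi_d Z}$ at each step (this is essentially the argument of Theorem~4 of Ref.~\cite{lowOptimalHamiltonianSimulation2017}). Write $U_d(x) := e^{i\phi_0 Z}\prod_{j=1}^d[O(x)e^{i\phi_j Z}]$, so $U_d(x) = U_{d-1}(x)\,O(x)e^{i\phi_d Z}$ with $U_{d-1}$ itself a QSP circuit of degree $d-1$ on phases $(\phi_0,\dots,\phi_{d-1})$. For the forward (``only if'') direction, the base case $d=0$ gives $U_0(x)=\operatorname{diag}(e^{i\phi_0},e^{-i\phi_0})$, so $P=e^{i\phi_0}$, $Q=0$ satisfy (1)--(3) trivially. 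Assuming $U_{d-1}$ has the stated block form with $(P_{d-1},Q_{d-1})$ obeying (1)--(3) at degree $d-1$, I would multiply out the $2\times2$ product to get $U_d$ in the same block form with $P_d = e^{i\phi_d}(xP_{d-1}-(1-x^2)Q_{d-1})$ and $Q_d = e^{-i\phi_d}(P_{d-1}+xQ_{d-1})$; the degree bounds and parities of $P_d,Q_d$ then follow from those of $P_{d-1},Q_{d-1}$, since multiplying by $x$ raises degree by one and flips parity, while multiplying by $1-x^2$ raises degree by two and preserves parity. Condition (3) is obtained most cheaply from $\det U_d(x)=1$ on $[-1,1]$ (because $\det O(x)=\det e^{i\phi_j Z}=1$), since expanding the determinant of the block form is exactly $|P_d|^2+(1-x^2)|Q_d|^2=1$.

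For the converse (``if'') direction I would again induct, now reconstructing $\phi_d$ and lower-degree polynomials from $(P_d,Q_d)$ by inverting the recursions, $P_{d-1} = xe^{-i\phi_d}P_d+(1-x^2)e^{i\phi_d}Q_d$ and $Q_{d-1} = -e^{-i\phi_d}P_d+xe^{i\phi_d}Q_d$. These a priori have degrees up to $d+1$ and $d$, so the crux is choosing $\phi_d$ to annihilate the top coefficients. Letting $p$ be the coefficient of $x^d$ in $P_d$ and $q$ that of $x^{d-1}$ in $Q_d$, condition (3) read off at order $x^{2d}$ forces $|p|=|q|$; hence when $p,q\neq0$ one solves $e^{-2i\phi_d}=q/p$, and one checks this same $\phi_d$ simultaneously kills the degree-$(d+1)$ term of $P_{d-1}$ and the degree-$d$ term of $Q_{d-1}$ (in the degenerate case $p=q=0$, which $|p|=|q|$ forces to happen together, $P_d$ and $Q_d$ already have low enough degree and one takes $\phi_d=0$). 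Parities come for free from those of $P_d,Q_d$. A short computation in which all cross terms cancel gives $|P_{d-1}|^2+(1-x^2)|Q_{d-1}|^2=|P_d|^2+(1-x^2)|Q_d|^2=1$, so $(P_{d-1},Q_{d-1})$ meets (1)--(3) at degree $d-1$; the inductive hypothesis supplies $(\phi_0,\dots,\phi_{d-1})$, and the forward recursion verifies that $U_{d-1}(x)\,O(x)e^{i\phi_d Z}$ reproduces $(P_d,Q_d)$. The base case $d=0$ is again $|P|^2\equiv1\Rightarrow P=e^{i\phi_0}$, $Q=0$.

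The main obstacle is precisely this reduction step in the converse: it is not obvious a priori that one real parameter $\phi_d$ can lower the degree of both $P$ and $Q$ at once, and the argument genuinely uses the normalization constraint (3) to supply $|p|=|q|$, plus the algebraic coincidence that the resulting $\phi_d$ serves both polynomials; mild care is also needed with the degenerate leading-coefficient case and with bookkeeping parities. Everything else is routine $2\times2$ algebra and degree counting.
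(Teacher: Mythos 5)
Your proof is correct, and the argument is the standard one: the paper itself does not prove this statement — it imports it verbatim as Theorem 7.21 of Ref.~\cite{Lin2022} (equivalently Theorem 4 of Ref.~\cite{lowOptimalHamiltonianSimulation2017}) — so there is no in-paper proof to compare against, and your induction peeling off $O(x)e^{i\phi_d Z}$ is precisely how those references establish it. The details check out in the paper's convention: the forward recursion $P_d = e^{i\phi_d}\left(xP_{d-1}-(1-x^2)Q_{d-1}\right)$, $Q_d = e^{-i\phi_d}\left(P_{d-1}+xQ_{d-1}\right)$ is what the $2\times 2$ product gives, $\det U_\phi =1$ yields condition (3), and in the converse the normalization forces $|p|=|q|$ so that a single $\phi_d$ with $e^{-2i\phi_d}=q/p$ kills the top coefficients of both $P_{d-1}$ and $Q_{d-1}$ (the next-lower coefficients vanishing automatically by parity), with the degenerate case $p=q=0$ handled as you describe.
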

Given a scalar function $f(x)$, we can use QSP to approximately implement $f(A/\alpha)$ by using the block encoding of $A$, a polynomial approximation to the desired function, and a set of phase factors corresponding to the approximating polynomial. If $f$ is not of definite parity, we can use the technique of linear combination of block encodings to obtain a polynomial approximation to $f$ that is also of indefinite parity. This corresponds to implementing QSP for the even and odd parts respectively and then using an additional ancilla qubit to construct their linear combination. The last requirement, that $P(x) + (1-x^2)|Q(x)| = 1$ for every $x$, is the most restricting and requires normalization of the desired function which can be severe in some cases. However, the polynomial $P$ can be specified independent of the polynomial $Q$ so long as $|P(x)|\leq 1 \hspace{.2cm} \forall x \in[-1,1]$ and $P(x)\in \mathbb{R}[x]$ (See Corollary 5 of Ref. \cite{gilyenQuantumSingularValue2019a}), which is the case for the algorithm presented in this work.

Furthermore, there are efficient algorithms for computing the phase factors for very high degree polynomials.  Algorithms for finding phase factors for polynomials of degree $d = O(10^7)$ have been reported in the literature (see e.g. \cite{motlaghGeneralizedQuantumSignal2023, dongEfficientPhasefactorEvaluation2021}) and are surprisingly numerically stable even to such high degree. Given this result, conditions (1-3), and that a rapidly converging polynomial approximation exists, one can efficiently approximate smooth functions of the block encoded matrix $A/\alpha$ for very high degree polynomials. Furthermore, QSP provides an explicit circuit ansatz that allows one to know the exact circuit once the phase factors are found. This is a powerful tool since many computational tasks can be cast in terms of matrix functions.

\section{\label{sec:level3}Algorithm\protect }
In this section we present a quantum algorithm for computing the response and electronic absorption spectrum of a molecular system. The focus of our algorithm is in obtaining estimates of the induced dipole moments since these values give a direct way to compute absorption probabilities as described in the previous section. We desire to compute the response function by exploiting the sum-over-states form of the polarizability given by Eq.~\eqref{eq:offdiagpolar}. This requires finding the relevant overlaps corresponding to particular ground-to-excited state transitions. In this section we will use the convention $D^i$ to refer to the matrix representation of the dipole operator in direction $i$ to differentiate from its matrix elements $d_{i,(nm)} = \bra{\psi_m}D^i\ket{\psi_n}$. $D$ without indices refers to arbitrary dipole operator. The terms in the sum-over-states form require approximations to terms of the form
\begin{equation}
d^{(i,j)}_{0,n}=\bra{\psi_0}D^j\ket{\psi_n}\bra{\psi_n}D^i\ket{\psi_0},
\label{eq: off-diagonal terms}
\end{equation}
and is non-zero whenever the transition $0\rightarrow n$ is allowed by symmetry, commonly referred to as a \textit{selection rule}.

In the case of the second quantized chemistry Hamiltonian, we assume that spin-orbitals are mapped to qubits in a one-to-one fashion on a quantum computer using the Jordan-Wigner transformation in this work. In the Jordan-Wigner transform $n = N$ where each of $n$ qubits are mapped to one of the $N$ spin-orbitals in a second quantized representation. We will use $[a,b]$ to refer to a window of excitation frequencies over which we will wish to approximate the response on and $\Delta := b-a$ as width of the interval over which we wish to estimate the response. In this section, we will show how one can obtain systematically improvable approximations to frequency-dependent nth-order response properties. Combining these approximations, we obtain an approximation to the molecular response to the perturbing electric field over the total length of a spectral region of interest.

The algorithm is based off of a simple observation, given access to an eigenstate, say the ground state, $\ket{\psi_0}$ of the electronic Hamiltonian $H_0$, and a block-encoding of a potential, such as the dipole operator $D$, we can use Fermi's rule to approximate transition rates and strengths on the quantum computer. An algorithm implementing this procedure can be realized with a generalized Hadamard test, and uses only one ancilla qubit beyond those used for the block-encoding. 

Additionally, we define the indicator function
\begin{equation}
    1_{[a,b]}(x) = \begin{cases}
        1 & x \in [a,b]\\
        0 & \text{else}.
    \end{cases}
\end{equation}
To understand how this transformation produces the desired output, we will assume that we are able to exactly implement the indicator function of our matrix $1_{[a,b]}(H_0)$. In the later sections, we analyze the scaling of the polynomial for a desired error tolerance. When applied to a matrix, the indicator function serves as a projection onto the subspace of $H_0$ spanned by eigenvalues in $[a,b]$. By this we mean,
\begin{equation}
    1_{[a,b]}(H_0) \equiv \sum_{\lambda_j\in[a,b]}\ket{\psi_j}\bra{\psi_j},
\end{equation}
where $\sum_{\lambda_j \in [a,b]} \equiv \sum_{j: \lambda_j \in [a,b]}$ and where we have written $H_0$ as its eigendecomposition
\begin{equation}
    H_0 = \sum_j \lambda_j \ket{\psi_j}\bra{\psi_j}.
\end{equation}

Let $\ket{\psi_0}$ be the exact ground state of $H_0$ and consider the dipole operators $D$ and $D'$. Before we make precise the statement of our algorithm, consider the following sequence of operations
\begin{align*}
    D'1_{[a,b]}(H_0) D\ket{\psi_0} &= D'1_{[a,b]} \sum_{n>0}d_{0,n}\ket{\psi_n}\\
    &= D' \sum_{n>0}d_{0,n}\sum_{\lambda_j \in [a,b]} \ket{\psi_j}\delta_{nj}\\
    &= D' \sum_{\lambda_j \in [a,b]} d_{0,j} \ket{\psi_j}\\
    &=  \sum_{k\neq j}\sum_{\lambda_j \in [a,b]} d_{0,j} d'_{j,k}\ket{\psi_k}\\
\end{align*}
then, if we could compute the inner product of that state with the ground state $\ket{\psi_0}$, we would find obtain the desired terms
\begin{equation}
    \bra{\psi_0}D'1_{[a,b]}(H_0) D\ket{\psi_0} = \sum_{\lambda_j \in [a,b]}{d_{0,j} d'_{j,0}}.
\end{equation}

For $[a,b]$ small enough so that $|b-a| < \gamma$, the parameter controlling the desired spectral resolution, we can faithfully approximate the response by summing over the approximations we find on each small interval of size $o(\gamma)$. If $[a,b]$ contains only one excitation, then this approximation would be exact up to sampling errors, and the only uncertainty is the position in the window where the excitation occurred. In the next section we present an algorithm that can improve this resolution at the Heisenberg limit.  We introduce the following definitions, let 
\begin{equation}
\begin{aligned}
    d^{x,x'}_{[a,b]} &:= \sum_{\lambda_j \in [a,b]} d_{0,j}d'_{j,0}\\
    d^{x',x}_{[a,b]} &:= \sum_{\lambda_j \in [a,b]} d'_{0,j}d_{j,0}
\end{aligned}
\end{equation}
then since the polarizability
\begin{equation}
    \alpha_{x,x'}(\omega) \equiv \sum_{n\neq 0}\frac{d^{x,x'}_{\omega_n}}{\omega_{n0} +\omega+i\gamma}+ \frac{d^{x',x}_{\omega_n}}{\omega_{n0}-\omega -i\gamma},
    \label{eq: polar SOS}
\end{equation}
we can write an approximation to this sum in terms of the approximations for each sub-interval $[a,b]$ estimating $\omega_{n0}$ by the central frequency of $[a,b]$, $\widetilde{\omega}_{a,b} \equiv |b-a|/2$ 
\begin{equation}
    \widetilde{\alpha}_{x,x'}(\omega) = \sum_{i}\sum_{\omega_{n0} \in [a_i,b_i]}\frac{d^{x,x'}_{[a_i,b_i]}}{\widetilde{\omega}_{a_i,b_i} +\omega+i\gamma}+ \frac{d^{x',x}_{[a_i,b_i]}}{\widetilde{\omega}_{a_i,b_i}-\omega -i\gamma},
    \label{eq: polar appx}
\end{equation}
with $\cup_{i}[a_i,b_i] \subset [-\alpha,\alpha]$.
In the next section we construct a quantum algorithm to obtain estimates to the factors $d_{[a,b]}^{x,x'}$ which we can then use in conjunction with Eq. ~\eqref{eq: polar appx}  to obtain an approximation to ~\eqref{eq: polar SOS}. 

\subsection{Full Algorithm and Cost Estimate}
Our algorithm will rely on block-encodings of the Hamiltonian $H_0$ and dipole operators $D$ and $D'$, which will be accessed through their $(\alpha, m, 0)$ and $(\beta, k, 0)$ block-encodings $U_H$ and $U_D, U_{D'}$ respectively. We will additionally need access to a high-precision approximation to the ground state. This assumption can be made more precise by first assuming that the initial state $\ket{\psi}$ has been prepared with squared overlap $p_0 = \left\vert\braket{\psi_0}{\psi}\right\vert^2$ and that the spectral gap $G = \lambda_1-\lambda_0$, with neither being exponentially small. Under these assumptions, the eigenstate filtering method proposed in \cite{tongQuantumEigenstateFiltering2022,linOptimalPolynomialBased2020a}, constructs an algorithm which can prepare the ground state with
$O\left(\frac{\alpha}{\sqrt{p_0} G}\log(1/\epsilon)\right)$
queries to $U_H$. 

Given these constraints, we can apply this algorithm to efficiently transform some good initial state $\ket{\psi}$ into a high fidelity approximation of the true ground state $\ket{\psi_0}$. Once the high fidelity ground state is in hand, we can additionally use eigenstate filtering to obtain a high-fidelity estimate to the ground energy as well. Although the assumption of having an exact ground state is unlikely to be achieved in practice, this algorithm allows us to transform our initial state with $1/\text{poly}(n)$ overlap, to a state with arbitrarily close-to-1 overlap with the true ground state. We consider this to be a preprocessing step, and we assume for the duration of this work access to exact ground state preparation. 

To motivate the construction, consider the operation $\mathcal{U}$, which we define as the product of block-encodings,
\begin{equation}
    \mathcal{U} = U_{D'}U_{\tilde{1}_{[a,b]}(H)}U_{D}.
\end{equation}
Forming the linear combination $\frac{1}{2}(\mathcal{U} + I)$ implements the Hadamard test for non-unitary matrices and when applied to the ground state $\ket{\psi_0}$, we observe that the success probability of measuring the ancilla qubit to be $\ket{0}$ in the LCU is
\begin{equation}
    P(0) = \left|\left|\frac{1}{2}(\mathcal{U} + I)\ket{\psi_0}\right|\right|^2
\end{equation}
which can be simplified to
\begin{equation}
    P(0) = \frac{1}{2}\left(1 + \frac{d^{x,x'}_{[a,b]}}{\zeta_{\omega}}\right).
\end{equation}
where 
\begin{equation}
    \zeta_{\omega} = \left(\alpha + 1+\omega\right)\beta^2
\end{equation}
is the subnormalization factor for this block encoding.
In the following section, we bound the cost of performing these operations and the number of repetitions needed to obtain accurate approximations to the response to the desired spectral resolution $o(\gamma)$ in terms of these parameters.

\subsubsection{Block-encoding $H_0$ and Dipole Operators}
Our algorithm relies on access to block-encodings of the dipole operators, which we denote $D$ and $D'$, and $H_0$. We bound the cost to implement these operations and show that it is subdominant to the cost of block-encoding $H_0$.  Block-encoding the electronic structure Hamiltonian has been discussed in great detail in the literature, so we will not explicitly analyze the block-encoding gate complexity for $H_0$ and instead bound the number of queries to the block-encoding $U_H$ given by the state-of-the-art methods for electronic structure. 

One approach to block-encoding the dipole operator $D$ is to start from its second quantization. We can use the Jordan-Wigner transformation to write $D$ as a linear combination of Pauli matrices
\begin{equation}
    D = \sum_j^{O(N^2)} c_j P_j,
\end{equation}
with $P_j \in \{I, X, Y, Z\}^{\otimes N}$.
From this we immediately obtain a representation of $D$ as linear combination of $O(N^2)$ unitary matrices. Accordingly, the dipole operator can be block-encoded using $k = \ceil{\log(N^2))}$ ancilla qubits and $O(N^2)$ controlled Pauli operators. The subnormalization factor $\beta$ is the $1-$norm of the coefficients in this decomposition and is therefore
\begin{equation}
    \beta = \sum_{j=0}^{O(N^2)} |c_j|.
    \label{eq:alpha}
\end{equation} 
We assume each of the coefficients in the LCU is a constant with respect to the number of basis functions $N$, therefore we have that $\beta \in O(N^2)$.

An alternative approach to block-encoding the single particle Fermionic operator is to use a single-particle basis transformation $U$ \cite{babbushLowDepthQuantum2018} to represent the creation and annihilation operators in the single-particle basis that makes $D$ diagonal. That is
\begin{equation}
    \sum_{p,q}d_{pq}a_p^\dagger a_q = U\left(\sum_{j} \lambda_j a_j^\dagger a_j\right)U^\dagger,
\end{equation}
where the eigendecomposition of $d=u\cdot \mathrm{diag}[\lambda_0,\cdots,\lambda_{N-1}]\cdot u^\dagger$. 
This immediately leads to a block-encoding of the dipole operator with $\beta = N ||d||$, where $||d||=\max_j|\lambda_j|$. However, applying the result of Ref. \cite{tongFastInversionPreconditioned2021a} Appendix J, this can be further improved. Let $\vec{|\lambda|}$ sort the $\lambda_j$ in magnitude from largest to smallest. Then
\begin{equation}
    \beta = \sum_{j=0,\cdots,\eta-1}\vec{|\lambda|}_j=\mathcal{O}(\eta ||d||)
\end{equation}
Importantly, any single-particle rotation can be implemented using only $\mathcal{O}(N^2)$ quantum gates. 

When the block-encoding of the dipole operator is applied to the molecular ground state, the state of the quantum computer is,
\begin{equation}
    \frac{D}{\beta}\ket{\psi_0}\ket{0}_k + \ket{\perp}.
\end{equation}
Upon measurement of the ancilla register returning the all-zero state, we have knowledge that the state of the system register must have been
\begin{align}
    \frac{D}{\beta}\ket{\psi_0} = \sum_{j>0}\frac{d_{0,j}}{\beta}\ket{\psi_j}.
    \label{eq:afterLCU}
\end{align}
Where we have dropped the $k$-qubit ancillary register used in the block-encoding since we have assumed that register has been uncomputed and measured in the $\ket{0}_k$ state.

At this stage of the algorithm we have prepared a superposition of excited states with amplitudes proportional to the probability of the respective ground to excited state transition. In the next step we show how to extract response properties at particular energies using an eigenstate filtering strategy.

\subsubsection{Block-encoding the indicator function}
The indicator function can be implemented using a suitable polynomial approximation and quantum signal processing of the block-encoded matrix $U_H$. In the context of quantum algorithms, the indicator function was has been used as part of the eigenstate filtering subroutine explored in Ref. ~\cite{linOptimalPolynomialBased2020a}, which constructs a polynomial approximation to the Heaviside step function smoothed by a mollifier. Other work such as Refs. ~\cite{gilyenQuantumSingularValue2019a,lowHamiltonianSimulationUniform2017}, provides an improvement to Ref.~\cite{linOptimalPolynomialBased2020a} by a factor of $O(\log(\delta))$ using a polynomial approximation to the error function $\erf(kx)$. This factor is non-trivial, as in many cases the subnormalization factor rescales $\delta\in O(1)$ to $\delta/\alpha \in O(1/\alpha)$. Lemma 29 of Ref.~\cite{gilyenQuantumSingularValue2019a} promises the existence of a polynomial approximation to the symmetric indicator function that can be constructed using a $d=O(\delta^{-1}\log(\epsilon^{-1}))$ degree polynomial where $\delta\in (0,1/2)$. 

We would like to block encode the indicator function over the symmetric interval $\left[\omega-\Delta/2,\omega +\Delta/2\right]$. This corresponds to choosing some fixed central frequency $\omega$ and forming the interval of width $\Delta$. One straightforward way to implement the shift is to simply perform an LCU of the original Hamiltonian with the identity matrix of the same dimension. Since we would like to study the excitation energies, we will assume that we have shifted the Hamiltonian so that $\lambda_0 = 0$. We would like to implement the linear combination $H_0 - \omega I$ for some $\omega > 0$. We can construct a block-encoding of the shifted Hamiltonian with an additional qubit as in Fig. \ref{fig:ShiftedHamCircuit}. 
\begin{figure}[h]
    \centering
    \begin{quantikz}
        \lstick{$\ket{0}$}&\gate{R_y(\theta)}&\octrl{1}&\gate{Z}&\gate{R_y^\dagger(\theta)}&\qw\rstick{$\bra{0}$}\\
        \lstick{$\ket{0}$}&\qwbundle{m}&\gate[2]{U_H}&\qw\\
        \lstick{$\ket{\psi_0}$}&\qw&\qw&\qw
    \end{quantikz}
    \\
    =
    \begin{quantikz}
        \lstick{$\ket{0}$}&\qwbundle{m+1}&\gate[2]{U_{H_0-\omega I}}&\qw\\
        \lstick{$\ket{\psi_0}$}&\qw&\qw&\qw
    \end{quantikz}
    \caption{Quantum circuit for implementing $(\alpha + {\omega}, m+1,0)$ block-encoding of $H_0 - \omega I$, which is just the linear combination of $H$ with the identity, with $\alpha = 1 + \omega$. The prepare oracle can be implemented by the rotation $R_y(\theta)\ket{0} = \cos(\theta/2)\ket{0} + \sin(\theta/2)\ket{1}$ and $\theta = 2\arccos\left(\left(1+\omega^2\right)^{-1/2}\right)$, so that $R_y(\theta)\ket{0} =\frac{1}{\sqrt{1+\omega^2}} \left(\ket{0} + \sqrt{\omega}\ket{1}\right)$.}
    \label{fig:ShiftedHamCircuit}
\end{figure}
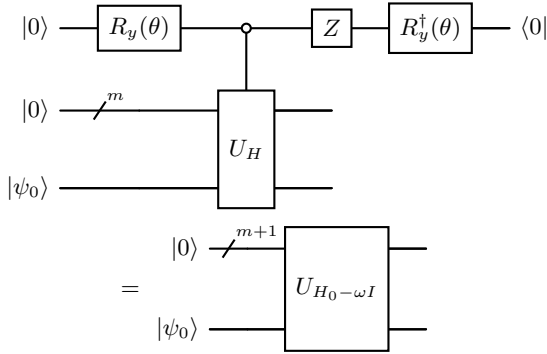

 We use the notation $\widetilde{1}_{[a,b]}$ to refer to the approximated  indicator function on some interval $[a,b]$. Block encoding the indicator function can be done with an additional $O(d)$ queries to the block-encoding of $H_0 - \omega I$ to implement the approximate indicator function applied to the shifted Hamiltonian. That is, given access to an $(\alpha + \omega, m, 0)$ block-encoding of $H_0$, we can construct an $(\alpha + \omega, m+1, 0)$ block-encoding of $H_0 - \omega I$. Then, using QSP we can form an $(\alpha + \omega, m+2, \epsilon)$ block-encoding of $\widetilde{1}_{[\omega-\Delta,\omega+\Delta]}(H_0)$. 

The next step of the algorithm is to apply another dipole operator to the state following the application of the filtering function, this step follows the same procedure as the block-encoding of the dipole operator we performed previously. In the following section, we apply these steps in a controlled fashion with using a small number of additional ancilla qubits using the non-unitary Hadamard test to estimate the desired terms.

\subsubsection{Non-unitary Hadamard test}
In order to obtain the terms relevant to computing response functions in the sum-over-states form, we need to compute the inner products $d^{x,x'}_{[a,b]} = \bra{\psi_0}D' 1_{[a,b]}(H_0) D\ket{\psi_0}$. The standard way to estimate inner products of unitary matrices on a quantum computer is the Hadamard test. Using the same circuit, but instead replacing the unitary matrix with a block-encoding renders an algorithm for estimating inner products of non-unitary matrices. This idea has been explored previously, and an implementation of this algorithm can be found in appendix D of Ref. \cite{tongFastInversionPreconditioned2021a}. 

For clarity, in this section we assume access to the operations above and that the indicator function is approximated well enough that we can ignore the error introduced by the approximation error. We will analyze the effect of this approximation on the desired expectation values in the following section. We define the map $\mathcal{U}$ according to its action on the ground state and ancilla registers used in the block-encoding as
 \begin{equation}
     \mathcal{U}\ket{\psi_0}\ket{0}_{m+k+k'} = \frac{D' 1_{[a,b]}(H_0) D}{\zeta_{\omega}}\ket{\psi_0}\ket{0}_{m+k+k'}  + \ket{\perp}.
 \end{equation}
Where we recall that $m$, $k$, and $k'$ are the ancilla qubits used for block encoding $H_0$ $D$ and $D'$ respectively. Now, we apply $\mathcal{U}$ in a controlled manner with an additional ancilla qubit using the Hadamard test circuit. Measuring this ancilla register returns the $\ket{0}$ state with probability:
\begin{equation}
    P(0) = \frac{1}{2}\left(1+ \frac{1}{\zeta_{\omega}}\bra{\psi_0}D' 1_{[a,b]}(H_0) D\ket{\psi_0}\right).
\end{equation}
It is easy to show that $\bra{\psi_0}D' 1_{[a,b]}(H_0) D\ket{\psi_0}$ gives the desired overlaps
\begin{align*}
   \bra{\psi_0}D' 1_{[a,b]}(H_0) D\ket{\psi_0}=\sum_{\omega_j \in [a,b], j\neq 0}d_{0,j}d'_{j,0} \equiv d_{[a,b]}^{x,x'}
\end{align*}

We can see that $\mathcal{U}$ corresponds to a product of the block-encodings we formed in the previous section, therefore it can be realized via the $(\zeta_{\omega}, m + 2 + k + k', 0)$ block-encoding formed by the product of block-encoded matrices \cite{gilyenQuantumSingularValue2019a}. One should note that in the case where $D' = D$, one can instead do the Hadamard test on only the indicator function to approximate the inner product $\bra{\psi_0}D^\dagger 1_{[a,b]}(H_0)D\ket{\psi_0}$. This gives an improvement by a factor of $O(\beta)$ in the query complexity over the off-diagonal case.

This introduces the main procedure for calculating the terms in the sum-over-states representation of response functions and a graphical representation of this procedure is provided in Fig \ref{fig:Spec Diag}.  In the next section, we show how to systematically improve the resolution by reducing the size of the window $[a,b]$ and determine the response in bins to high resolution. Additionally, we show how these subroutines can be iteratively applied to efficiently approximate response functions to any order. 

\begin{figure*}[ht]
    \centering
    \includegraphics[width=.7\paperwidth]{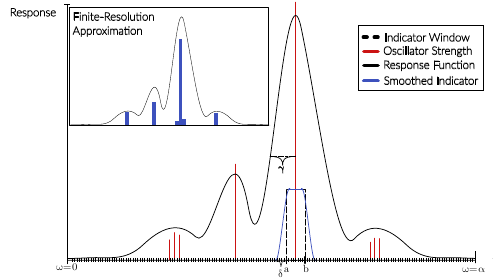}
    \caption{This schematic is a visualization of one run of our algorithm, assuming only first order (linear) response. The x-axis is the frequency, $\omega(\lambda) = \lambda - \lambda_0$, the y-axis the measured response. The goal is to approximate the response in the window $[a,b]$, by forming the indicator function $1_{[a,b]}$ over the portion of the $[\omega(a),\omega(b)]$ using the the block-encoding of $H_0$ and QSP. The smooth line in black is the damped response function given by a Lorentzian broadening parameter $\gamma$, the red sticks are the oscillator strengths associated with particular transitions. We approximate the response by convolving the smoothed indicator function (in blue) with the signal (in red) to estimate the strength of the response over the region indicated $a$ and $b$. The region denoted $\delta$ corresponds to the smoothing parameter used in approximating the indicator function.  Using this approach over different regions with smaller and smaller windows, we obtain a systematically improvable estimate of the transition strength associated with a particular transition energy window, as shown in the inset.}
    \label{fig:Spec Diag}
\end{figure*}

\subsubsection{Approximating first-order polarizability tensor}
Although we have a quantum state whose amplitudes correspond to the dipole transition probabilities in a particular window, we would like to use these estimates to compute response functions. The properties of the transition dipole will affect the difficulty of this task. In the most pessimistic case, the dipole prepares a near uniform superposition over the exponentially large set of excited states. In this case it can be very inefficient to extract high precision estimates of spectrally localized properties of the response function, as each amplitude is exponentially suppressed leading to small success probabilities. However, this behavior is unlikely for many systems of interest as it would imply that electronic transitions are nearly equiprobable for every incident frequency $\omega$. Therefore, in practice we expect some concentration of the response function to relatively small portions of the spectrum similar to the schematic in Fig~\ref{fig:Spec Diag}.

The algorithm in the above section provides a procedure for approximating the dipole response in the region $[a,b]$ to compute $d^{x,x'}_{[a,b]}$ and $d^{x',x}_{[a,b]}$ and provides an approximate excitation energy for those responses as the center of the window $\widetilde{\omega}_{a,b} = |b-a|/2$. To approximate response functions such as the linear polarizability, we need to approximate all of the non-zero terms in the sum-over-states formalism. An inspection of this definition of the polarizability reveals that terms with large dipole response will dominate the behavior of the function.

These terms need to be evaluated to high precision and the window $[a,b]$ that we use to approximate this quantity needs to be smaller than some desired spectral resolution $\gamma$. For fixed $\gamma$, in order to adequately resolve the response function requires the window $[a,b]$ to satisfy $\Delta \equiv |b-a| < \gamma$. Since by definition $\delta < \Delta < \gamma$, and the subnormalization $\alpha$ rescales this interval, this immediately implies that $\delta = o(\alpha^{-1})$ so that the degree of polynomial needed is $\widetilde{O}(\alpha)$. This increases the number of queries to $U_{H}$ by an additional factor of $\widetilde{O}(\alpha)$. Therefore it is prudent to postpone this cost until it is absolutely necessary.  To more rapidly approximate the frequency dependent response, we first introduce an algorithm to determine more pronounced regions in the response and resolve them to higher precision. This algorithm, which uses ideas based on a modified binary search similar to that found in Ref. \cite{linOptimalPolynomialBased2020a}, can be implemented as a \textit{linear combination of Hadamard tests}.

For simplicity, assume that the Hamiltonian has already been rescaled so that its spectrum lies in $[-1,1]$. Take $B, |B| \in O(1)$ to be a set of bins and for each $b_i \in B$ where $b_i = \left[-1 +\frac{2i}{|B|},-1 +\frac{2(i+1)}{|B|}\right]$. The problem then becomes determining which bins have the largest response. Since our goal is not to approximate $d^{x,x'}_{b_i}$ for each $b_i$, but instead to determine inequalities between the $d^{x,x'}_{b_i}$'s, the number of samples can be greatly decreased. Since the determination of the inequality $d^{x,x'}_{b_i} > d^{x,x'}_{b_j}$ is made by sampling, we only need to ensure this inequality holds with constant probability. The number of samples $N_s$ needed to determine that $d^{x,x'}_{b_i} > d^{x,x'}_{b_j}$ with high probability is only $\widetilde{O}(1)$ as a result of a Chernoff bound (See Appendix \ref{app: Filtering} Lemma \ref{lem: inequality test}). 

Define $\mathcal{U}_{b_i} =U_{D'}U_{1_{b_i}}(H_0)U_D$, which is related to the average of the response on region $b_i$. For each of the $\mathcal{U}_{b_i}$ we will append an additional ancilla qubit so that we can form the LCU $\frac{1}{2}(\mathcal{U}_{b_i} + I) =: \mathcal{U}_i$. Next we add an ancilla register of $r = \log(|B|)$ qubits and apply the diffusion operator on the ancilla register. Then, we apply each of the $\mathcal{U}_i$ to the ground state controlled on the ancilla register. The result of these operations prepares the state
\begin{equation}
    \frac{1}{\sqrt{2^{r+1}}}\sum_{i=0}^{2^{r}-1}\ket{i}\mathcal{U}_i\ket{\psi_0}\ket{0}_{|B|+a},
\end{equation}
using an additional $|B| + \log(|B|)$ qubits with $a= m+k+k'+2$ the ancilla qubits used in block encoding $\mathcal{U}_{b_i}$.

Observe that the probability of observing the $i$th bitstring of the ancilla register can be expressed as 
\begin{equation}
    P(i) = \frac{1}{2|B|}\left(1+ \text{Re}\left(\frac{d^{x,x'}_{b_i}}{\zeta_{\omega}}\right)\right),
\end{equation}
which resulted from the fact that the amplitude in front $\ket{i}$ is directly related to the Hadamard test success probability. Therefore, in the same way as the standard Hadamard test, the frequency at which we observe the ancilla register in state $\ket{j}$ upon measurement is directly related to the strength of the response in the region $b_j$. 

Under reasonable assumptions, the number of samples to determine the desired inequalities to high probability is $O(\log(1/(1-P))|B|^2)$ times, where $P$ is the probability that the ordering obtained by a finite number of samples is correct. Since by assumption the number of bins $|B|$ is only $O(1)$ the $\delta$ parameter for the approximate indicator function can be taken to be some small constant value less than $1/|B|$. Once we have sampled from this distribution we will have a rough sketch of which of the $B$ bins has the largest response.  Additionally, since the number of bins in this stage of the algorithm is independent of problem size, each one of these bins can be resolved using only $\widetilde{O}(1)$ queries to $U_H$.

Let $b_i$ be some bin which contained a large response and $|b_i|\in O(1/|B|)$. Our goal is to further resolve which subset(s) of $b_i$ have largest response. As above, we can subdivide $b_i$ into $|B|$ sub-intervals and once again perform the linear combination of Hadamard tests. Then by measuring the ancilla used in the LCU, we obtain information about which sub-bins of $b_i$ have large response. Continuing in this fashion we can iteratively resolve the response onto an interval of size $o(\gamma/\alpha)$. Pseudo code describing the control flow for one run of this algorithm can be found in Alg \ref{alg: inequality testing} and a visualization for how multiple runs of this algorithm might proceed can be found in Fig. \ref{fig:schematic} in Appendix \ref{app: Filtering}.  We detail the cost of determining this interval in the following theorem.

\begin{theorem}[Frequency bin sorting (informal)]
    There exists a quantum algorithm to determine frequency bins corresponding to large amplitudes to resolution $O(\gamma/\alpha)$ using $\widetilde{O}(\alpha^2\beta^2/\gamma)$ queries to the $(\alpha, m, 0)$ block-encoding of $U_H$ and $O(\alpha\beta^2)$ queries to the $(\beta, k, 0)$ block-encodings of $U_D$ and $U_D'$. (See Appendix \ref{app: Filtering} Theorem \ref{thm: search algorithm conv} for additional details)
    \label{thm: bin finding 1d}
\end{theorem}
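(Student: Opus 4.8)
\emph{Proof strategy.} I would analyze the recursive ``linear combination of Hadamard tests'' described above, tracking three quantities level by level: the degree of the indicator polynomial, the estimation cost needed to \emph{rank} the bin responses $d^{x,x'}_{b_i}$, and the Hadamard-test subnormalization $\zeta_\omega = (\alpha+1+\omega)\beta^2 = O(\alpha\beta^2)$. Assume throughout that $H_0$ has been rescaled to spectrum $[-1,1]$, so a physical resolution $\gamma$ corresponds to a bin width $\gamma/\alpha$. Starting from $[-1,1]$ and subdividing into $|B|=O(1)$ sub-bins per step, after $L$ levels the bins have width $\Theta(|B|^{-L})$; demanding this be $O(\gamma/\alpha)$ gives $L = O(\log_{|B|}(\alpha/\gamma)) = O(\log(\alpha/\gamma))$, which is the claimed resolution.

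Next I would bound the cost of one application of $\mathcal{U}_{b_i} = U_{D'}U_{\widetilde{1}_{b_i}(H_0)}U_D$ at level $\ell$. The bins there have width $w_\ell = \Theta(|B|^{-\ell})$, so the smoothing parameter of the approximate indicator must satisfy $\delta < w_\ell$ and, by Lemma~29 of Ref.~\cite{gilyenQuantumSingularValue2019a}, the QSP polynomial has degree $d_\ell = O(w_\ell^{-1}\log(1/\epsilon')) = \widetilde{O}(|B|^\ell)$, costing $d_\ell$ queries to $U_H$ (the shift $H_0\to H_0-\omega I$ adds only $O(1)$ overhead; cf.\ Fig.~\ref{fig:ShiftedHamCircuit}). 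Hence one application of $\mathcal{U}_{b_i}$ uses $d_\ell$ queries to $U_H$ and one query each to $U_D$ and $U_{D'}$, and realizes a $(\zeta_\omega,\cdot,O(\beta^2\epsilon'))$ block encoding of $D'\widetilde{1}_{b_i}(H_0)D$.

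Then I would analyze a single level. Measuring the bin register of the linear-combination-of-Hadamard-tests state returns bin $i$ with $P(i) = \tfrac{1}{2|B|}\big(1+\mathrm{Re}(d^{x,x'}_{b_i}/\zeta_\omega)\big) + O(\beta^2\epsilon'/\zeta_\omega)$, so the discriminating signal sits at order $1/\zeta_\omega$ on top of a baseline $\Theta(1/|B|)=\Theta(1)$; in particular the block-encoding ancillas flag success with probability $\Theta(1)$, needing no amplitude amplification. Under the stated concentration assumption on the response --- that the $O(1)$ largest $d^{x,x'}_{b_i}$ are separated from one another and from the rest by gaps bounded below by a constant independent of $\alpha,\beta,\gamma$ --- it suffices to learn each $d^{x,x'}_{b_i}$ to additive precision $O(1)$, i.e.\ each $P(i)$ to precision $O(1/\zeta_\omega)$. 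Amplitude estimation achieves this (rather than naive sampling, which would cost $\widetilde{O}(\zeta_\omega^2)$) in $\widetilde{O}(\zeta_\omega)$ circuit applications; choosing $\epsilon' = \poly(\gamma/\alpha)$ so the $O(\beta^2\epsilon'/\zeta_\omega)$ bias lies well below the gap (adding only logarithms to $d_\ell$), and repeating over the $O(|B|^2)=O(1)$ pairwise comparisons with $O(\log(1/(1-P)))$ confidence boosting, the correct sub-bin is identified with high probability --- this is the content of Lemma~\ref{lem: inequality test}.

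Finally I would sum over levels: the per-level cost is $\widetilde{O}(\zeta_\omega)$ circuit applications, uniform in $\ell$, while the $U_H$ queries per application $d_\ell = \widetilde{O}(|B|^\ell)$ grow geometrically, so the total $U_H$ count is dominated by the finest level, $\widetilde{O}(\zeta_\omega)\cdot d_L = \widetilde{O}(\alpha\beta^2)\cdot\widetilde{O}(\alpha/\gamma) = \widetilde{O}(\alpha^2\beta^2/\gamma)$, and the $U_D,U_{D'}$ count is $L\cdot\widetilde{O}(\zeta_\omega) = \widetilde{O}(\alpha\beta^2)$; a union bound over the $\widetilde{O}(1)$ comparisons controls the overall failure probability. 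The main obstacle is precisely the concentration/precision step: one must commit to a sharp form of the ``reasonable assumption'' that forces the $\Omega(1)$-gap condition, and then verify that the accumulated polynomial-approximation and amplitude-estimation errors never exceed that gap at any level, since a bin misidentified at a coarse scale cannot be recovered later --- and it is this step that pins the $\zeta_\omega = \Theta(\alpha\beta^2)$ overhead into the query count.
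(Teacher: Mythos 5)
Your proposal is correct and reproduces the paper's overall skeleton — recursive subdivision into $|B|=O(1)$ sub-bins, polynomial degree $d_\ell=\widetilde{O}(|B|^\ell)$ growing geometrically so the finest level $L=O(\log(\alpha/\gamma))$ dominates, Chernoff-type comparisons with a union bound, and the final counts $\widetilde{O}(\alpha^2\beta^2/\gamma)$ for $U_H$ and $\widetilde{O}(\alpha\beta^2)$ for $U_D,U_{D'}$ — but you justify the $\alpha\beta^2$ overhead differently from the paper. The paper's proof charges $\widetilde{O}(1)$ samples per level for the inequality tests (Lemma~\ref{lem: inequality test}) and then multiplies the whole count at the end by $\widetilde{O}(\alpha\beta^2)$, attributed to the ``success probability of the block-encodings'' boosted by amplitude amplification. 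You instead observe that in the Hadamard-test construction nothing is postselected away: the bin register fires with baseline probability $\Theta(1/|B|)=\Theta(1)$, and the subnormalization shows up as the discriminating signal being suppressed to scale $1/\zeta_\omega=O(1/(\alpha\beta^2))$, so the real cost is estimating each $P(i)$ to additive precision $O(1/\zeta_\omega)$, which amplitude estimation does in $\widetilde{O}(\zeta_\omega)$ circuit uses per level (naive sampling would give $\zeta_\omega^2$, i.e.\ the paper's un-amplified figure). This is arguably the cleaner accounting, since the quantity being resolved is a small shift about $1/2|B|$ rather than a small ``good-subspace'' amplitude, and it makes explicit where the paper's Lemma~\ref{lem: inequality test} (stated for an already-normalized distribution over bins) must be rescaled by $\zeta_\omega$ before it applies to the physical probabilities. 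You are also more explicit than the paper about the two assumptions both arguments need: an $\Omega(1)$ separation among the leading $d^{x,x'}_{b_i}$ relative to $\zeta_\omega$ (the paper's ``reasonable assumptions''/sparse-support hypothesis in Theorem~\ref{thm: search algorithm conv}), and control of the polynomial-approximation bias $\epsilon'$ at every level so that a coarse-scale misidentification — which the recursion cannot undo — never occurs; the paper handles the latter only implicitly through the choice $\delta_\ell=o(|B|^{-\ell})$. Net: same theorem, same totals, genuinely different mechanism for the key factor, with your version trading the paper's amplitude-amplification step for an amplitude-estimation precision argument and a sharper statement of the concentration hypothesis.
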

\begin{proof}
        Let $\delta_0 = O(1)$ be the smoothing parameter for the Heaviside approximation and $[a,b]$ the bin we wish to find the largest response in with $\Delta \in O(1)$. Let $|B|$ be the number of sub-intervals we use at each stage of the algorithm. Let $B$ be an equal subdivision of $[a,b]$ into $|B|$ bins and let $b_i = [a+ib/|B|,a+(i+1)b/|B|] \in B$ be an interval of width $O(1/|B|)$. Define $\mathcal{U}_{i} = \frac{1}{2}\left(I + U_{D'}U_{\tilde{1}_{b_i}}U_{D}\right)$. 
        
        To prepare the linear combination of Hadamard tests with $\mathcal{U}_{i}$ requires constructing a linear combination of $O(|B|)$ terms of the form
        \begin{equation}
            \frac{1}{2|B|}\sum_{i}\mathcal{U}_i \otimes \ket{i}\bra{i},
        \end{equation}
        which requires $O(|B|)$ ancilla qubits for the linear combination with the identity to form the Hadamard test and an additional $\log(|B|)$ ancilla for the linear combination of Hadamard tests. This in turn queries the block-encodings of the form $\mathcal{U}_i$ $O(|B|)$ times.
        
        Since splitting into $|B|$ intervals requires that $\delta_0 \rightarrow \delta_0/|B| =: \delta_1$, we have that the number of queries to the block-encoding $U_{H_0}$ increases by a factor of $\widetilde{O}(|B|)$ by Lemma 29 of Ref.~\cite{gilyenQuantumSingularValue2019a}. However, since we only need to find the sub interval of $b_1$ with large response to some constant probability, we only need to repeatedly measure the ancilla $\widetilde{O}(1)$ times. The first iteration therefore requires $\widetilde{O}(|B|^2)$ queries to $U_H$. Therefore, the cost is entirely dominated by the size and number of the intervals.
        
        If we let $b_2$ be a subdivision of $b_1$ into $|B|$ bins, we have $\delta_0 \rightarrow \delta_0/|B|^2$, increasing the number of queries to $O(|B|^3)$. In general, if we repeat this algorithm for $k$ iterations, we find that the number of queries to $U_H = O(|B|^2 + |B|^3 + \cdots |B|^k) = O(|B|^k)$. We desire that $|B|^{-k} < \gamma$, meaning that the bins at iteration $k$ are smaller than the desired resolution $\gamma$. This implies $k = O(\log(1/\gamma))$. For the simplest case of $|B|=2$ requires only a single additional ancilla, and the total number of queries to $U_H$ is $O(2^{k+1}) = 2^{2+\log(1/\gamma)}$ this corresponds to a total number of queries to $U_H$ that is $O(1/\gamma)$, which upon rescaling by the subnormalization gives a number of queries that is $O(\alpha/\gamma)$. 
        
        However, because there is a success probability associated with the block-encodings, this increases the query complexity by a factor of $\widetilde{O}(\alpha \beta^2)$ to obtain success with some constant probability. This results in the final query complexity $\widetilde{O}\left(\alpha^2\beta^2/\gamma\right)$ queries to $U_H$ and $\widetilde{O}(\alpha \beta^2)$ queries to $U_D$ and $U_{D'}$. Since we are able to improve the spectral resolution at a rate which is linear in $\gamma^{-1}$, the desired resolution, this algorithm saturates the Heisenberg limit up to scalar block encoding factors.
\end{proof}

This procedure allows us to systematically ``sift" through the spectrum when searching for regions where the response is most prominent. This allows us to focus the computational resources in approximating the response to high-accuracy only in regions where the response is most pronounced. To ensure a frequency resolution consistent with the problem, the spectrum should be resolved to $o(\gamma)$, which is the broadening parameter. Therefore, in regions where the response is largest we continue to halve the domain until the size of the bins are $O(\gamma)$.  Once again, we refer the reader to Fig. \ref{fig:schematic} for a visualization.

The main source of cost in the polynomial approximation occurs from the size of the smoothing parameter $\delta$. As a result, iterations with larger bins have smaller cost than iterations with smaller bins. Furthermore, since the goal is only to determine inequalities amongst the bins, and not their values, the bins can be sorted with high probability using tail bounds of the sort found in Lemma \ref{lem: inequality test} in appendix \ref{app: Filtering}. Since this approach saturates the quantum limit asymptotically, we believe this approach is nearly optimal for block-encoding based algorithms of this kind.

Once we have run this procedure and have determined the most prominent frequency bins, e.g. at the location of a peak, we would like to accurately approximate the height of the peak. Chebyshev's inequality bounds the number of expected samples to achieve an $\epsilon$ accurate approximation scales to $N_{\text{samples}} \in O\left(\epsilon^{-2}\right)$ to achieve an $\epsilon$ accurate result. We would like to have that our algorithm saturates the Heisenberg limit for estimating the heights of the found peaks. However, this cannot be done with the standard Hadamard test. In order to obtain the Heisenberg limit for our algorithm we will need to use amplitude estimation which uses quantum phase estimation on the circuit corresponding to the block-encoding $\mathcal{U}_i$.

\begin{theorem}[Approximating bin heights (informal)]
    Given a small interval $[a,b]$ with $\Delta \equiv |b-a|/2 = o(\gamma/\alpha)$ and a fixed $\epsilon > 0$, there exists a quantum algorithm that makes $O\left(\alpha^2\beta^2/(\gamma\epsilon)\right)$ queries to the $(\alpha, m, 0)$ block-encoding of $U_H$ and $O\left(\alpha\beta^2/\epsilon\right)$ queries to the $(\beta, m', 0)$ block-encodings of $U_D$ and $U_D'$ to obtain the average amplitude on the region $[a,b]$ to additive accuracy $\epsilon$. (See Appendix \ref{app: Filtering} Corollary \ref{cor:cor bounding response fn error} for additional details).
\end{theorem}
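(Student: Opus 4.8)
The plan is to reduce this to the non-unitary Hadamard test of the previous subsection and then upgrade the naive resampling of its success probability to quantum amplitude estimation, carefully tracking how the subnormalization $\zeta_\omega$ of the product block-encoding enters the error. Recall that $\mathcal{U}=U_{D'}U_{\widetilde{1}_{[a,b]}(H_0)}U_D$ is a $(\zeta_\omega, m+2+k+k',0)$ block-encoding of $D'\widetilde{1}_{[a,b]}(H_0)D$ with $\zeta_\omega=(\alpha+1+\omega)\beta^2=O(\alpha\beta^2)$ for $\omega=O(\alpha)$, and that the single-control Hadamard test built from $\mathcal{U}$ measures its control qubit in $\ket 0$ with probability $P(0)=\tfrac12(1+\zeta_\omega^{-1}\Re\bra{\psi_0}D'\widetilde{1}_{[a,b]}(H_0)D\ket{\psi_0})$, the imaginary part being obtained by inserting the phase gate $\mathrm{diag}(1,i)$ on the control. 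It is essential that only the single control qubit is read out: the $\ket\perp$ component of $\mathcal{U}\ket{\psi_0}\ket 0$ is orthogonal to the all-zero ancilla state, so the cross term $\lVert D'\widetilde{1}_{[a,b]}(H_0)D\ket{\psi_0}/\zeta_\omega\rVert^2$ cancels and $P(0)$ is exactly affine in $\Re d^{x,x'}_{[a,b]}$, which is what lets amplitude estimation return a clean estimate. Moreover $P(0)$ sits comfortably near $1/2$ because $\zeta_\omega$ over-bounds $\lVert D'1_{[a,b]}(H_0)D\ket{\psi_0}\rVert$, so amplitude estimation runs with no extra overhead.

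First I would bound the cost of one coherent query to $\mathcal{U}$. Since $\Delta=|b-a|/2=o(\gamma/\alpha)$ and the smoothing parameter $\delta\le\Delta$ can be taken $\Theta(\gamma/\alpha)$ at the target resolution, Lemma 29 of Ref.~\cite{gilyenQuantumSingularValue2019a} supplies an $\epsilon_1$-accurate polynomial for the indicator of degree $d=O(\delta^{-1}\log\epsilon_1^{-1})=\widetilde O(\alpha/\gamma)$. Implementing $U_{\widetilde{1}_{[a,b]}(H_0)}$ by QSP then costs $O(d)$ queries to the shifted block-encoding $U_{H_0-\omega I}$ of Fig.~\ref{fig:ShiftedHamCircuit}, each of which is one query to $U_H$; together with the single queries to $U_D$ and $U_{D'}$, one application of $\mathcal{U}$ or $\mathcal{U}^\dagger$ costs $\widetilde O(\alpha/\gamma)$ queries to $U_H$ and $O(1)$ queries to $U_D,U_{D'}$.

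Next I would invoke amplitude estimation. Prepare $\ket{\psi_0}$ with the assumed exact preparation routine, let $\mathcal{A}$ be the resulting Hadamard-test circuit, and mark the subspace in which the control qubit is $\ket 0$; phase estimation on the Grover iterate $Q=-\mathcal{A}S_0\mathcal{A}^{-1}S_\chi$ with $M$ iterates returns $P(0)$ to additive error $O(1/M)$, each iterate costing $O(1)$ applications of $\mathcal{U},\mathcal{U}^\dagger$ and the preparation. Since $\Re d^{x,x'}_{[a,b]}=\zeta_\omega(2P(0)-1)$, it suffices to estimate $P(0)$ to accuracy $\Theta(\epsilon/\zeta_\omega)$, so I would take $M=O(\zeta_\omega/\epsilon)=O(\alpha\beta^2/\epsilon)$, repeat $O(1)$ times and take a median for high success probability, and repeat once more for the imaginary part. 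Choosing $\epsilon_1$ so that $\beta^2\epsilon_1=O(\epsilon)$ — which changes $d$ only through the harmless $\log\epsilon_1^{-1}$ — controls the replacement of $\widetilde{1}_{[a,b]}$ by $1_{[a,b]}$, up to the unavoidable boundary ambiguity of eigenvalues lying in the $\delta$-wide transition region, which is $o(\gamma)$-localized by the hypothesis on $\Delta$. Multiplying $M$ by the per-query cost gives $\widetilde O(\alpha^2\beta^2/(\gamma\epsilon))$ queries to $U_H$ (the logarithms being those in the filter degree) and $O(\alpha\beta^2/\epsilon)$ queries to $U_D$ and $U_{D'}$, as claimed.

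The hard part is not any individual subroutine but the error-budget composition: one must verify that amplitude estimation acts on exactly the marked subspace for which the $\zeta_\omega^{-2}$-suppressed cross term vanishes, and that the polynomial-approximation error, the statistical error of amplitude estimation, and the transition-region ambiguity of the filter can all be simultaneously driven below $O(\epsilon)$ while keeping the filter degree at $\widetilde O(\alpha/\gamma)$. The Chernoff/median boosting of the amplitude-estimation confidence and the separate treatment of the real and imaginary parts are then routine.
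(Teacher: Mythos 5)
Your proposal is correct and takes essentially the same approach as the paper: the non-unitary Hadamard test built from the product block-encoding $\mathcal{U}=U_{D'}U_{\widetilde{1}_{[a,b]}(H_0)}U_D$, a triangle-inequality split between the filter-approximation error (with degree $\widetilde{O}(\alpha/\gamma)$ and the polynomial error tightened by the $\beta^2$ factor) and the estimation error, and amplitude estimation to obtain the $1/\epsilon$ scaling. The only difference is bookkeeping: you fold the subnormalization $\zeta_\omega=O(\alpha\beta^2)$ into the required amplitude-estimation precision $\epsilon/\zeta_\omega$, whereas the paper accounts for it as an amplitude-amplification overhead of $O(\alpha\beta^2)$ multiplied by the $O(1/\epsilon)$ from amplitude estimation; both give $O(\alpha\beta^2/\epsilon)$ uses of $\mathcal{U}$ and hence the stated query counts.
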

\begin{proof}
    Our goal is to estimate $\mu_{[a,b]} = \left\vert\bra{\psi_0}D'1_{[a,b]}(H)D\ket{\psi_0}\right\vert$, with an approximator $\hat{\mu}_{[a,b]}$ so that $|\hat{\mu}_{[a,b]} - \mu_{[a,b]}|\leq\epsilon$.

    Assume that,
    \begin{equation}
         \widetilde{\mu}_{[a,b]} = \left\vert\bra{\psi_0}D'\tilde{1}_{[a,b]}(H)D\ket{\psi_0}\right\vert,
    \end{equation}
    where $\widetilde{\mu}_{[a,b]}$ is the exact expected value of the approximate distribution with respect to the ground state wavefunction.

    We would like to find conditions on $\epsilon', \widetilde{\epsilon}$ so that whenever $|\widetilde{\mu}_{[a,b]} - \mu_{[a,b]}| \leq \epsilon'$ (error from approximation of indicator) and $|\widetilde{\mu}_{[a,b]} - \hat{\mu}_{[a,b]}| \leq \widetilde{\epsilon}$ (error from sampling compared true average of approximate indicator) $\implies |\mu_{[a,b]} - \hat{\mu}_{[a,b]}| < \epsilon$, our desired accuracy. Observe that
    \begin{align*}
        |\mu_{[a,b]} - \hat{\mu}_{[a,b]}| &= |\mu_{[a,b]}-\widetilde{\mu}_{[a,b]} + \widetilde{\mu}_{[a,b]} - \hat{\mu}_{[a,b]}|\\
        &\leq \left\vert\mu_{[a,b]}-\widetilde{\mu}_{[a,b]}\right\vert + \left\vert\widetilde{\mu}_{[a,b]} - \hat{\mu}_{[a,b]}\right\vert\\
        &\leq \epsilon' + \widetilde{\epsilon}
    \end{align*}
    which we desire to be smaller than $\epsilon$ which implies that $\epsilon' + \widetilde{\epsilon} < \epsilon$. Thus, whenever the approximation error $\widetilde{\epsilon}$ satisfies that $\tilde{\epsilon} \ll \epsilon$, we need to find an $\epsilon' = O(\epsilon)$ accurate approximation to result. However, because the success amplitude is rescaled by the subnormalization factor $\zeta_{\omega} = O(\alpha\beta^2)$, the success probability is $O\left((\alpha\beta^2)^{-2}\right)$, so the number of repetitions increases by a factor 
    of $O\left((\alpha\beta^2)^{2}\right)$.
    
    Using amplitude amplification improves this query complexity by a quadratic factor to $O(\alpha\beta^2)$. Then, using the block-encoding $\mathcal{U}$ with the quantum phase estimation algorithm to use amplitude estimation requires an additional factor of $O(1/\epsilon)$ controlled-applications of $\mathcal{U}$ to calculate the expectation value to accuracy $\epsilon$ requires $O(\alpha\beta^2/\epsilon)$ total queries to $\mathcal{U}$ and its controlled version. The algorithm which directly samples the output distribution would require the standard $O(1/\epsilon^2)$ repetitions, so the amplitude estimation routine improves over direct sampling by a factor of $O(1/\epsilon)$.

    The number of queries to form the block-encoding of the indicator function depends on the degree $d$ of the polynomial needed to approximate the indicator function to the desired precision. As was shown in Lemma 29 of \cite{gilyenQuantumSingularValue2019a}, we know that $d = \widetilde{O}(1/\delta) = \widetilde{O}(\alpha/\gamma)$, so the total number of queries to the block-encoding of $H_0$, $U_H$, is $\widetilde{O}(\alpha^2\beta^2/\gamma\epsilon)$. 
\end{proof}

At this point, we can estimate the cost to run this algorithm for the problem of second quantized quantum chemistry. We wish to derive a bound to the number of queries to $U_H$ in terms of the number of spin-orbitals $N$ and the number of electrons $\eta$. The naive bound on the subnormalization for $U_H$ would be $O(N^4)$ due to the $O(N^4)$ terms in the Coulomb Hamiltonian. However, this bound can be quite pessimistic in practice, and more efficient techniques such as double factorization \cite{pengLowrankFactorizationElectron2017, reiherElucidatingReactionMechanisms2017, babbushLowDepthQuantum2018} or the method of tensor hypercontraction \cite{leeEvenMoreEfficient2021} can reduce the number of terms significantly. In the low-rank double-factorization based approach, empirical results have shown that the number of terms in $H_0$ can be reduced to $O(N^2)$ while maintaining the desired accuracy (c.f. \cite{reiherElucidatingReactionMechanisms2017, Motta2021} Supporting Information sec VII.C.5). 

If we assume the empirical scaling, $\alpha \in O(N^2)$, and use the estimate that $\beta = O(||d|| \eta) \in O(N\eta)$ from the second quantization block-encoding procedure, we have that the total number of queries to $U_H$ is
\begin{equation}
    O\left(\frac{N^6\eta^2}{{\gamma}\epsilon}\log\left(\frac{1}{\epsilon}\right)\right),
\end{equation}
or, suppressing logarithmic factors we have 
\begin{equation}
    \widetilde{O}\left(\frac{N^{6}\eta^2}{{\gamma}\epsilon}\right).
\end{equation}

\subsection{Obtaining higher order polarizabilities, \label{subsec:higherOrd}}
The above algorithm provides a method for computing the overlaps and excitation energies which can then be used with a classical computer to obtain first-order response functions. A straightforward generalization of the above procedure can be used to compute various terms in the numerator as well as the energies needed to evaluate the denominator. 

Consider a product of block-encodings of the form 
\begin{equation}
    \mathcal{U}_2 = U_{D''}U_{\tilde{1}{[c,d]}}U_{D'}U_{\tilde{1}_{[a,b]}}U_{D}.
\end{equation} 
For now, ignoring subnormalization factors and ancillary qubits, if we can successfully implement controlled $\mathcal{U}_2$ in a Hadamard test, then we can gain information about$\bra{\psi_0}\mathcal{U}_2\ket{\psi_0}$. We can relate this to the desired information by observing
\begin{align*}
    &\bra{\psi_0}\mathcal{U}_2\ket{\psi_0} \sim \bra{\psi_0}D''1_{[c,d]}(H)D'1_{[a,b]}(H)D\ket{\psi_0}\\
    &= \sum_{\omega_{nm} \in [a,b]}\sum_{\omega_{m0} \in [c,d]}d_{0,n} d'_{n,m}\sum_{k}d''_{m,k}\braket{\psi_0}{\psi_k}\\
    &= \sum_{\omega_{nm} \in [a,b]}\sum_{\omega_{m0} \in [c,d]}d_{0,n} d'_{n,m}d''_{m,0}.
\end{align*}
Therefore, estimates this quantity provide estimates to $\omega_{m0} \in [a,b]$ and $\omega_{nm}\in [c,d]$ as well as the response observed in the region. In turn, this can be used to approximate the first hyperpolarizability.

The same procedure for roughly determining the distribution by forming a superposition over large regions of the frequency domain works for nonlinear response as well in the form of $n$-dimensional grid. First we discretize the interval $[-1,1]$ into $|B|$ bins with $b_i$ as above. Consider the Cartesian product $B\times B$ which is the set of all tuples $(b_i,b_j)$ for $b_i, b_j \in B$ and take two registers of size $\ceil{\log(|B|)}$ and prepare each in the uniform superposition and implement the controlled unitary
\begin{equation}
    \sum_{ij}\ket{ij}\bra{ij}\otimes\frac{1}{2}(I + \mathcal{U}_{i,j})
\end{equation}
where $\mathcal{U}_{i,j} = U_{D''}U_{\tilde{1}{b_j}}U_{D'}U_{\tilde{1}_{b_i}}U_{D}$. Measuring the ancilla registers, they output the bitstrings $i, j$ with probability
\begin{equation}
    P(i,j) = \frac{1}{2|B|^2}\left(1 + \text{Re}\left(\bra{\psi_0}\mathcal{U}_{i,j}\ket{\psi_0}\right)\right)
\end{equation}
which provides data on the strength of the response in the $2$-d region $b_i \times b_j$. 

To implement this, we need access to $U_H$ to form $\{\mathcal{U}_{i,j}\}_{i,j=1}^{|B|}$ and its controlled version. Each one of the $\mathcal{U}_{ij}$ can be implemented with a degree $d = O(|B|)$ polynomial. Since there are $O(|B|^2)$ of these, each one requiring $O(|B|)$ queries to $U_H$ gives a total query complexity to $U_H$ that is $\widetilde{O}(|B|^3)$. Using the same procedure as above, we sample from a distribution over $O(|B|^2)$ elements and determine a pair $(i,j)$ corresponding to the region $b^{0}_i \times b^{0}_j$ which has the largest response with some probability. We then subdivide the box $b^{0}_i\times b^{0}_j$ into into $|B|$ bins, each of size $O(1/|B|^4)$. This will require $\widetilde{O}(|B|^5)$ queries to $U_H$ to implement and $\widetilde{O}(1)$ repetitions to determine a new bin $b^{1}_i\times b^{1}_j$. We wish to repeat this $k$ times until $|b^{k}_i \times m^{k}_j| = O(\gamma^2/\alpha^2)$. This means that $k = O(\log(\gamma/\alpha))$. Then, the total number of queries to $U_H$ is 
\begin{equation}
    \sum_{i=1}^{k}|B|^{2i+1} = O(|B|^{k^2}) = \widetilde{O}\left(\left(\frac{\alpha}{\gamma}\right)^2\right).
\end{equation}
But the success probability for this procedure requires an additional number of queries that is $O(\alpha^2 \beta^3)$, where we assume that the subnormalization factor $\beta$ is the same for all the dipole operators. Then, the total query complexity is 
\begin{equation}
    \widetilde{O}\left(\frac{\alpha^4 \beta^3}{\gamma^2}\right).
\end{equation}

This binary-search inspired procedure for finding frequency bins of pronounced response can be generalized to any order. In the case of $n$th order response functions, where one desires to determine a $n$-dimensional bin of frequencies $b_0 \times b_1 \times \cdots \times b_{n-1}$ where the size of the bin is order $O\left(\left(\gamma/\alpha\right)^n\right)$, will in general require $\widetilde{O}\left((\alpha/\gamma)^n\right)$ queries to $U_H$ to form the desired polynomial. An additional factor of $O(\alpha^{n}\beta^{n+1})$ arises due to the success probability of implementing the $\mathcal{U}$'s, so that the total complexity is expressed as
\begin{equation}
\widetilde{O}\left(\frac{\alpha^{2n}\beta^{n+1}}{\gamma^n}\right)    
\end{equation}
queries to $U_H$. These can then be used to evaluate the $I_{ij}$ portions of the terms in Eq. \eqref{eq:thirdord}.

Once the region has been determined where the response is desired accurately, one can either Monte-Carlo sample or use amplitude estimation which increases the query complexity by a factor of $O(1/\epsilon^2)$ and $O(1/\epsilon)$ respectively. The amplitude estimation based sampling algorithm will therefore scale as
\begin{equation}
    \widetilde{O}\left(\frac{\alpha^{2n}\beta^{n+1}}{\gamma^{n}\epsilon}\right)
\end{equation}
queries to $U_H$ to measure the response in some small frequency bin to precision $\epsilon$. Of course, this will need to be repeated for many bins, so the cost is additive in total number of bins one needs to consider to accurately compute the terms corresponding to the $d^k_{lm}$'s in Eq. \eqref{eq:thirdord}. Therefore, this methodology could in practice be used to compute high-precision estimates to nonlinear quantities in optical and molecular physics.

Using our estimates for $\alpha$ and $\beta$ used in block-encoding $H_0$ and the dipole operators in second quantization gives an expected query complexity to the molecular Hamiltonian block encoding for $n$th order molecular response quantities that is
\begin{equation}
    \widetilde{O}\left(\frac{N^{5n+1}\eta^{n+1}}{\gamma^n\epsilon}\right).
\end{equation}
The bulk of the complexity of this algorithm comes from two sources, the success probability of the block-encodings and the spectral resolution required. Note that in many cases the success probability estimates can be very pessimistic, as the numerator can be quite large and compensate for a large subnormalization factor in the denominator. However, improving over the $\gamma^{n}\epsilon^{-1}$ resolution and precision scaling would imply a violation of the Heisenberg limit, so we expect that the factor of $O(\gamma^n\epsilon^{-1})$ cannot be improved in general.

\section{\label{sec:level4}Discussion\protect }
\subsection{Comparison with previous work}
In this work, we have shown how one can calculate  molecular response properties for the fully correlated molecular electronic Hamiltonian using the dipole approximation for $n$th order response functions in a manner that can be implemented in $O\left(\alpha^{2n}\beta^{n+1}/(\gamma^{n}\epsilon)\right)$ queries to the Hamiltonian block encoding, with $\alpha$ and $\beta$ the block encoding subnormalization factors for the Hamiltonian and the dipole operators respectively. In this section we will use the estimates  $\alpha = O(N^2)$ and $\beta = O(N\eta)$. Due to electronic correlation, many classical methods such as time-dependent density functional theory can fail to capture correct quantitative and even qualitative behavior of the absorption strengths and excitation energies of molecules interacting with light. Since quantum algorithms can efficiently time-evolve the full molecular electronic Hamiltonian, quantum algorithms can efficiently capture highly correlated electronic motion. Accordingly, there has been interest before our work in developing quantum algorithms for estimating linear response functions. There are multiple works that we are aware of that report algorithms \cite{caiQuantumComputationMolecular2020, huangVariationalQuantumComputation2022,kosugiLinearresponseFunctionsMolecules2020,roggeroLinearResponseQuantum2019} for computing similar spectroscopic quantities to those we compute in the above sections.

In the work of Ref. \cite{caiQuantumComputationMolecular2020}, the response function is calculated by solving a related linear system with a quantum linear systems algorithms such as HHL \cite{harrowQuantumAlgorithmSolving2009}. They report a $\poly(N)$ complexity but do not discuss the degree of this polynomial. Their approach is to solve the linear system
\begin{equation}
\begin{aligned}
    A(\pm \omega) &:= H_0 - E_0 \mp (\omega + i \gamma)\\
    A(\omega)\ket{x} &= D\ket{\psi_0},
\end{aligned}
\end{equation}
by block encoding the dipole operator $D$. Solving this system for a sequence of $\omega$'s with spacing smaller than the damping parameter $\gamma$ and for each direction, one can obtain an approximation to the frequency dependent polarizability. One difficulty of this approach is that HHL requires many additional ancilla qubits to perform the classical arithmetic to compute the $\arccos$ function of the stored inverted eigenphases as well as a dependence on the condition number of the linear system. The scaling of HHL is $\widetilde{O}(\kappa^2/\epsilon)$, where $\kappa$ is the condition number of $A(\omega)$. However, this cost in quantum linear solvers can be improved by using more recent work based on quantum signal processing (QSP) and the discrete adiabatic theorem. These methods can solve the linear system problem with $\widetilde{O}(\kappa \log(1/\epsilon))$ \cite{costaOptimalScalingQuantum2021} queries to a block encoding of $A(\omega)$. Once the quantum state corresponding to a solution is prepared, measurements of the observable $\bra{\psi_0}D' D \ket{\psi_0}$ as well as $\bra{x}A^\dagger(\omega)A(\omega)\ket{x}$ are required to extract similar information to that in our algorithm. Since the number of terms in $A(\omega)$ is the same as the number of terms in $H$, $O(N^2)$, the number of terms is $A^\dagger A = O(N^4)$. This will in general require $O(N^4/\epsilon^2)$ repetitions of the circuit to obtain an estimate to $\bra{x}A^\dagger(\omega)A(\omega)\ket{x}$, assuming the variance is $O(1)$ \cite{McClean2016a}. We take as  estimates that $\kappa(A) = \widetilde{O}(||H||/\gamma) = \widetilde{O}(N^2/\gamma)$, and that the success probability for block-encoding for the dipole with subnormalization $\beta = O(N\eta)$ is improved with amplitude amplification. Using the HHL-based algorithm originally presented in Ref.~\cite{caiQuantumComputationMolecular2020}, the expected scaling is $\widetilde{O}(N\eta \kappa^2 N^4/\gamma^2\epsilon^3)$ which simplifies to $\widetilde{O}(\eta N^9/\gamma^2\epsilon^3)$ queries to the time evolution operator for the linear operator $A(\omega)$. Using more modern techniques based on block encoding, we obtain the scaling $\widetilde{O}(N^7\eta/\gamma\epsilon^2)$ queries to the block encoding of $U_H$ for estimating the response at a particular frequency using the algorithm in Ref.~\cite{caiQuantumComputationMolecular2020} to $\epsilon$ accuracy.

Another algorithm reported in Ref. \cite{kosugiLinearresponseFunctionsMolecules2020} presents a quantum algorithm for computing response quantities such as the polarizability in linear response theory. Similar to our work, the authors begin by assuming the observable $\mathcal{A}$ they wish to compute the response function for has a one-body decomposition in the electron spin-orbital basis
\begin{equation}
    \mathcal{A} = \sum_{m,m'}\mathcal{A}_{m,m'}a_m^\dagger a_m',
\end{equation}
and access to the ground state $\ket{\psi_0}$. Here, instead of directly block-encoding $\mathcal{A}$, the authors block-encode individual matrix elements of $\mathcal{A}$ and apply it to the ground state. For a particular matrix element $\mathcal{A}_{mn}a_m^\dagger a_n$, they decompose into an LCU using the JW or Bravyi-Kitaev transformation, with off-diagonal terms having a representation as the linear combination of $8$ unitary matrices. This block encoding will have a success probability that depends on the overlap, $\bra{\psi_0} a^\dagger_m a_n \ket{\psi_0}$, which will be small in regions where the $n \rightarrow m$ response is also small. Indeed, this block-encoding of a projector can be exponentially small or even zero if the particular transition is restricted by the problem symmetry. If the block-encoding is successful, then the state $a_m^\dagger a_n \ket{\psi_0}$ will be prepared in the system register. Then the authors apply QPE on the state $a_m^\dagger a_n \ket{\psi_0}$ and repeatedly measure until the phase estimation procedure successfully returns the bitstring corresponding to the energy of the associated transition. This work additionally numerically simulates their algorithm for the $C_2$ and $N_2$ molecules, and finds reasonable fitting to the response functions computed classically with full-configuration interaction. Although explicit circuits are presented which implement this algorithm, the computational complexity of this algorithm is not discussed in detail.

Additionally, a variational algorithm is proposed in Ref. \cite{huangVariationalQuantumComputation2022} to approximate the first-order response of molecules to an applied electric field. This algorithm operates similarly to Ref. \cite{caiQuantumComputationMolecular2020}, but instead uses a variational ansatz on the solution state $\ket{\psi(\boldsymbol{\theta})} \sim \ket{x}$  to minimize a cost function resulting from a variational principle for the polarizability. 
This computation also has a similar overhead in computing expectation values as Ref.~\cite{huangVariationalQuantumComputation2022}, but without guarantees of accuracy and circuit depth that can be provided by fully coherent quantum algorithms such as HHL, QPE, and QSP.

\subsection{Conclusions and Future Work}
In this work we have presented a quantum algorithm for computing general linear and non-linear absorption and emission processes which can be used in conjunction with simple classical post-processing to approximate response functions in molecules subjected to an external electric field treated semi-classically. Our work extends previous research which have approached this problem very differently. We compare our approach to these methods, and in cases where rigorous complexity estimates can be obtained, show that our approach scales more favorably than previous methods. Our algorithm approaches the difficult problem of obtaining high-fidelity estimates to the desired matrix elements with a quantum approach, and combines these results on a classical computer to obtain a parameterized representation of the response function for the given molecule. Furthermore, our algorithm is conceptually simple and employs a measurement scheme that reduces the computational effort over regions where the response is negligible. We showed that our quantum subroutine combined with the feedback from measurements is nearly optimal at determining excitation energies of the Hamiltonian. In addition, beyond the ancilla qubits for block-encoding the desired operation, our algorithm can be effectively reduced to a Hadamard test for non-unitary matrices which requires only a single additional ancilla; or a linear combination of such circuits. 

We additionally presented a method to find the dominant contributions to the response function with a systematically improvable procedure to resolve spectral regions where the response is most pronounced at the Heisenberg limit. Once the region of interest has been determined to the desired resolution, then one can either sample at the standard classical limit or use amplitude estimation procedure to estimate the response function magnitude in a particular spectral region. Finally, our algorithm can be used to compute response functions to any desired order through iterative applications of the algorithm presented for linear response, so long as the electric dipole approximation accurately describes the desired physics. Of course as one goes to higher order of response function, there is a combinatorial explosion in the number of terms that need to be computed that is unavoidable for any current algorithm. However, it is often the case that response functions are only calculated or experimentally determined up to some small constant order or a very small number of fixed terms.

Additionally, the subnormalization factors also grow exponentially in the desired order of response. Near the end of this work, we became aware of a work published recently \cite{dingQuantumMultipleEigenvalue2024}, which performs a similar task with a different approach. In their Heisenberg-limited scaling algorithm, they perform a Gaussian fitting based on Hadamard tests of the Hamiltonian simulation matrix performed for times drawn from a probability density formed by the Fourier coefficients. In both cases, it will be of interest to study the robustness of these methods to imperfections in the ground state preparation.

In many instances the methods for computing these response functions do not permit provable guarantees on the accuracy of the result. However, using the power of quantum computing, it is tractable to perform eigenvalue transformations on very high dimensional systems making this algorithm efficient. Future work investigating other paradigms beyond block encoding that do not have possible issues with success probabilities could be of interest. In lieu of this, it may be possible that additional analysis can be performed to improve the estimates on the success probability for block encoding the dipole operators, using spectroscopic principles such as the Thomas-Reiche-Kuhn sum rule. Furthermore, since our algorithm only includes electronic degrees of freedom, it is a topic of future work to study how the asymptotic scaling of this algorithm compares when including vibronic degrees of freedom.

\begin{acknowledgments}
{T.D.K was supported by a Siemens FutureMakers Fellowship and by the U.S. Department of Energy, Office of Science, Office of Advanced Scientific Computing Research under Award Number DE-SC0023273.} 
T.F.S. {was supported}as a Quantum
Postdoctoral Fellow at the Simons Institute for the Theory of Computing, supported by the U.S. Department of
Energy, Office of Science, National Quantum Information
Science Research Centers, Quantum Systems Accelerator. T.F.S. also acknowledges partial support from the 2019 Microsoft Research internship program. 
{L.K. was supported by the US Department of Energy, Office of Science, Office of Workforce Development for Teachers and Scientists, Office of Science Graduate Student Research (SCGSR) program. The SCGSR program is administered by the Oak Ridge Institute for Science and Education for the DOE under contract number DE-SC0014664.}
 {K. B. W. was supported by the National Science Foundation (NSF) Quantum Leap Challenge Institutes (QLCI) program through Grant No. OMA-2016245.}
%Finally, we wish to 
{TDK also wishes to acknowledge Leonardo Coello-Escalante at the University of California, Berkeley and Szabolcs Goger at the Univeristy of Luxembourg for insightful
discussions on classical computational spectroscopy methods, and thank Zhiyan Ding at the University of California, Berkeley for helpful discussions on the eigenstate filtering portion of this work as well as} for reviewing the manuscript prior to submission. We also thank the Institute for Pure and Applied Mathematics (IPAM) for the semester-long program “Mathematical and Computational Challenges in Quantum Computing” in Fall 2023 during which this work was discussed.
\end{acknowledgments}

\bibliography{ExcitedStates.bib}% Produces the bibliography via BibTeX.

\newpage
\onecolumngrid

\appendix
\section{\label{app: Filtering} Filtering}
Suppose we have access to an $(\alpha, m, 0)$ block encoding of the second quantized molecular Hamiltonian $U_H$. We assume that the initial state $\ket{\psi}$ has been prepared with overlap $p_0 = \left\vert \braket{\psi_0}{\psi}\right\vert^2$ and that the spectral gap $\Delta = \lambda_1-\lambda_0$, neither of which are exponentially small. With these assumptions, the eigenstate filtering method of \cite{tongQuantumEigenstateFiltering2022,linOptimalPolynomialBased2020a}, promises an efficient algorithm which can prepare the ground state with
\begin{equation}
    O\left(\frac{\alpha}{\sqrt{p_0} \Delta}\log\left(\frac{1}{\epsilon}\right)\right)
\end{equation}
queries to $U_H$, where $||\ket{\psi_0}-\ket{{\psi}}||^2 \leq \epsilon$. Given these constraints, we can apply this algorithm to efficiently transform $\ket{\psi}$ into a high fidelity approximation of the true ground state $\ket{\psi_0}$. Once the high fidelity ground state is in hand, we can additionally use eigenstate filtering to obtain a high-fidelity estimate to the ground state energy as well. Although the assumption of having an exact ground state is unlikely to be achieved in practice, this algorithm allows us to transform our initial state with $1/\text{poly}(n)$ overlap, to a state with arbitrarily close-to-1 overlap with the true ground state. This step is considered to be a preprocessing step, as it allows us to make the initial assumption of exact ground state preparation. 

Next, we define the scalar indicator function as
\begin{equation}
    1_{[a,b]}(x) :=\begin{cases}
        1 & x \in [a,b]\\
        0 & \text{ else}.
    \end{cases}
\end{equation} 
 To motivate this construction, suppose that we are able to exactly implement the indicator function of our Hamiltonian through its $(\alpha, m, 0)$ block encoding and QSP. We additionally assume access to $(\beta, k, 0)$ and $(\beta', k', 0)$ block encodings of two dipole operators $D$ and $D'$ respectively.  Given access to these three operations, we define the map $\mathcal{U}$ according to its action on the ground state and ancilla registers as
 \begin{equation}
     \mathcal{U}\ket{\psi_0}\ket{0}_{m}\ket{0}_{k}\ket{0}_{k'} = D' 1_{[a,b]}(H/\alpha\beta\beta') D\ket{\psi_0}\ket{0}_m\ket{0}_k\ket{0}_{k'} + \ket{\perp}.
 \end{equation}
 Now, we apply $\mathcal{U}$ in a controlled manner using the Hadamard test circuit shown in Fig.~\ref{fig:Hadamard Test appdx},
 \begin{figure}[ht]
    \centering
    \begin{tikzpicture}
    \node[scale = 1.5]
    {
        \begin{quantikz}
        \lstick{$\ket{0}$}&\qw&\gate{H}&\ctrl{0}\vqw{1}&\gate{H}&\meter{}\\
        \lstick{$\ket{\psi_0}$}&\qw&\qw&\gate[2]{\mathcal{U}}&\qw&\qw\rstick{$ \frac{D' 1_{[\omega \pm \Delta]}(H) D}{(\alpha+\omega)\beta\beta'} \ket{\psi_0}$}\\
        \lstick{$\ket{0}_{m+k+k'}$}&\qw&\qw& &\qw&\qw\rstick{$\bra{0}_{m+k+k'}$}
        \end{quantikz}
    };
    \end{tikzpicture}
    \caption{A circuit diagram description of the Hadamard test for non-unitary matrices that upon measuring the ancilla qubit provides an estimate of $\text{Re}\langle \psi_0 | \mathcal{U} | \psi_0 \rangle$}
    \label{fig:Hadamard Test appdx}
\end{figure}
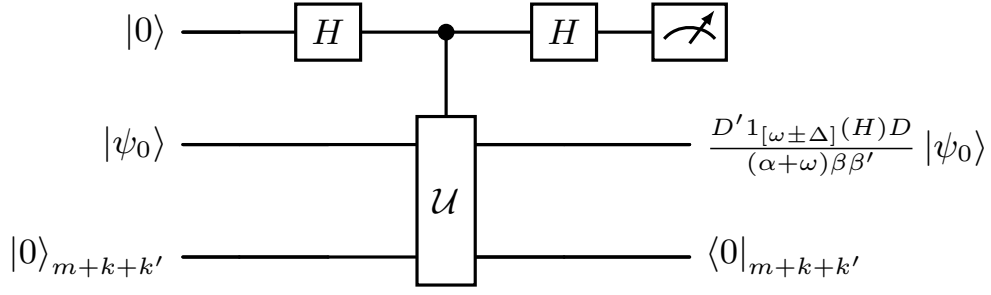
which (up to normalizing constants) performs the following transformations
\begin{align}&\ket{\psi_0}\ket{0}_{m}\ket{0}_{k}\ket{0}_{k'}\ket{0} + \ket{\psi_0}\ket{0}_{m}\ket{0}_{k}\ket{0}_{k'}\ket{1}\notag\\
    &\rightarrow \ket{\psi_0}\ket{0}_{m}\ket{0}_{k}\ket{0}_{k'}\ket{0} + \left(\frac{1}{\alpha\beta\beta'} D' 1_{[a,b]}(H) D\ket{\psi_0}\ket{0}_m\ket{0}_k\ket{0}_{k'} + \ket{\perp}\right)\ket{1}\notag\\
    &\rightarrow \left(\ket{\psi_0}\ket{0}_{m}\ket{0}_{k}\ket{0}_{k'} + \frac{1}{\alpha\beta\beta'}D' 1_{[a,b]}(H) D\ket{\psi_0}\ket{0}_m\ket{0}_k\ket{0}_{k'} + \ket{\perp}\right)\ket{0} +\ket{\perp}.
\end{align}
This will give a success probability of measuring the ancilla qubit in the $\ket{0}$ state as
\begin{equation}
    P(0) = \frac{1}{2}\left(1+ \frac{1}{\alpha\beta\beta'}\bra{\psi_0}D' 1_{[a,b]}(H) D\ket{\psi_0}\right).
\end{equation}

Now, we will expand and simplify $\bra{\psi_0}D' 1_{[a,b]}(H) D\ket{\psi_0}$ as the following
\begin{align*}
    \bra{\psi_0}D' 1_{[a,b]}(H) D\ket{\psi_0}
    &=\bra{\psi_0}D' 1_{[a,b]}(H) \sum_{n>0}d_{0,n}\ket{\psi_n}\\
    &=\bra{\psi_0}D'\sum_{n>0}d_{0,n}\sum_{\lambda_j \in [a,b]}\ket{\psi_j}\!\bra{\psi_j}\psi_n \rangle\\
    &=\bra{\psi_0}D'\sum_{n>0}d_{0,n}\sum_{\lambda_j \in [a,b]}\ket{\psi_j}\delta_{j,n}\\
    &=\bra{\psi_0}\sum_{\lambda_j \in [a,b], j\neq 0}d_{0,j}D'\ket{\psi_j}\\
    &=\bra{\psi_0}\sum_{\lambda_j \in [a,b], j\neq 0}\sum_{n \neq j}d_{0,j}d'_{j,n}\ket{\psi_n}\\
    &=\sum_{\lambda_j \in [a,b], j\neq 0}\sum_{n \neq j}d_{0,j}d'_{j,n}\braket{\psi_0}{\psi_n}\\
    &=\sum_{\lambda_j \in [a,b], j\neq 0}\sum_{n \neq j}d_{0,j}d'_{j,n}\delta_{0,n}\\
    &=\sum_{\lambda_j \in [a,b], j\neq 0}d_{0,j}d'_{j,0}.
\end{align*}
Note that if $D' = D$, then we recover the case for the response function corresponding to a diagonal term of the polarizability tensor.

\subsection{Application to higher-order polarizabilities}
Let us now consider how this algorithm works for higher-order polarizibilities. Consider a set of windows $W, W', W''$ and a set of directions for the dipole operators, $D^x, D^y, D^z$. We wish to apply the sequence 
\begin{align*}
    \mathcal{U}_2 = D^z 1_{W'}(H)D^y1_{W}(H)D^x 
\end{align*}
in a controlled manner, similar to what was done above. We know that as above, this corresponds to an LCU of $\mathcal{U}_2$ and $I$ applied to the ground state $\ket{\psi_0}$. The success probability of the procedure is related to the desired inner product in an analagous manner to the standard Hadamard test,
\begin{equation}
    P_{\text{succ}} = \frac{1}{2}\left(1 + \Re{\bra{\psi_0}\mathcal{U}_2\ket{\psi_0}}\right).
\end{equation}

The important term is $\bra{\psi_0}\mathcal{U}_2\ket{\psi_0}$ which we evaluate below.
\begin{align}
    \bra{\psi_0}\mathcal{U}_2\ket{\psi_0} &= \bra{\psi_0}D^z1_{W'}(H)D^y1_{W}(H)D^x \ket{\psi_0}\notag\\
    &=\bra{\psi_0}D^z1_{W'}(H)D^y1_{W}(H)\sum_{n>0}d^{x}_{0,n}\ket{\psi_n}\notag\\
    &=\bra{\psi_0}D^z1_{W'}(H)D^y \sum_{\lambda_n \in W}d^{x}_{0,n}\ket{\psi_n}\notag\\
    &=\bra{\psi_0}D^z1_{W'}(H) \sum_{\lambda_n \in W}d^{x}_{0,n}\sum_{k\neq n}d^{y}_{n,k}\ket{\psi_k}\notag\\
    &=\bra{\psi_0}D^z\sum_{\lambda_k \in W'}\sum_{\lambda_n \in W}d^{x}_{0,n}d^{y}_{n,k} \ket{\psi_k}\notag\\ 
    &=\bra{\psi_0}\sum_{l\neq k}\sum_{\lambda_k \in W'}\sum_{\lambda_n \in W}d^{x}_{0,n}d^{y}_{n,k} d^{z}_{k,l}\ket{\psi_l}\notag\\ 
    &=\sum_{\lambda_k \in W'}\sum_{\lambda_n \in W}d^{x}_{0,n}d^{y}_{n,k} d^{z}_{k,0}
\end{align}
which is exactly the kind of terms we need for the first hyperpolarizability. 

It is straightforward to show that repeatedly applying filters and dipole operators in the manner as above that this procedure can be used to compute $n$th-order moments of the dipole operator and extract frequency-dependent quantities for arbitrary regions of the $n$-dimensional excitation energy space.

\subsection{Bounding the error in approximate indicator function}
We will need to obtain bounds to a polynomial approximation of a (smoothed) indicator function through a mollified indicator function $\tilde{1}_{[a,b]}(x)$. Then, once we obtain bounds for the degree of polynomial needed to approximate the indicator to the desired precision, we can apply the machinery of quantum signal processing to implement the matrix function $\tilde{1}_{[a,b]}(H)$. There are two points where the jump discontinuity occurs, namely at $x = a$ and $x = b$, we will consider the regions $[a-\delta, a+\delta]$ and $[b-\delta, b+\delta]$ around those points. We will follow the construction given in \cite{lowHamiltonianSimulationUniform2017}, where the sign and indicator functions are approximated by $\text{erf}(k x)$, for $k > 1$. 

First, we need to construct a polynomial approximation to the function $\text{erf}(kx)$. We can cite Corollary 4 from Appendix A of \cite{lowHamiltonianSimulationUniform2017}.

\begin{theorem}[Polynomial Approximation to the error function $\text{erf}(kx)$ (Corollary 4 \cite{lowHamiltonianSimulationUniform2017}]
    For every $k > 0$, $0<\epsilon \leq 1$, the odd polynomial $p_{\erf,k,n}$ of odd degree $n = \mathcal{O}\left(\sqrt{\left(k^2 + \log\left(1/\epsilon\right)\right)\log(1/\epsilon)}\right)$ satisfies
    \begin{equation}
    \begin{aligned}
        p_{\erf,k,n} &= \frac{2k e^{-k^2/2}}{\sqrt{\pi}}\left(I_0(k^2/2)x + \sum_{j=1}^{(n-1)/2}I_j(k^2/2)(-1)^j \left(\frac{T_{2j+1}(x)}{2j+1}-\frac{T_{2j-1}(x)}{2j-1}\right)\right),\\
        \epsilon_{\erf,k,n} &= \max_{x\in[-1,1]|}|p_{\erf,k,n}(x) - \erf(kx)|\leq \frac{4k}{\sqrt{\pi}n}\epsilon_{\text{gauss},k,n-1}\leq \epsilon
    \end{aligned}
    \end{equation}
    Where $I_j(x)$ is the modified Bessel function of the first kind and $\epsilon_{\text{gauss},k,n}$ is the maximum error of a degree $n$ polynomial approximation to the Gaussian $e^{-(k x)^2}$ and $T_j$ is the $j$th Chebyshev polynomial of the first kind.
    \label{thm: polynomial approximation to erf(x)}
\end{theorem}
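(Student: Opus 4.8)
The plan is to realize $p_{\erf,k,n}$ as the term-by-term integral of a truncated Chebyshev expansion of a Gaussian, and then read off both the explicit coefficient formula and the error estimate from known bounds on polynomial approximation of $e^{-k^2x^2}$. First I would use the change of variables $t=ku$ in $\erf(y)=\frac{2}{\sqrt\pi}\int_0^y e^{-t^2}\,dt$ to write $\erf(kx)=\frac{2k}{\sqrt\pi}\int_0^x e^{-k^2u^2}\,du$. The integrand is even, so any polynomial approximation of it integrates to an \emph{odd} polynomial approximation of $\erf(kx)$, which already matches the claimed parity and explains why the degree can be taken odd.

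Next I would expand the Gaussian in Chebyshev polynomials. Writing $u=\cos\phi$ and $u^2=\tfrac12(1+T_2(u))$ gives $e^{-k^2u^2}=e^{-k^2/2}\,e^{-\frac{k^2}{2}\cos 2\phi}$, and the Jacobi--Anger / Bessel generating-function identity $e^{z\cos\theta}=I_0(z)+2\sum_{j\ge1}I_j(z)\cos(j\theta)$ with $z=-k^2/2$, $\theta=2\phi$, together with $I_j(-z)=(-1)^jI_j(z)$ and $T_{2j}(u)=\cos(2j\phi)$, yields
\begin{equation*}
    e^{-k^2u^2}=e^{-k^2/2}\Bigl(I_0(k^2/2)+2\sum_{j\ge1}(-1)^jI_j(k^2/2)\,T_{2j}(u)\Bigr).
\end{equation*}
Integrating with $\int_0^x T_0\,du=x=T_1(x)$ and $\int_0^x T_{2j}(u)\,du=\tfrac12\bigl(\tfrac{T_{2j+1}(x)}{2j+1}-\tfrac{T_{2j-1}(x)}{2j-1}\bigr)$ --- the latter exact because odd Chebyshev polynomials vanish at $0$ --- produces the exact series for $\erf(kx)$, and truncating the inner sum at $j=(n-1)/2$ defines $p_{\erf,k,n}$, reproducing precisely the stated expression.

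For the error bound I would estimate the discarded tail. Since $|T_m(x)|\le1$ on $[-1,1]$,
\begin{equation*}
    |p_{\erf,k,n}(x)-\erf(kx)|\le\frac{2k e^{-k^2/2}}{\sqrt\pi}\sum_{j>(n-1)/2}I_j(k^2/2)\Bigl(\frac{1}{2j+1}+\frac{1}{2j-1}\Bigr)\le\frac{4k e^{-k^2/2}}{\sqrt\pi\,n}\sum_{j>(n-1)/2}I_j(k^2/2),
\end{equation*}
using $2j-1\ge n$ for $j>(n-1)/2$. The same tail, with the factors $\tfrac{1}{2j\pm1}$ removed, governs $\epsilon_{\mathrm{gauss},k,n-1}$, which equals $2e^{-k^2/2}\bigl\|\sum_{j>(n-1)/2}(-1)^jI_j(k^2/2)T_{2j}\bigr\|_\infty$; because $I_j(k^2/2)$ decays superexponentially once $j\gtrsim k^2$, the full tail sum is within a constant of its leading term, so $e^{-k^2/2}\sum_{j>(n-1)/2}I_j(k^2/2)=O(\epsilon_{\mathrm{gauss},k,n-1})$, and tracking constants gives exactly $\epsilon_{\erf,k,n}\le\frac{4k}{\sqrt\pi\,n}\epsilon_{\mathrm{gauss},k,n-1}$ (this $\tfrac{1}{n}$ gain over a direct $\erf$ approximation is the improvement referenced in the main text). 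Finally I would invoke the standard polynomial-approximation bound for the Gaussian --- $\epsilon_{\mathrm{gauss},k,n-1}\le\epsilon$ whenever $n-1=O\bigl(\sqrt{(k^2+\log(1/\epsilon))\log(1/\epsilon)}\bigr)$, itself obtained from $e^{-k^2/2}I_j(k^2/2)\lesssim e^{-k^2/2}(k^2/4)^j/j!$ and Stirling --- to conclude the claimed degree, noting $\tfrac{4k}{\sqrt\pi\,n}\le1$ in this regime so the final $\le\epsilon$ follows.

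The main obstacle I anticipate is the quantitative Bessel-function bookkeeping: securing the clean constant $\tfrac{4k}{\sqrt\pi\,n}$ rather than a loose one, and pinning down the square-root degree scaling, both of which require non-asymptotic estimates of $I_j(k^2/2)$ and its partial tail sums separately in the regimes $j\lesssim k^2$ and $j\gtrsim k^2$. The remaining ingredients --- the integral representation, the Jacobi--Anger expansion, and term-by-term integration --- are routine, if slightly tedious, computations.
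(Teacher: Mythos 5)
The paper itself gives no proof of this statement --- it is quoted directly as Corollary 4 of the cited Low--Chuang reference --- so the relevant comparison is with the proof given there, and your reconstruction follows the same route: the integral representation $\erf(kx)=\frac{2k}{\sqrt{\pi}}\int_0^x e^{-k^2u^2}\,du$, the Jacobi--Anger/Bessel--Chebyshev expansion of the Gaussian, term-by-term integration via $\int_0^x T_{2j}(u)\,du=\tfrac12\bigl(\tfrac{T_{2j+1}(x)}{2j+1}-\tfrac{T_{2j-1}(x)}{2j-1}\bigr)$, truncation at $j=(n-1)/2$, and a tail estimate using $|T_m|\le 1$ and $2j-1\ge n$. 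This reproduces the stated coefficient formula and the $\tfrac{4k}{\sqrt{\pi}n}$ prefactor exactly as you describe.

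The one step whose justification as written would not survive scrutiny is your comparison of the Bessel tail sum with $\epsilon_{\text{gauss},k,n-1}$. You argue that $I_j(k^2/2)$ decays superexponentially once $j\gtrsim k^2$, so the tail is within a constant of its leading term; but the truncation index $(n-1)/2$ is typically far below $k^2$ (for fixed $\epsilon$ and large $k$ the stated degree is $n=\Theta(k)$), and in that regime consecutive $I_j(k^2/2)$ are of comparable size, so the tail is not dominated by its first term --- and even granting the decay, you would still need to bound the sup norm of the alternating-sign tail from below by that leading term. Both issues vanish with a one-line replacement: evaluate the truncated-Gaussian error at $x=0$, where $T_{2j}(0)=(-1)^j$ aligns all signs, giving $\epsilon_{\text{gauss},k,n-1}\ge 2e^{-k^2/2}\sum_{j>(n-1)/2}I_j(k^2/2)$, which is precisely (indeed twice) what your tail estimate requires, with no asymptotic Bessel analysis. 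One should also read $\epsilon_{\text{gauss},k,n-1}$ as the error of this specific truncated Chebyshev series (as in the cited lemma), not a minimax error. With that fix, plus the cited Gaussian-approximation degree bound, your argument is complete and matches the source's proof.
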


We also need a result for the rate at which the error function $\erf(kx)$ converges to the step function as we increase $k$. This is provided by Lemma 10 of Appendix A of \cite{lowHamiltonianSimulationUniform2017}.
\begin{lemma}[Entire Approximation of the sign function $\text{sgn}(x)$ (Lemma 10 Ref. \cite{lowHamiltonianSimulationQubitization2019})]
For every $\delta > 0, x \in \mathbb{R}, \epsilon \in (0,\sqrt{2/e\pi})$, let $k = \frac{\sqrt{2}}{\delta}\log^{1/2}(2/\pi \epsilon^2)$. Then the function $f_{\text{sgn},\delta,\epsilon}(x) = \erf(kx)$ satisfies
\begin{equation}
\begin{cases}
    1 &\geq |f_{\text{sgn},\delta,\epsilon}(x)|,\\
    \epsilon &\geq \max_{|x|\geq \delta/2}|f_{\text{sgn},\delta,\epsilon}(x) - \text{sgn}(x)|,
\end{cases}
\end{equation}
    and
    \begin{equation}
        \text{sgn}(x) = \begin{cases}
            1, \hspace{.4cm} & x>0,\\
            -1, \hspace{.4cm} & x<0,\\
            \frac{1}{2}, \hspace{.4cm} & x = 0.
        \end{cases}
    \end{equation}
    \label{lem: erf convergence}
    Here $\delta$ is the parameter that controls the interval over which the transition occurs.
\end{lemma}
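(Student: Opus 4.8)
The plan is to reduce both displayed inequalities to elementary properties of the Gaussian error function. First note that $f_{\text{sgn},\delta,\epsilon}(x) = \erf(kx)$ is an entire function of $x$ (whence the lemma's title) and is odd, as is $\text{sgn}(x)$; so it suffices to argue for $x \ge 0$. The bound $|f_{\text{sgn},\delta,\epsilon}(x)| \le 1$ is then immediate from $\erf(y) = \frac{2}{\sqrt\pi}\int_0^y e^{-t^2}\,dt$ together with $\int_0^\infty e^{-t^2}\,dt = \sqrt\pi/2$, which pins the range of $\erf$ to $(-1,1)$.

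For the second inequality I would restrict to $x \ge \delta/2$ (the region $x \le -\delta/2$ follows by oddness), where $\text{sgn}(x) = 1$ and the quantity to control is the complementary error function $1 - \erf(kx) = \frac{2}{\sqrt\pi}\int_{kx}^\infty e^{-t^2}\,dt$. Step one is monotonicity: since $y \mapsto 1-\erf(y)$ has derivative $-\frac{2}{\sqrt\pi}e^{-y^2} < 0$, it is decreasing, so for every $x \ge \delta/2$ we have $1-\erf(kx) \le 1-\erf(k\delta/2)$, reducing the problem to the single value $y_0 := k\delta/2$. Step two is the sharp Gaussian tail bound: substituting $t = y+s$ gives $1-\erf(y) = \frac{2}{\sqrt\pi}e^{-y^2}\int_0^\infty e^{-2ys-s^2}\,ds \le \frac{2}{\sqrt\pi}e^{-y^2}\int_0^\infty e^{-2ys}\,ds = \frac{e^{-y^2}}{y\sqrt\pi}$ for all $y > 0$.

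Step three is to substitute the prescribed $k = \frac{\sqrt2}{\delta}\log^{1/2}(2/(\pi\epsilon^2))$, so that $y_0 = k\delta/2 = \frac{1}{\sqrt2}\log^{1/2}(2/(\pi\epsilon^2))$, hence $y_0^2 = \frac12\log(2/(\pi\epsilon^2))$ and $e^{-y_0^2} = \sqrt{\pi/2}\,\epsilon$. Feeding this into the tail bound yields $1-\erf(k\delta/2) \le \frac{\sqrt{\pi/2}\,\epsilon}{y_0\sqrt\pi} = \frac{\epsilon}{\sqrt2\,y_0} = \frac{\epsilon}{\log^{1/2}(2/(\pi\epsilon^2))}$. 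Finally, the hypothesis $\epsilon \in (0,\sqrt{2/(e\pi)})$ is precisely what forces $2/(\pi\epsilon^2) > e$, i.e.\ $\log(2/(\pi\epsilon^2)) > 1$, so the denominator exceeds $1$ and the bound is strictly less than $\epsilon$, as claimed.

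The only genuine subtlety --- and the step I would be most careful about --- is the choice of tail bound: the crude estimate $1-\erf(y) \le e^{-y^2}$ is not good enough, since it would only yield $\sqrt{\pi/2}\,\epsilon \approx 1.25\,\epsilon$; one really needs the $\frac{e^{-y^2}}{y\sqrt\pi}$ form, whose extra $1/(y\sqrt\pi)$ factor is tamed by the lower bound $y_0 > 1/\sqrt2$ that comes for free from the range restriction on $\epsilon$. Everything else is routine; the statement is exactly Lemma 10 of Ref.~\cite{lowHamiltonianSimulationQubitization2019}, and the argument above reproduces it.
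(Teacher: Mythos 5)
Your proof is correct: the monotonicity reduction to $y_0=k\delta/2$, the tail bound $1-\erf(y)\le e^{-y^2}/(y\sqrt{\pi})$, and the observation that $\epsilon<\sqrt{2/(e\pi)}$ forces $\log^{1/2}\left(2/(\pi\epsilon^2)\right)>1$ fit together exactly as claimed, and you rightly flag that the crude bound $1-\erf(y)\le e^{-y^2}$ would fall short by the factor $\sqrt{\pi/2}$. The paper itself gives no proof of this lemma --- it is imported verbatim by citation from Ref.~\cite{lowHamiltonianSimulationQubitization2019} --- and your argument is essentially the standard Gaussian-tail (Chernoff-type) derivation used there, so nothing further is needed.
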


It is easy to show that given an approximation to the sign function, the symmetric rectangle function can be implemented as a sum of two approximate sign functions. Since by Lemma \ref{lem: erf convergence}, we have that $k = O\left(\frac{\sqrt{\log(1/\epsilon^2)}}{\delta}\right)$ and by Theorem \ref{thm: polynomial approximation to erf(x)} that the polynomial degree 
$d = O\left(\sqrt{\log(1/\epsilon)\left(k^2+\log(1/\epsilon)\right)}\right)$, we can choose a degree 
\begin{equation}
    d = O\left(\log(1/\epsilon)\delta^{-1}\right)
\end{equation}
to satisfactorily approximate the error function and in turn the indicator function.

\begin{figure}
    \centering
    \includegraphics[width=.6\paperwidth]{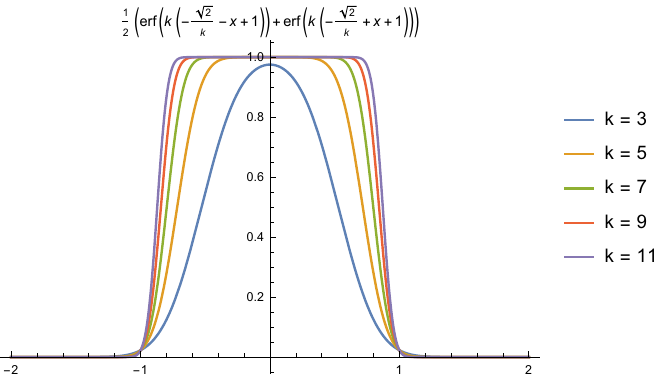}
    \caption{Error function approximation to the indicator function on $[-1,1]$ as a function for various choices of $k = \widetilde{O}(1/\delta)$.}
    \label{fig:indicator appx}
\end{figure}

Then, given $\epsilon/2$-accurate approximations to the polynomial approximation to the error function, which in turn is being used to approximate the sign function, we can construct a $\epsilon$-accurate approximation to the indicator function corresponding to some symmetric window of width $\Delta$, with a shift $\kappa = \frac{\Delta+\delta}{2}$ by
\begin{equation}
    1_{[-\Delta, \Delta]}(x)\sim \frac{1}{2}\left(\erf(k(x+\kappa))+\erf(k(\kappa -x))\right) 
\end{equation}
for sufficiently large $k$. Furthermore by Lemma 30 of \cite{gilyenQuantumSingularValue2019a}, we are guaranteed that this polynomial is even, therefore it can be constructed with only a single ancilla qubit with QSP.

We would like to study the error in the region near a jump. In addition, we would like to know how the polynomial approximation error propagates into the response we are planning to approximate. These are characterized by the following theorems.

\begin{lemma}[Bounding the error around the jump]
    The error in approximating the sign function on the interval $[-\delta, \delta]$ with the error function $\erf(kx)$ with $k = \widetilde{O}(1/\delta)$ accumulates at most linearly in the interval $[-\delta, \delta]$.
    \label{lem: Bounding error around jump}
\end{lemma}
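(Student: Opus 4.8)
The plan is to reduce the statement to a one–dimensional estimate on the scalar functions and then exploit the explicit form of $\erf$. First I would recall from Lemma~\ref{lem: erf convergence} that the relevant rescaling is $k=\tfrac{\sqrt2}{\delta}\log^{1/2}(2/\pi\epsilon^2)$, so that $1/k=\widetilde O(\delta)$, and that $|\erf(kx)|\le 1$ for all $x\in\mathbb R$. Since both $\erf(k\cdot)$ and $\mathrm{sgn}(\cdot)$ are odd, it suffices to control the pointwise error $e(x):=|\erf(kx)-\mathrm{sgn}(x)|=\mathrm{erfc}(kx)$ on $x\in(0,\delta]$, where $\mathrm{erfc}$ denotes the complementary error function; here $e$ is positive, monotonically decreasing, and bounded by $e(0^+)=1$. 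The point of the lemma is that this $O(1)$ pointwise error near the jump does not translate into an $O(1)$ contribution once we look at the relevant (accumulated) quantity.

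The key step is to bound the accumulated error over the transition window. Using oddness and the substitution $u=kx$, together with the elementary identity $\int_0^\infty\mathrm{erfc}(u)\,du=1/\sqrt\pi$,
\begin{equation}
    \int_{-\delta}^{\delta}\bigl|\erf(kx)-\mathrm{sgn}(x)\bigr|\,dx
    \;=\;2\int_0^{\delta}\mathrm{erfc}(kx)\,dx
    \;\le\;\frac{2}{k}\int_0^{\infty}\mathrm{erfc}(u)\,du
    \;=\;\frac{2}{k\sqrt\pi}\;=\;\widetilde O(\delta).
\end{equation}
More generally, the running error $x\mapsto\int_{-\delta}^{x}e(t)\,dt$ has derivative $e(x)\le 1$, hence is $1$-Lipschitz on $[-\delta,\delta]$: the deviation ``accumulates'' at most linearly in the interval, which is exactly the assertion. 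Note that it is precisely the estimate $k=\widetilde O(1/\delta)$ from Lemma~\ref{lem: erf convergence} — rather than a polynomially worse rescaling — that makes the accumulated error scale \emph{linearly} (up to logs) in $\delta$.

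Finally I would fold in the polynomial approximation: replacing $\erf(kx)$ by the degree-$d$ polynomial $p_{\erf,k,d}$ of Theorem~\ref{thm: polynomial approximation to erf(x)} adds a uniform error of at most $\epsilon$, contributing at most $2\delta\epsilon$ over $[-\delta,\delta]$ — again linear in $\delta$ and absorbed into the $\widetilde O(\delta)$ bound — while $|p_{\erf,k,d}(x)|\le|\erf(kx)|+\epsilon\le 1+\epsilon$ guarantees the approximant stays within $1$ of the QSP normalization bound throughout the transition window (and exact compliance with $|P(x)|\le 1$ is recovered by rescaling by $1/(1+\epsilon)$). There is no serious analytic obstacle here; the only care needed is (i) fixing the meaning of ``accumulates'' as the integrated deviation, equivalently the Lipschitz statement for the running error, and (ii) confirming that the approximating polynomial remains bounded inside the transition region so that nothing pathological occurs when it is passed to QSP — both of which follow directly from the two cited results.
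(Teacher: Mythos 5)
Your proof is correct, and it gets to the same quantitative conclusion by a noticeably more direct route than the paper. The paper fixes the accumulated error $g(\delta)=\int|\erf(kx)-\mathrm{sgn}(x)|\,dx$ over the transition window, substitutes $k\sim 1/\delta$, rescales $y=x/\delta$, and then differentiates $g$ with respect to $\delta$ (with an auxiliary $\epsilon\to 0$ cutoff) to show $dg/d\delta=\int_0^1|\erf(y)-1|\,dy\approx 0.514=O(1)$, hence $g(\delta)=O(\delta)$, backing this up with a numerical check; the constant there is evaluated numerically rather than in closed form. You instead bound the same integrated deviation in one step via parity and the exact identity $\int_0^\infty\mathrm{erfc}(u)\,du=1/\sqrt{\pi}$, obtaining the explicit bound $2/(k\sqrt{\pi})$, which with $k=\sqrt{2}\,\delta^{-1}\log^{1/2}(2/\pi\epsilon^2)$ is in fact slightly sharper than $O(\delta)$ (it gains a $\log^{-1/2}$ factor), and you make the meaning of ``accumulates linearly'' precise through the $1$-Lipschitz property of the running integral, using $|\erf(kx)-\mathrm{sgn}(x)|\le 1$. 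What your approach buys is an exact constant, no limiting/differentiation argument, and no reliance on numerics; what the paper's buys is an explicit statement about the rate of growth of the accumulated error as a function of the smoothing width $\delta$, which it then illustrates numerically. Your final paragraph about the degree-$d$ polynomial replacing $\erf(kx)$ and the QSP normalization is not needed for this lemma (the paper defers that bookkeeping to Corollary~\ref{cor:cor bounding response fn error}), but it is consistent with how the lemma is used downstream and introduces no error.
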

\begin{proof}
    Consider the following,
    \begin{equation}
        \int_{-\delta}^\delta |\erf(kx) - \text{sgn}(x)|dx
    \end{equation}
    which can be expressed as 
    \begin{equation}
        \lim_{\epsilon\rightarrow 0}\int_{-\delta}^{-\epsilon} |\erf(kx) + 1|dx +\int_{\epsilon}^{\delta}|\erf(kx)-1|dx,
    \end{equation}
    for $\delta>\epsilon>0$.
    We would like to study the rate at which the error in the region grows as a function of $\delta$. Let
    \begin{equation}
        f(kx) = |\erf(kx) - 1|
    \end{equation}
    and
    \begin{equation}
        g(\delta; \epsilon) = \int_{\epsilon}^{\delta}f(kx) dx.
    \end{equation}
    However, $f(kx)$ has implicit $\delta$ dependence since we always choose our $k = \widetilde{O}(\delta^{-1})$. If we substitute this estimate into $k$, we have that 
    \begin{equation}
        f(x/\delta) := |\erf(x/\delta) - 1| \sim f(kx).
    \end{equation}

    Now, let $y = x/\delta$, so we can write
    \begin{equation}
        g(\delta;\epsilon) = \int_{\epsilon/\delta}^{1}\delta f(y) dy.
    \end{equation}
    Then, if we calculate $\frac{d}{d\delta} g(\delta;\epsilon)$ and apply the fundamental theorem of calculus, we find
    \begin{equation}
        \frac{d g(\delta;\epsilon)}{d \delta } =\frac{\epsilon}{\delta}f(\epsilon/\delta) + \int_{\epsilon/\delta}^1 f(y) dy.
        \label{eq: derivative}
    \end{equation}

    Assuming the limit and derivatives both exist, since the limit with respect to $\epsilon$ is independent of the derivative with respect to $\delta$ we have that $\epsilon\rightarrow 0$ yields
    \begin{equation}
        \frac{d g(\delta;0)}{d \delta } =  \int_{0}^1 f(y) dy.
        \label{eq: limit}
    \end{equation}
    Then, since
    \begin{equation}
        \int_{0}^1 f(y) dy = \int_{0}^1 |\erf(y) - 1| dy \simeq .51394 = O(1),
        \label{eq: integral}
    \end{equation}
    We have that 
    \begin{equation}
        \frac{d g(\delta;0)}{d \delta } = O(1) \implies g(\delta;0) = O(\delta).
    \end{equation}
    Therefore, the error grows linearly in the $\delta$-region around the jump. This result is verified via a numerical experiment shown in Fig. \ref{fig:delta err}.
    The bound for the interval from $[-\delta, -\epsilon]$ for the function $|\erf(kx) + 1|$ is similar.
\end{proof}

\begin{figure}[h]
    \centering
    \includegraphics[width=.6\paperwidth]{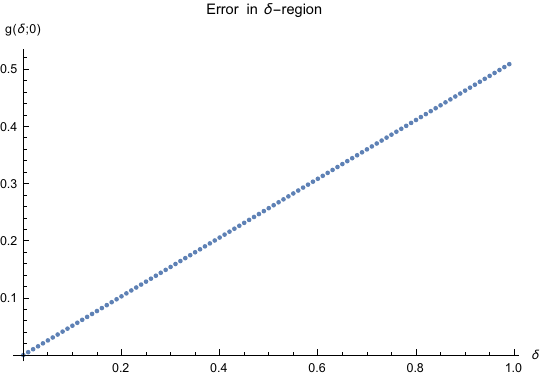}
    \caption{$g(\delta; \epsilon = .0001)$, which measures error accumulation of the error function approximation to the sign function over the smoothing interval $[.0001, \delta]$ with $\delta \in [.0002, 1]$. Notice the linear growth in the $L_2$ error in the $\delta-$interval around the jump as $\delta$ increases, verifying the estimates of Lem.~\ref{lem: Bounding error around jump} and extending the estimates of Lem.~\ref{lem: erf convergence} to the $\delta-$region around the jump.}
    \label{fig:delta err}
\end{figure}

With these results, we can now bound the error in the response function within some symmetric window $W = [w- \Delta/2, w+\Delta/2]$ after applying the dipole operators.

\begin{corrollary}[Bounding the error in estimating localized response properties using the approximate indicator function] 
Let $1_{W}(x)$ be the indicator function on the interval $W = [w-\Delta/2, w+\Delta/2]$ and $\widetilde{1}_{W}$ be the approximate indicator function on the interval $W$ as generated by the polynomial approximations to $\kappa = \left(\Delta+\delta\right)/2$ $\erf(k(x+\kappa))$ and $\erf(k(-x+\kappa))$ provided by theorem \ref{thm: polynomial approximation to erf(x)}, choosing $k$ appropriately as per lemma \ref{lem: erf convergence}. Additionally, let $D$ and $D'$ be dipole operators which have the subnormalization factors $\beta$ and $\beta'$ respectively, with $\beta' = \Theta(\beta)$. We assume that in each region there are only $\widetilde{O}(1)$ many non-zero (or non-exponentially small) amplitudes . Then, for any $\epsilon > 0$ there exists $d, \delta, N_s$ where $d$ is the degree of polynomial approximation, $\delta = O(k^{-1}\log(1/\epsilon^2))$ is the region around around the point where the jump occurs that is smoothly approximated, and $N_s$ is the number of samples needed so that we can approximate 
\begin{equation}
    \mu_{w} :=  \left\vert\bra{\psi_0}D'1_{w}(H)D\ket{\psi_0}\right\vert
\end{equation}
to accuracy $\epsilon$ in the sense that for some approximation $\hat{\mu}_{w}$ that
\begin{equation}
    |\hat{\mu}_{w} - \mu_{w}| < \epsilon.
\end{equation}

Furthermore, for a window $|w| = O(\gamma)$ there exists a quantum algorithm that makes
\begin{equation}
    O\left(\frac{\alpha^2\beta^2\log(\beta^2/\epsilon)}{\gamma \epsilon}\right)
\end{equation}
queries to the block encoding of the unperturbed Hamiltonian $H_0$ with spectrum contained in the bounded interval $[-\alpha, \alpha]$.

\label{cor:cor bounding response fn error}
\end{corrollary}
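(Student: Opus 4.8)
The plan is to split the total error $|\hat\mu_w-\mu_w|$ into a deterministic part coming from replacing the exact indicator $1_W$ by the mollified polynomial $\widetilde 1_W$, and a statistical part coming from estimating the resulting expectation on the quantum device, and to drive each below $\epsilon/2$. Writing $\widetilde\mu_w:=|\bra{\psi_0}D'\widetilde 1_W(H)D\ket{\psi_0}|$ and letting $\hat\mu_w$ be the amplitude-estimation output, the triangle inequality gives $|\hat\mu_w-\mu_w|\le|\mu_w-\widetilde\mu_w|+|\widetilde\mu_w-\hat\mu_w|$, exactly as in the proof of the bin-height theorem above; so it suffices to choose a polynomial degree $d$ (equivalently a smoothing width $\delta$) making the first term $\le\epsilon/2$ and a sample count $N_s$ making the second $\le\epsilon/2$.

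For the approximation error I would pass to the eigenbasis of $H$. Since $D\ket{\psi_0}=\sum_n d_{0,n}\ket{\psi_n}$ and $1_W(H)-\widetilde 1_W(H)=\sum_j\big(1_W(\lambda_j)-\widetilde 1_W(\lambda_j)\big)\ket{\psi_j}\bra{\psi_j}$, we have
\[
|\mu_w-\widetilde\mu_w|\;\le\;\sum_j |d_{0,j}|\,|d'_{j,0}|\,\big|1_W(\lambda_j)-\widetilde 1_W(\lambda_j)\big| .
\]
By Theorem~\ref{thm: polynomial approximation to erf(x)} and Lemma~\ref{lem: erf convergence}, the factor $|1_W(\lambda_j)-\widetilde 1_W(\lambda_j)|$ is at most the polynomial error $\epsilon_{\mathrm{poly}}$ whenever $\lambda_j$ lies outside the two $\delta$-neighbourhoods of the jump points $w\pm\Delta/2$, and is $O(1)$ inside them; Lemma~\ref{lem: Bounding error around jump} guarantees that the smoothing error grows at most linearly in $\delta$, so nothing logarithmic is lost there. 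Because each spectral region contains only $\widetilde O(1)$ transitions with non-negligible amplitude, one may shift the window edges by $O(\delta)$ so that both jump points sit at distance $>\delta$ from every such transition; this replaces the target by $\mu_{w'}$ with $|w-w'|=O(\delta)=o(\gamma/\alpha)$, well inside the resolution we are aiming for, and now $|1_W(\lambda_j)-\widetilde 1_W(\lambda_j)|\le\epsilon_{\mathrm{poly}}$ for every contributing $j$. Cauchy--Schwarz (using that the dipole operators are Hermitian with $\|D\ket{\psi_0}\|\le\beta$ and $\|D'\ket{\psi_0}\|\le\beta'$) then gives $|\mu_{w'}-\widetilde\mu_{w'}|\le\epsilon_{\mathrm{poly}}\beta\beta'=\Theta(\epsilon_{\mathrm{poly}}\beta^2)$. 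Forcing this below $\epsilon/2$ gives $\epsilon_{\mathrm{poly}}=O(\epsilon/\beta^2)$, and substituting $\log(1/\epsilon_{\mathrm{poly}})=O(\log(\beta^2/\epsilon))$ into the degree estimate $d=O(\delta^{-1}\log(1/\epsilon_{\mathrm{poly}}))$ that follows from the two $\erf$ results yields $d=O\big(\delta^{-1}\log(\beta^2/\epsilon)\big)$.

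It remains to fix $\delta$ and count queries. To resolve a window of width $|W|=O(\gamma)$ we need $\delta<\Delta<\gamma$ on the true spectrum; since $H$ is accessed through its $(\alpha,m,0)$ block encoding, the smoothing lives on the rescaled spectrum, so $\delta=\Theta(\gamma/\alpha)$ and hence $d=O\big((\alpha/\gamma)\log(\beta^2/\epsilon)\big)$. Each query to the QSP circuit realizing $\widetilde 1_W(H)$ costs $O(d)$ queries to the shifted block encoding $U_{H_0-\omega I}$ of Fig.~\ref{fig:ShiftedHamCircuit}, each of which is a single query to $U_H$. The composite $\mathcal U=U_{D'}U_{\widetilde 1_W(H)}U_D$ is a block encoding with subnormalization $\zeta_\omega=(\alpha+1+\omega)\beta\beta'=O(\alpha\beta^2)$, so the relevant Hadamard-test success amplitude is rescaled by $\zeta_\omega$, the success probability is $O(1/\zeta_\omega^2)$, and robust amplitude amplification restores $\Theta(1)$ success probability using $O(\zeta_\omega)=O(\alpha\beta^2)$ repetitions of $\mathcal U$; amplitude estimation (phase estimation on the amplified $\mathcal U$) then reaches statistical error $\widetilde\epsilon=O(\epsilon)$ with a further $O(1/\epsilon)$ controlled applications, exactly as in the bin-height theorem. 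Multiplying the three factors, the number of queries to $U_H$ is $O(d)\cdot O(\alpha\beta^2)\cdot O(1/\epsilon)=O\big(\alpha^2\beta^2\log(\beta^2/\epsilon)/(\gamma\epsilon)\big)$, the claimed bound.

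The step I expect to be the real obstacle is the control of eigenvalues lying in the $O(\delta)$-wide smoothed transition regions at $w\pm\Delta/2$: a naive bound on a single such term is only $|d_{0,j}d'_{j,0}|=O(\beta^2)$, which alone would ruin the estimate, so one must genuinely invoke the $\widetilde O(1)$-sparsity hypothesis to nudge the window onto transition-free boundaries (or else accept that $\mu_w$ is defined only up to an $O(\delta)$ ambiguity in the endpoints of $W$, a relaxation consistent with the $o(\gamma)$ resolution target). The rest is bookkeeping: Cauchy--Schwarz supplies the $\beta^2$ factor, Theorem~\ref{thm: polynomial approximation to erf(x)} together with Lemmas~\ref{lem: erf convergence} and~\ref{lem: Bounding error around jump} supply the degree, block-encoding composition supplies $\zeta_\omega$, and amplitude amplification and estimation supply the repetition counts.
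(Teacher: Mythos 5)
Your proposal is correct and follows essentially the same route as the paper's proof: a triangle-inequality split into polynomial-approximation and sampling errors, an eigenbasis expansion bounded via the $\erf$ degree estimate $d=O(\delta^{-1}\log(\beta^2/\epsilon))$ with the rescaling $\delta=\Theta(\gamma/\alpha)$, and the factors $O(\alpha\beta^2)$ from amplitude amplification of the $\zeta_\omega$-subnormalized composite block encoding and $O(1/\epsilon)$ from amplitude estimation. The only local deviation is your treatment of eigenvalues in the $\delta$-wide transition regions—you shift the window edges by $O(\delta)$ using the $\widetilde{O}(1)$-sparsity assumption, whereas the paper keeps an explicit $O(\delta)$ term in the error budget via Lemma~\ref{lem: Bounding error around jump} together with the same sparsity assumption—but both handle the obstacle you correctly identified as the crux, so this is a minor variant rather than a different proof.
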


\begin{proof}
Without loss of generality we may choose $w=0$ so as to consider the symmetric region about the origin, $W=[-\Delta, \Delta]$. 

We need to consider the 3 regions where the approximation occurs. The first region is $[-2, -\delta-\Delta/2]$ (\rom{1}), the second region is the set of points within $\delta-$ball of the jump $(-\Delta/2-\delta, -\Delta/2 + \delta)$  (\rom{2}) and finally the interval between jumps $[-\Delta/2 + \delta, 0]$ (\rom{3}).  Due to the symmetry of the problem, bounding the error on these intervals is sufficient to bound the error over the entire region.

We can get a bound to the regions (\rom{1}) and (\rom{3}) using the inequality from Lemma \ref{lem: erf convergence}, namely for any $\epsilon' > 0$
\begin{equation}
     \epsilon' \geq \max_{|x\pm\Delta/2|\geq \delta}|f_{\text{sgn},\delta,\epsilon'}(x) - \text{sgn}(x)|,
\end{equation}
so immediately we have that the polynomial approximation on regions (\rom{1}) and (\rom{3}) are bounded in error by $\epsilon'$. Finally, we can bound the error in region $II$ using the result in Lemma \ref{lem: Bounding error around jump} to $O(\delta)$.

We will assume that $D$ and $D'$ can be block encoded exactly, so we will only consider the error introduced by the polynomial approximation in this proof. Therefore, as an initial step, we wish to bound
    \begin{equation}
        \left\vert\bra{\psi_0}(D'1_{W}(H)D - D'\widetilde{1}_{W}(H)D)\ket{\psi_0}\right\vert,
    \end{equation}
    which is the difference in expected values of the exact and approximate indicator functions. 

    Evaluating the above we find, 
    \begin{align}                   
    &\left\vert\bra{\psi_0}D'\sum_{i>0}d_{0,i}\left(\sum_{\omega_j\in W}\ket{\psi_j}\bra{\psi_j}- \sum_{j}\tilde{1}_{W}(\omega_j)\ket{\psi_j}\bra{\psi_j}\right)\ket{\psi_i}\right\vert\\
        &=\left\vert\bra{\psi_0}D'\left(\sum_{\omega_j\in W}d_{0,j}\ket{\psi_j}- \sum_{j}d_{0,j}\tilde{1}_{W}(\omega_j)\ket{\psi_j}\right)\right\vert \notag\\
        &=  \left\vert\bra{\psi_0}\left(\sum_{\omega_j\in W}d_{0,j}D'\ket{\psi_j}- \sum_{j}d_{0,j}\tilde{1}_{W}(\omega_j)D'\ket{\psi_j}\right)\right\vert\notag\\
        &=  \left\vert\bra{\psi_0}\sum_{\omega_j\in W}d_{0,j}\sum_{k \neq j}d'_{j,k}\ket{\psi_k}- \sum_{k\neq j}\sum_{j}d_{0,j}d'_{j,k}\tilde{1}_{W}(\omega_j)\ket{\psi_k}\right\vert \notag\\
        &=  \left\vert\sum_{\omega_j\in W}d_{0,j}d'_{j,0}- \sum_{j}d_{0,j}d'_{j,0}\tilde{1}_{W}(\omega_j)\right\vert. 
    \end{align}
    Breaking this up into sums over the various regions
    \begin{align*}
         &=  \left\vert\sum_{\omega_j\in W}d_{0,j} d'_{j,0}- \left(\sum_{|\omega_j| \in I}d_{0,j} d'_{j,0}\tilde{1}_{W}(\omega_j)+\sum_{|\omega_j| \in II}d_{0,j}d'_{j,0}\tilde{1}_{W}(\omega_j) + \sum_{|\omega_j| \in III}d_{0,j}d'_{j,0}\tilde{1}_{W}(\omega_j)\right)\right\vert\\
    \end{align*}
    which from the bounds we have above, we can write as 
    \begin{align*}
         &\leq \epsilon' + \delta + \beta\beta'\epsilon'
    \end{align*}
    where $\epsilon'$ is the accuracy to which we approximate the indicator function with $\erf(kx)$. The sum in region $\rom{3}$ is scaled by $\beta\beta'$ since we assumed that the signal in regions $\rom{1}$ and $\rom{2}$ were $O(1)$, so the terms in the rest, region $\rom{3}$, must be at contribute at most a factor of $O(\beta\beta') =O(\beta^2)$.

    Our approximation to the indicator function is through a polynomial approximation to $\erf(kx)$ where $k>\frac{\sqrt{2}}{\delta}\log^{1/2}(2/\pi \epsilon'^2)$. Using such a polynomial with $d = O\left(\log(1/\epsilon')\delta^{-1}\right)$ guarantees,
    \begin{equation}
        \max_{|x-w_0|\in [0,\Delta/2]\cup[\Delta/2 + \delta, \infty]}|\erf(kx) - 1_{w}(x)| \leq \epsilon'.
    \end{equation}

    We have three quantities to keep track of
    \begin{equation}
        \begin{aligned}
            \mu_W &:= \bra{\psi_0}D'1_W(H_0)D\ket{\psi_0}\\
            \widetilde{\mu}_W &:= \bra{\psi_0}D'\widetilde{1}_W(H_0)D\ket{\psi_0}\\
            \mu'_W &:= \frac{1}{N}\sum_{i} \widetilde{\mu}_i
        \end{aligned}
    \end{equation}
    where $\mu'_W$ is the approximation to $\widetilde{\mu}_W$ after a finite number of samples, and $\widetilde{\mu}_W$ is the exact expectation value with the polynomially approximated indicator function and $\mu_W$ is the value we wish to find.

    Given the bounds from above, we would like to find conditions on $\epsilon'', \tilde{\epsilon}$ so that whenever $|\tilde{\mu}_{W} - \mu_{W}| \leq \epsilon''$ and $|\tilde{\mu}_{W} - \hat{\mu}_{W}| \leq \tilde{\epsilon}$ $\implies |\mu_{W} - \hat{\mu}_{W}| < \epsilon$, our final desired accuracy. Observe that
    \begin{align*}
        |\mu_{W} - \hat{\mu}_{W}| &= |\mu_{W}-\tilde{\mu}_{W} + \tilde{\mu}_{W} - \hat{\mu}_{W}|\\
        &\leq \left\vert\mu_{W}-\tilde{\mu}_{W}\right\vert + \left\vert\tilde{\mu}_{W} - \hat{\mu}_{W}\right\vert\\
        &\leq \epsilon'' + \tilde{\epsilon}
    \end{align*}
    which we desire to be smaller than $\epsilon$ which implies that $\epsilon'' + \tilde{\epsilon} < \epsilon$. We are promised that $\epsilon'' \leq \epsilon' + \delta + \beta^2\epsilon'$. We desire that 
    \begin{equation}
        \epsilon'' + \tilde{\epsilon} \leq \epsilon'(1 + \beta^2) + \delta + \tilde{\epsilon} \leq \epsilon
    \end{equation}
    or
    \begin{equation}
    \frac{1}{1+\beta^2}\left( \delta + \tilde{\epsilon}\right)+\epsilon' \leq \frac{1}{1+\beta^2}\epsilon
    \end{equation}
    since we have that $\beta^2 \gg \epsilon' + \delta + \widetilde{\epsilon}$, we have that 
    \begin{equation}
        \epsilon' + O\left(\frac{1}{\beta^2}\right) \leq \frac{\epsilon}{\beta^2},
    \end{equation}
    so that $d$ must be $O(\delta \log(\beta^2/\epsilon))$, so that $\epsilon' = O\left(\frac{\tilde{\epsilon}}{{\beta^2}}\right)$. Therefore, at this stage we have
    \begin{equation}
        \delta + 2 \tilde{\epsilon} \leq \epsilon,
    \end{equation}
     or
     \begin{equation}
        \tilde{\epsilon} \leq \left\vert\frac{\epsilon -\delta}{2}\right\vert
     \end{equation}
     which is the error we have from finite sampling of the approximate state. Now we use the fact that $\delta \leq \Delta/2$, since the ramping up region must be smaller than the rectangle we are hoping to approximate. In high-precision settings, we often care about a small region satisfying $\Delta = o(\gamma)$, the desired spectral resolution, but this is rescaled by the width of the interval, which in this case is related to the spectrum of $H_0 \subset [-\alpha, \alpha]$. Since QSP requires the domain be defined on $[-1,1]$ this rescales the required $\delta \rightarrow \delta/\alpha$. Resultantly, we need a $d = O\left(\alpha/\gamma\log(\beta\beta'/\tilde{\epsilon})\right)$ polynomial to guarantee sufficient accuracy. Now that $\delta \in O(1)$ has been rescaled so that $\epsilon + \delta/\alpha = O(\epsilon)$ we have the relation that $\tilde{\epsilon} = O(\epsilon)$ which follows a standard Monte-Carlo sampling bound of 
     \begin{equation}
         N_s = O\left(\frac{1}{\epsilon^2}\right).
     \end{equation}
     Or, with the use of amplitude estimation this can be reduced to
     \begin{equation}
         N_s = O\left(\frac{1}{\epsilon}\right),
     \end{equation}
    so that the total number of queries to the block encoding of $H$ is
    \begin{equation}
        O\left(\frac{\alpha \log(\beta^2/{\epsilon})}{\gamma \epsilon}\right).
    \end{equation}
    However, the subnormalization factor from the block encoding introduces a success probability that is $O(\frac{1}{\alpha^2\beta^4})$, we use amplitude amplification with the ancilla signal state flagging the ``good subspace", in turn, this gives the query complexity to $U_H$ as 
    \begin{equation}
        O\left(\frac{\alpha^2\beta^2 \log(\beta^2/\epsilon)}{\gamma \epsilon}\right).
    \end{equation}
\end{proof}

The above result guarantees that an accurate-enough polynomial approximation to the indicator function, and enough samples of the distribution generated by this satisfactory approximation, we can determine the desired expectation value to arbitrary accuracy. In addition, it guarantees that these peaks can be resolved efficiently. We then use this algorithm with a modified binary search inspired by the procedure given in Sec 3.9 of Ref. \cite{tongQuantumEigenstateFiltering2022}. The next theorem guarantees that if the response function is ``sparsely supported", which we define in the theorem, that this algorithm can be used to find an efficient approximation to this function when the function is concentrated over a small portion of the domain. Moreover, we can determine points in the domain corresponding to large responses at the Heisenberg limit.

A result we need for this algorithm to obtain the Heisenberg limited scaling is that inequalities amongst bins can be determined to some constant probability within some fixed tolerance in an efficient manner. This can be done through an application of Hoeffding's inequality and a version of the triangle inequality for probability distributions.

\begin{lemma}[Sample size for inequality tests]
    We assume access to measurements from the probability distribution 
    \begin{equation}
        P(i) = \frac{f_{i}(x)}{\int_{-\alpha}^\alpha f(x) dx}
    \end{equation}
    where $f(x) > 0 $ and $f_{i}(x) = f(x)$ for $x \in B_i$ and $i \in \{1,\ldots, |B|\}$. We say that 
    \begin{equation}
        P(i) \gtrapprox P(j)
    \end{equation}
    if for constants $\tau > 0$ and $0 < \epsilon < 1/2$
    \begin{equation}
        \Pr(|\overline{P}_i - \overline{P}_j| > \tau) \leq \epsilon.
    \end{equation}

    Sampling from this distribution $N_s = O(\log(\epsilon^{-1})\tau^2)$ times allows us to satisfy the desired inequality to some probability $1-\epsilon$. Furthemore, with the choice $\tau = O(1/|B|)$ and $\epsilon = 1/3$, we find that $N_s = \widetilde{O}(|B|^2)$.

    If after $N_s$ samples, $|\overline{P}_i - \overline{P}_j| > \tau$ and $\overline{P}_i > \overline{P}_j$, we say that $P(i) \gtrapprox P(j)$ or vice-versa. Otherwise we say that $P(i) \approx P(j)$.

    \label{lem: inequality test}
\end{lemma}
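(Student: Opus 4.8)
The plan is to treat each draw from $P$ as a categorical random variable and reduce the statement to a one-sided Hoeffding (Chernoff) bound combined with a triangle inequality for empirical masses. First note that $P$, as written, is the well-defined probability distribution on bins with $P(i) = \int_{B_i} f(x)\,dx \big/ \int_{-\alpha}^{\alpha} f(x)\,dx$, which is non-negative and sums to one because $f>0$ and the $B_i$ partition $[-\alpha,\alpha]$; nothing further needs to be proved there. Now fix the two bins $B_i, B_j$ of interest and, for the $k$-th of $N_s$ i.i.d. samples, let $X^{(i)}_k\in\{0,1\}$ be the indicator that the sample lands in $B_i$. Then $\overline{P}_i := N_s^{-1}\sum_{k=1}^{N_s} X^{(i)}_k$ is an unbiased estimator of $P(i)$, and likewise for $j$. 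No independence \emph{across} bins is needed — the $\overline{P}_i$ are in fact negatively correlated since they sum to one — because Hoeffding is applied to each marginal separately, and the union bound below does not require independence.

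Next I would apply Hoeffding to each marginal, $\Pr\!\left(|\overline{P}_i - P(i)| \ge \tau/2\right) \le 2\exp(-N_s\tau^2/2)$ and the same for $j$, and union-bound over the two bins: except with probability at most $4\exp(-N_s\tau^2/2)$, both empirical masses lie within $\tau/2$ of their true values. On that good event the reverse triangle inequality gives $\big|\,|\overline{P}_i-\overline{P}_j| - |P(i)-P(j)|\,\big| \le |\overline{P}_i-P(i)| + |\overline{P}_j-P(j)| < \tau$, so the empirical gap pins down the true gap up to additive error $\tau$. Imposing $4\exp(-N_s\tau^2/2)\le\epsilon$ yields $N_s \ge (2/\tau^2)\log(4/\epsilon) = O\!\left(\tau^{-2}\log(1/\epsilon)\right)$, the claimed sample count; substituting $\tau=\Theta(1/|B|)$ and $\epsilon=1/3$ gives $N_s = O(|B|^2)=\widetilde{O}(|B|^2)$.

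Finally I would verify that this additive guarantee makes the stated decision rule sound: if $|P(i)-P(j)| > 2\tau$ then on the good event $|\overline{P}_i-\overline{P}_j| > \tau$ with $\mathrm{sign}(\overline{P}_i-\overline{P}_j)=\mathrm{sign}(P(i)-P(j))$, so the procedure reports the correct ordering $P(i)\gtrapprox P(j)$ (or its reverse) with probability $\ge 1-\epsilon$, while if $|P(i)-P(j)|$ is below roughly $\tau$ the threshold is not met and we declare $P(i)\approx P(j)$, consistently. The argument is essentially routine; the only point requiring care is the constant bookkeeping — allotting $\tau/2$ of slack per bin so the gap inherits error $\tau$ — and checking that the ``otherwise'' branch cannot fire when one bin genuinely dominates, i.e.\ that the downstream separation assumption (bins differing by $\Omega(1/|B|)$) is compatible with the choice $\tau = O(1/|B|)$.
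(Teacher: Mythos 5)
Your proposal is correct and follows essentially the same route as the paper's proof: Hoeffding's inequality applied to each empirical bin frequency, a triangle-inequality/union-bound decomposition allotting $\tau/2$ of slack per bin, giving $N_s = O(\tau^{-2}\log(1/\epsilon))$ and hence $\widetilde{O}(|B|^2)$ for $\tau = \Theta(1/|B|)$. The only cosmetic difference is framing — the paper phrases the argument as rejecting the null hypothesis $P(i)=P(j)$, while you additionally spell out the two-sided soundness of the decision rule — and you implicitly fix the statement's typo $O(\log(\epsilon^{-1})\tau^2)$, which should read $O(\log(\epsilon^{-1})/\tau^2)$ as both derivations show.
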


\begin{proof}
    Let $P(i)$ be the true mean, and $\overline{P}_i = N_i/N_s$ be an approximation to $P(i)$ after $N_s$ many samples. Hoeffding's inequality provides a bound on the probability that $|P(i)$ and $\overline{P}_i$ differ by more than some fixed constant $\tau$.

    We take as the null hypothesis that $P(i) = P(j) = \mu$. We would like to bound
    \begin{equation}
        \text{Pr}(|\overline{P}_i - \overline{P}_j| \geq \tau) < \epsilon
    \end{equation}
    for some fixed constants $\tau$ and $0< \epsilon < 1/2$. 
    
    We can apply the identity,
    \begin{align*}
        &\text{Pr}(|\overline{P}_i - \overline{P}_j| \geq \tau)\\
        &=\text{Pr}(|\overline{P}_i -\mu - \overline{P}_j+\mu| \geq \tau)\\
        &\leq\text{Pr}(|\overline{P}_i -\mu|\geq \tau/2) +\text{Pr}( |\overline{P}_j-\mu| \geq \tau/2)
    \end{align*}
    
    Then we can apply Hoeffding's inequality to obtain that for any $i$, $\overline{P}_i$ satisfies
    \begin{equation}
        \text{Pr}\left(|P(i) - \overline{P}_i|\geq \tau \right) \leq e^{-N_s \tau^2} = \epsilon/2
    \end{equation}
    giving that 
    \begin{equation}
        N_s = O\left(\log(2\epsilon{-1})/\tau^2\right).
    \end{equation}
    
    Therefore, with the choice of $\tau= O(1/2|B|)$, we find
    \begin{equation}
        N_s = O\left(\log(4\epsilon^{-1})|B|^2\right).
    \end{equation}

    With $N_s$ chosen as above, we can write that 
    \begin{equation}
         \text{Pr}\left(|\overline{P}(i) - \overline{P}(j)| \geq \frac{1}{|B|}\right) \leq \epsilon
    \end{equation}
    whenever $N_s$ is $O(|B|^2 \log(1/\epsilon))$. This implies that with $N_s$ many samples a deviation of $\overline{P}_i$ from $\overline{P}_j$ of more than $\frac{1}{|B|}$ we can reject the null-hypothesis with confidence $1-\epsilon$. In this case we say that $P(i) \gtrapprox P(j)$, otherwise we say that $P(i) \approx P(j)$.
\end{proof}

\begin{figure}[h]
    \centering
    \includegraphics[]{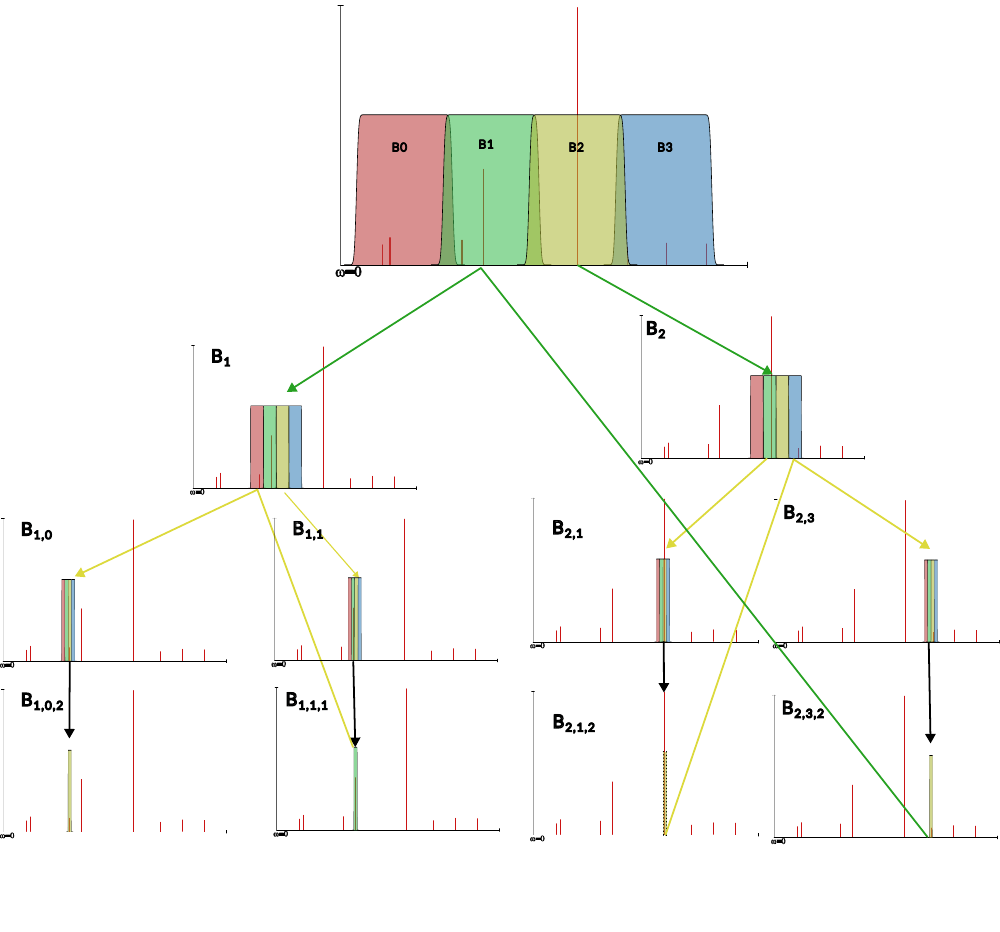}
    \caption{A diagram describing the flow of our algorithm for a few iterations in the case where there are prominent bins at each level. We sample from the distribution formed by the magnitude of response in $\{B0, B1, B2, B3\}.$ Upon sampling, we are likely to find $B2$ to be the largest, noting some response in the other bins. We then subdivide $B2$ and see that $B_{2,1}$ was most prominent. Subdivide another time, and the bin with largest response is our estimate to a excitation energy. Going back to the last time we saw a peak, we carry on to $B_{2,3}$. We then recurse to the original distribution, performing a similar task for the next most frequently sampled bin, $B1$.}
    \label{fig:schematic}
\end{figure}

\begin{theorem}[Convergence of modified binary search algorithm]
    Let $f$ be some unknown bounded function, $||f|| = O(\alpha\beta^2)$ with the property of being ``sparsely supported" i.e. that there exists some $1/2>\epsilon'>0$ and set $A \subset \text{Dom}(f) \subset [-\alpha,\alpha]$
    \begin{equation}
        f(x) = (1-\epsilon') \sum_{x \in A} f(x) + \epsilon' \sum_{x' \notin A} f(x')
    \end{equation}
    where $A$ satisfies $|A| = \widetilde{O}(1)$, suppressing polylogarithmic factors possibly depending on the dimension of the underlying approximating Hilbert space and the width of the interval $\alpha$. 
    
    By $|A| = \widetilde{O}(1)$ we mean,
    \begin{equation}
        \int_{-\alpha}^{\alpha} 1_{A}(x) dx = \widetilde{O}(1),
    \end{equation}
    is independent of $\alpha$, where 
    \begin{equation}
        1_{A}(x) =
        \begin{cases}
            1 &x\in A\\
            0 &x \notin A
        \end{cases}
    \end{equation}
    is the characteristic function for $A$.

    If we let 
    \begin{equation}
        f(\omega) = \begin{cases}
            d'_{j,0}d_{0,j} & \omega = \lambda_j - \lambda_0\\
            0 & \text{else}
        \end{cases}
    \end{equation}

    Then there exists a quantum algorithm that can approximate $\epsilon$-approximate $f(x)$, in the sense that
    \begin{equation}
        ||\widetilde{f}(x) - f(x)||_\infty \leq \epsilon
    \end{equation}
    for any $\epsilon > \epsilon'$.
    
    Moreover, this algorithm can determine some $x^* \in A$ to resolution $\gamma$ using ${O}\left(\frac{\alpha^2\beta^2\log(\beta^2/\gamma)}{\gamma}\right)$ queries to the quantum algorithm implementing $f$ in the case of linear response and ${O}\left(\frac{n\alpha^{2n}\beta^{n+1}\log(\beta^2/\gamma)}{\gamma^n}\right)$ in the case of arbitrary $n$th order response.
    \label{thm: search algorithm conv}
\end{theorem}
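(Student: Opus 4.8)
The plan is to obtain the theorem by combining three results already established — the per-window cost estimate of Corollary~\ref{cor:cor bounding response fn error}, the inequality-testing sample bound of Lemma~\ref{lem: inequality test}, and the sparse-support hypothesis — applied to the recursive bin-subdivision (``linear combination of Hadamard tests'') procedure already analysed in Theorem~\ref{thm: bin finding 1d} and Section~\ref{subsec:higherOrd}; the new content is to show that the recursion tree stays $\widetilde O(1)$-wide, to sum the query cost over its $O(\log(\alpha/\gamma))$ levels, and to extract the $\ell_\infty$ reconstruction. First I would use the definition of sparse support to argue that at every level of the recursion only $\widetilde O(1)$ sub-bins need be explored: since all but an $\epsilon'$-fraction of the total response is carried by the set $A$ with $\int 1_A = \widetilde O(1)$, after any partition of $[-\alpha,\alpha]$ into equal bins at most $\widetilde O(1)$ of them can carry response above the threshold used to declare a bin ``heavy'', and by Lemma~\ref{lem: inequality test} — run with tolerance $\tau=\Theta(1/|B|)$ and failure probability $1/3$, then majority-voted over $O(\log(1/\epsilon_{\mathrm{fail}}))$ repetitions — these heavy bins are correctly identified from $\widetilde O(|B|^2)=\widetilde O(1)$ samples of the Hadamard-test circuit, using only that $P(i)=\tfrac{1}{2|B|}(1+\mathrm{Re}(d^{x,x'}_{b_i}/\zeta_\omega))$ is monotone in the bin response (imaginary parts are obtained identically with an extra phase gate). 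Taking $|B|=2$ and recursing $k=O(\log(\alpha/\gamma))$ times shrinks a width-$\Theta(\alpha)$ bin to width $\Theta(\gamma)$, so the set of bins ever touched forms a tree of depth $k$ and width $\widetilde O(1)$, hence $\widetilde O(k)=\widetilde O(1)$ nodes total.

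Next I would sum the cost over levels. A sample at recursion depth $\ell$ requires the approximate indicator function on a bin of rescaled width $\Theta(2^{-\ell})$, so by Lemma~29 of \cite{gilyenQuantumSingularValue2019a} a polynomial of degree $O(2^\ell\log(\beta^2/\gamma))$, i.e.\ $O(2^\ell\log(\beta^2/\gamma))$ queries to $U_H$. The geometric series $\sum_{\ell=1}^{k}O(2^\ell)$ is dominated by its last term and equals $O(2^k)=O(\alpha/\gamma)$; multiplying by $\widetilde O(1)$ samples per node and $\widetilde O(1)$ nodes gives $O\!\big(\tfrac{\alpha}{\gamma}\log(\beta^2/\gamma)\big)$ queries to $U_H$ \emph{before} accounting for the block-encoding success probability. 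The product block encoding $\mathcal U=U_{D'}U_{\widetilde 1}U_D$ has subnormalization $\zeta_\omega=\Theta(\alpha\beta^2)$, hence success probability $\Theta((\alpha\beta^2)^{-2})$, so robust amplitude amplification multiplies by $\widetilde O(\alpha\beta^2)$, yielding $O\!\big(\tfrac{\alpha^2\beta^2}{\gamma}\log(\beta^2/\gamma)\big)$ queries to $U_H$, which is the claimed linear-response bound (and a sharpening of Theorem~\ref{thm: bin finding 1d}). For $n$th-order response the search runs over an $n$-dimensional grid, the filter is a product of $n$ indicator functions, and the product block encoding involves $n+1$ dipole encodings; the identical geometric-series argument of Section~\ref{subsec:higherOrd} then gives $\widetilde O((\alpha/\gamma)^n)$ queries to $U_H$ for the search and an $O(\alpha^n\beta^{n+1})$ amplification overhead, for a total of $O\!\big(\tfrac{n\,\alpha^{2n}\beta^{n+1}}{\gamma^n}\log(\beta^2/\gamma)\big)$.

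Finally, for the $\ell_\infty$ claim: once the $\widetilde O(1)$ heavy bins are localized to width $\gamma$, I would estimate each height to additive accuracy $(\epsilon-\epsilon')/|A|$ via Corollary~\ref{cor:cor bounding response fn error} (which only inserts the usual $1/\epsilon$-type factor) and set $\widetilde f$ to the sum of these localized peaks; for $x$ inside a resolved bin the error is the height-estimation error plus the width-$\gamma$ slack, while for $x$ missed by every resolved bin the error is the total unresolved response, which is at most $\epsilon'||f||$ by the sparse-support hypothesis — so $||\widetilde f - f||_\infty \le \epsilon$ for any $\epsilon>\epsilon'$. The main obstacle is the cost accounting of the preceding paragraph: one must simultaneously track the geometric blow-up of the QSP polynomial degree as bins shrink, the (possibly $\alpha^n\beta^{n+1}$-sized) amplitude-amplification overhead coming from the block-encoding success probabilities, and the sparse-support bound that keeps the tree width $\widetilde O(1)$, and then verify that the sum over recursion levels collapses to its last term so that the final scaling is Heisenberg-limited ($\propto 1/\gamma$, not $1/\gamma^2$) — this last point being precisely the dividend of needing only inequalities, not values, during the search phase.
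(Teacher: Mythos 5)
Your proposal is correct and follows essentially the same route as the paper's own proof: the same recursive bin-subdivision via linear combinations of Hadamard tests, the same use of Lemma~\ref{lem: inequality test} to sort bins from $\widetilde O(|B|^2)$ samples, the same geometric series over QSP polynomial degrees collapsing to its last term to give $\widetilde O(\alpha/\gamma)$ queries, the same $\widetilde O(\alpha\beta^2)$ amplitude-amplification overhead from the subnormalization $\zeta_\omega$, and the same sparse-support argument to bound the number of heavy bins and the final $\ell_\infty$ reconstruction error (with the $n$th-order case obtained by the same substitutions $\|f\|=O(\alpha^n\beta^{n+1})$, $(\alpha/\gamma)\to(\alpha/\gamma)^n$). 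The only cosmetic difference is that you justify the $\widetilde O(1)$ tree width directly from the sparse-support threshold rather than through the paper's explicit case-1/case-2 analysis of ties and overlapping $\delta$-regions, which does not change the argument or the stated bounds.
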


\begin{proof}

    For intuition, this algorithm can be viewed as a modified statistical binary search. We will use the notation
    \begin{equation}
        [|B|] = \{0, \ldots, |B|-1\}.
    \end{equation}
    
    We will assume the quantum algorithm implementing $f$ has been performed through block encoding and we assume access to those block encodings. In particular, to the block encodings of the dipole operators and block encodings of the $\widetilde{\epsilon}$ approximate indicator function applied to the (possibly shifted) unperturbed Hamiltonian $H_0$. Through the Hadamard test, we additionally assume access to
    \begin{equation}
        f(\omega) = \bra{\psi_0}D' I D \ket{\psi_0} = \int_{-\alpha}^{\alpha}\sum_{j\neq 0}d'_{j,0}d_{0,j}\delta(\omega-\omega_{j0})d\omega,
    \end{equation}
    and similarly, the restriction of the function to an interval $w$ as:
    \begin{equation}
        f_{w}(\omega) =  \int_{-\alpha}^{\alpha}\sum_{j\neq 0; \omega_{j0} \in w}d'_{j,0}d_{0,j}\delta(\omega-\omega_{j0})d\omega.
    \end{equation}
    
    Our goal is to determine the rough distribution of the coarse-grained function by measuring the frequency at which we observe bitstring $i$ in the ancilla register,
    \begin{equation}
        P(i) = \frac{1}{2|B|}\left(1+\frac{1}{\xi}{f_{i}}\right)
        \label{eq: prob dist of bins}
    \end{equation}
    with some suitably chosen $\xi = O(||f||)$ that is related to the subnormalization factor we found in the main text, so that $P(i)\leq 1$.

    We wish to determine inequalities amongst the $P(i)$ as this corresponds directly to the average value of the function over the interval $b_i$ and gives an ordering for how to proceed in the algorithm. The first step is to prepare a uniform superposition of Hadamard tests over different windows. Furthermore, we desire the property that the windows overlap in the $\delta$-region, where the ramping up and ramping down occur, so as to not have any gaps in our approximation.

    Since in the proof of lemma~\ref{lem: inequality test} the choices of $i$ and $j$ were arbitrary and the choice of $\tau = O(1/|B|)$ are sufficient to perform this inequality test for any $i,j \in [|B|]$. Therefore after sampling $N_s = \widetilde{O}(|B|^2)$ times we can determine inequalities amongst the bins with high confidence.
    
    There are two cases that can occur when determining these inequalities. The first (and simplest) case is that there is some $i' \in [|B|]$ such that for constant $\mu = 1/|B|,\tau$ $\text{Pr}(P(i') \geq (1+\tau)P(i))\leq \exp(-\tau^2/3|B|)$ for every other $i \in [|B|]$ with constant probability. For the other bins, we mark any that have a prominent response for later further resolution. The second possibility is that there is a set $S \subset [|B|]$ such that $\forall i,i' \in S$ $P(i)\sim P(i')$ and that $P(i) \gtrapprox P(j)$ whenever $i \in S$ and $j\notin S$ with constant probability. In fact, case two is a generalization of case one for the case when $|S| = 1$. In general, we will create subsets $S_i \subset [|B|]$ where for every $P_i,P_j \in S_i,$ $P_i \sim P_j$ in the sense that the inequality cannot be determined with constant probability. Additionally, we can order the $S_i$'s so that $P_i \in S_i$ and $P_j \in S_{i+1}$ $P_i < P_j$. We additionally have the property that $\cup_{i}S_{i} = [|B|]$.

\RestyleAlgo{ruled}
    \begin{algorithm}[hbt!]
        \caption{Algorithm for approximate sorting of bin heights}
        \KwData{windows $\{b_i\}_{i=0}^{|B|-1}$}
        \KwResult{$R$, array of (probabilistic) relations amongst bins}
        $\ket{P}\gets \sum_{i=0}^{|B|-1}\frac{1}{2|B|}\left(I + U_{b_i}\right)\ket{i}$\;
        $N_s \gets O(\log(|B|/\tau))$\;
        $S\gets$ zeros$(N_s)$\;
        \While{$i \leq N_s$}{
        $S \gets i\in[|B|]$ with $Pr(i) \sim \frac{1}{2|B|}\left(1+\frac{1}{||f||}f_{i}\right)$ \;
        $i += 1$\;
        }
        $P \gets$ zeros$(|B|)$\;
        \For{$j \in [N_s]$}{
            $P[S[j]] += \frac{1}{N_s}$\;
        }
        $R \gets $ zeros$(|B|,|B|)$\;
        \For{$i \in [|B|]$}{
        \While{$j < i$}{
        \eIf{$P[i] \gtrapprox (1+\tau)P[j]$}
        {
            $R[i,j] \gets 1$\;
            $R[j,i] \gets -1$;
        }{
            \If{$P[i] \approx P[j]$}
            {
                $R[i,j] \gets 0$\;
                $R[j,i] \gets 0$\;
            }
        }
        $j += 1$\;
        }
        $i += 1$\;
        }
        \label{alg: inequality testing}
    \end{algorithm}

    We detail the algorithm for case one, then show how to apply a similar procedure for the other cases. In this case, we assume that there is some prominent bin, $b_i'$ such that $P(i') \geq P(i)$ for all other $i$. In this case, we subdivide the interval $b_i'$ into $|B|$ bins, $\{ b_{i',0},b_{i',1},\ldots, b_{i',|B|-1}\}$ and form the uniform superposition over these bins of size $O(1/|B|^2)$ as per Alg. \ref{alg: inequality testing}. Then, we sample from 
    \begin{equation}
        P(i',j) = \sum_{j}\frac{1}{2|M|}\left(1+\frac{1}{\xi}f_{w_{i',j}}\right)
    \end{equation}
    for some suitable normalization $\xi$ so that $||f||/\xi \leq 1$. 
    
    At this stage it is possible that these bins satisfy the conditions of cases 1 or 2. If case 1, then there exists some prominent bin $j$ such that $P(i',j) > P(i',k)$ for every $(k\neq j) \in \{0,\ldots,|B|-1\}$, then repeat \ref{alg: inequality testing} as above (including the marking of less-prominent bins) until you have that $|B|^{-k} \leq \epsilon$ or until case 2 occurs. Then, remove this bin from future search iterations.

    \begin{figure}[h]
        \centering
        \includegraphics[width=.6\paperwidth]{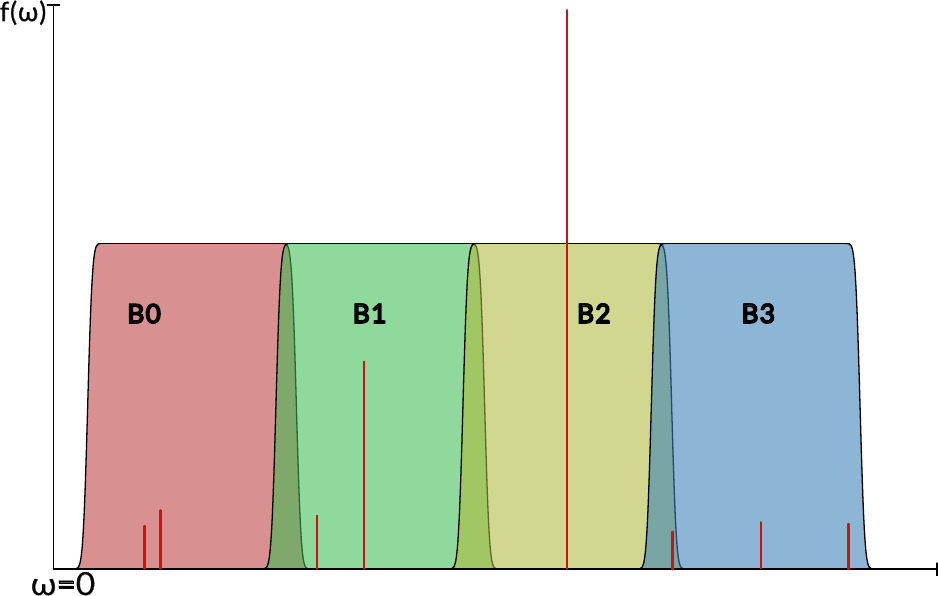}
        \caption{Figure showing the partially-overlapping bins covering the entire spectrum, as would occur in the first iteration of the algorithm.}
        \label{fig:bins1}
    \end{figure}

    If case 2 occurs, that is there exists some $S \subset [|B|]$ such that $\forall i,i' \in S P(i) \sim P(i')$ with some constant probability, then there are two possibilities. The first possibility occurs when there are at least two bins in $S$ that are neighboring, i.e. $i, i+1 \in S$ for some $i \in [|B|]$. In this case, it is possible that in the $\delta$-region where $b_i$ and $b_{i+1}$ overlap contains a large value of $f$. As an example, this occurs in the overlapping region between $B2$ and $B3$ in Fig \ref{fig:bins1}. To resolve this possibility, we combine $b_i$ and $b_{i+1}$ and form an equal superposition of $O(2|B|)$ bins spanning $b_{i} \cup b_{i+1}$. If there is a prominent bin in this case, apply case 1. 
    
    The other case occurs when elements of $S$ are disjoint. In this case we mark all the bins and choose a random $i$ and subdivide $b_i$ into $|B|$ bins. If there is a prominent value attained in one of the bins $b_{i,j}$ subdividing $b_i$ we take the bins $b_{i,j}$ and subdivide it further, otherwise re-apply. We can carry on this way until a bin of size $|B|^{-k} = O(\epsilon)$ is found, then remove this bin from future search iterations. 

    Since the function $f$ which describes our response is ``sparsely supported", we can determine the values in $A$ by repeating the above procedure until all the points in $A$ have been $\epsilon-$approximated. This procedure is efficient since by definition, points $x \notin A$ have $\epsilon-$small value, and so will only be observed with probability $O(\epsilon)$. Therefore, with high probability, our algorithm will find bins corresponding to points in $A$. Furthermore, since $|A| \in \widetilde{O}(1)$, this procedure need only be repeated $O(|A|)$ times. 

    We now prove that the number of queries to the block encoding of $H_0$ to determine some $x^* \in A$ saturates the Heisenberg limit. Prepare a uniform superposition of Hadamard tests as above, and choose the bin corresponding to the most frequently observed index (case 1), or one of the bins if there are multiple bins with similar values (case 2). This inequalities can be determined quickly, since they satisfy a Chernoff bound so the number of samples is logarithmic in the probability  We then subdivide this bin, once again choosing case 1 or case 2 respectively. We repeat this procedure until the bin size is of $O(\epsilon)$. 

    We analyze the complexity of performing this algorithm on a quantum computer. We assume that we have done $k-$iterations for each bin so that $|B|^{-k} < \epsilon$, our desired bin resolution. Since we started with the interval $[-\alpha, \alpha]$, starting from bins of size $O(1)$, we begin from bins of size $O(\alpha)$ and finish bins of size $O(1/\epsilon)$, requiring us to instead have $k = O(\log(\alpha/\epsilon))$. At each iteration $j$, bins of size $O(|B|^{-j})$ are formed, requiring that $\delta_j =  o(|B|^{-j})$. In turn, the approximating polynomial degree $d_j = \widetilde{O}(|B|^j)$, bounds the number queries to $U_H$ to implement the polynomial to the desired precision. Furthermore, each one of these $k$ runs must be sampled $O(|B|^2)$, for an additional $O(k|B|^2)$ queries. However, since $k = O(\log(\alpha/\epsilon))$ and $|B| = O(1)$, this contributes a non-dominant factor of $\widetilde{O}(1)$. Therefore the total query complexity to $U_H$ is
    \begin{equation}
        O(|B|^k) = \widetilde{O}(\alpha/\epsilon).
    \end{equation}

    However, since the algorithm relies on block encodings, the subnormalization factor scales the number of queries by the inverse success probability which is $O(||f||^2) = O(\alpha^2\beta^4)$, or through amplitude amplification the square root of the success probability $\alpha\beta^2$. In this case, we have
    \begin{equation}
        \widetilde{O}(\alpha^2\beta^2/\epsilon)
    \end{equation}
    queries to determine a point of $x \in A$ $\epsilon-$away in absolute value from the exact point.

    We repeat this process for each bin that we observer to have a probability greater than $\epsilon'$ an additional $\widetilde{O}(1)$ times to obtain an approximation $A'$ to the points in $A$. Then for each bin $w_i \in A'$, we measure $\bra{\psi_0}D' 1_{b_i}(H_0) D\ket{\psi_0}$ $O(1/\epsilon^2)$ times if using direct sampling, or an additional $O(1/\epsilon)$ times if using amplitude estimation, to obtain an $\epsilon-$approximation to $f_{i}$. 

    Then, since $f$ has the property of being sparsely supported, we know that 
    \begin{equation}
        f(x) = (1-\epsilon')\sum_{x \in A} f(x) + \epsilon' \sum_{x' \notin A} f(x).
    \end{equation}
    Then, defining our approximation $\widetilde{f}$
    \begin{equation}
        \widetilde{f}(x) = \sum_{w \in A'} \widetilde{f_w}(x),
    \end{equation}
    we can see that 
    \begin{equation}
        ||f(x) - \widetilde{f}(x)|| \leq (1-\epsilon')\epsilon|A| + \epsilon'
    \end{equation}
    therefore, if we choose $\epsilon = O(\epsilon'/|A|) = \widetilde{O}(\epsilon')$ then the total error is bounded by concentration parameter $\epsilon'$ as desired. The bounds for arbitrary $n$th order response can be found by replacing $||f|| = O\left(\alpha^{n}\beta^{n+1}\right)$ and $\epsilon \rightarrow \alpha^n/\epsilon^n$ to get the final complexity of $O\left(\alpha^{2n}\beta^{n+1}/\epsilon^n\right)$ queries to $U_H$.
\end{proof}

\subsubsection{Comparison of QPE vs Filtering method for computing response functions}
In the initial stages of this work, we first considered using phase estimation following application of the dipole operator and a \textsc{swap} test to accomplish a similar goal as this work.
To facilitate a fair comparison of the complexities of these two approaches, we will assume exact access to the ground state. As discussed above, this can be realized, for example, by starting from some high overlap estimate to the ground state, then using eigenstate filtering prepare an $O(1)$ overlap approximation to the ground state and compute a high-precision estimate of the ground energy $\lambda_0$. This will be the starting point of the algorithm.

The method of Hamiltonian simulation used depends on the input model. The asymptotically optimal method for Hamiltonian simulation is based on quantum signal processing, which relies on access to a block encoding of the problem Hamiltonian. The other most common algorithm is based on Trotterization, which directly implements the Hamiltonian as product of unitaries resulting from splitting terms in the Hamiltonian and simulating them separately. For the most straightforward comparison, we compare the asymptotic costs of quantum phase estimation with qubitization. The qubitization approach relies on an exponentially convergent approximation of the complex exponential known as the Jacobi-Anger expansion\cite{lowHamiltonianSimulationQubitization2019}. 

In that work, they show that using quantum signal processing in conjunction with the Jacobi-Anger expansion that the number of queries to the block encoding of $H$ to simulate a $d-$sparse Hamiltonian to time $t$ requires
\begin{equation}
    O\left(td \left\vert\left\vert H\right\vert\right\vert + \log(1/\gamma)/\log\log(1/\gamma)\right)
\end{equation}
queries to the block encoding of $H$, assuming $\left\vert\left\vert H \right\vert\right\vert = O(\alpha)$. To achieve $k$ bits of precision in the eigenvalues using phase estimation, the number of queries to the Hamiltonian simulation routine is $2^k$. Since in the second round of phase estimation the input quantum state generically has support over the entire spectrum of $H$. We desire that our approximation $\tilde{\lambda_i}$ is $\gamma$ accurate, i.e.
\begin{equation}
    \left\vert\lambda_i - \tilde{\lambda_i}\right\vert \leq \gamma,
\end{equation}
for each $\lambda_i$. However the eigenvalues have been rescaled by the subnormalization factor $\alpha$ from the block encoding, therefore we instead have
\begin{equation}
    \left\vert\lambda_i - \tilde{\lambda_i}\right\vert \leq \alpha \gamma'. 
\end{equation}
We desire that $\alpha \gamma' < \gamma$ so therefore $\gamma' < \gamma/\alpha$. Therefore we need $k$ digits of precision so that $2^{-k} < \gamma/\alpha \implies 2^k > \alpha/\gamma$. Therefore, we need $O(2^{k+1})$ queries to the Hamiltonian simulation routine. Taking the sparsity $d = O(1)$, the total cost, in terms of queries to the block encoding of $H$ is,
\begin{align*}
    &\sum_{i = 0}^{k-1}\left(2^i  \left\vert\left\vert H\right\vert\right\vert + \log(2^k)/\log\log(2^k)\right)\\
    &=\sum_{i = 0}^{k-1}\left(2^i  \left\vert\left\vert H \right\vert\right\vert + k/\log(k)\right)\\
    &=\left(2^{k}\left\vert\left\vert H \right\vert\right\vert + k^2/\log(k)\right)\\
    &= O\left(\frac{\alpha^2}{\gamma}+\log(\alpha/\gamma)^2/\log(k)\right)
\end{align*}
Then, we use a \textsc{swap} test on an additional register of $k$ ancilla qubits which prepares the uniform superposition of energies in the desired window. If we then perform a similar procedure to that given in the main text for searching through the spectrum, then we find that it takes an additional $O(\alpha/\gamma \epsilon)$ queries to find to determine the excitation energy to resolution $o(\gamma)$ and accuracy $\epsilon$.  If we use this algorithm in conjunction with amplitude estimation, we find that the total expected scaling becomes 
\begin{equation}
    \widetilde{O}(\alpha^2/\gamma^2\epsilon)
\end{equation}
So the filtering based algorithm improves over the phase estimation based algorithm by a factor of $1/\gamma$ in the number of queries to $U_H$.

%\section{\label{app: sec2}Comparison with other works}
%\textcolor{red}{Need to work through this section and this analysis more carefully}
%\input{AppdxComp}

% The \nocite command causes all entries in a bibliography to be printed out
% whether or not they are actually referenced in the text. This is appropriate
% for the sample file to show the different styles of references, but authors
% most likely will not want to use it.
\nocite{*}

\end{document}